
\documentclass[reqno, 12pt]{amsart}

\usepackage{amsmath, amsthm, amssymb, mathtools, microtype}
\usepackage[dvipsnames]{xcolor}
\usepackage[a4paper,margin=2cm]{geometry} 
\usepackage[utf8]{inputenc}
\usepackage[all]{xy}
\usepackage[hidelinks]{hyperref}
\usepackage[mathscr]{euscript}
\numberwithin{equation}{section}

\theoremstyle{plain}
\newtheorem{theorem}{Theorem}
\newtheorem{proposition}[theorem]{Proposition}
\newtheorem{lemma}[theorem]{Lemma}
\newtheorem{corollary}[theorem]{Corollary}
\theoremstyle{definition}

\newtheorem{definition}[theorem]{Definition}

\theoremstyle{remark}
\newtheorem{remark}[theorem]{Remark}

\DeclareMathOperator*{\res}{Res}
\DeclareMathOperator{\diag}{diag}

\def\Z{\mathbb{Z}}	
\def\C{\mathbb{C}}	
\def\R{\mathbb{R}}	
\def\cA{\mathcal{A}}

\def\cD{\mathcal{D}}
\def\cE{\mathcal{E}}
\def\cL{\mathcal{L}}
\def\cN{\mathcal{N}}
\def\cS{\mathcal{S}}
\def\cT{\mathcal{T}}
\def\cP{\mathcal{P}}
\def\cQ{\mathcal{Q}}
\def\cV{\mathcal{V}}
\def\cW{\mathcal{W}}
\def\cZ{\mathcal{Z}}
\def\cf{\mathfrak{f}}
\def\ch{\mathfrak{h}}
\def\cU{\mathcal{U}}
\def\i{\mathsf{i}}
\def\I{\mathsf{i}}
\def\hS{\hat{S}}
\DeclareMathOperator{\End}{End}

\def\unity{1\!\!1}
\def\KPt{\mathsf{t}}
\def\KPp{\mathsf{p}}
\def\Log{\mathop{\mathrm{Log}}}

\begin{document}

\title[Hypermaps and constrained KP]{Enumeration of hypermaps and Hirota equations for extended rationally constrained KP}

\author{G. Carlet}
\address{G.~C.: Institut de Mathématiques de Bourgogne, UMR 5584 CNRS, Université Bourgogne Franche-Comté, F-2100 Dijon, France}
\email{Guido.Carlet@u-bourgogne.fr}

\author{J. van de Leur}
\address{J.~v.d.L: Mathematical Institute, Utrecht University,  P.O. Box 80010, 3508 TA Utrecht, The Netherlands}
\email{J.W.vandeLeur@uu.nl}

\author{H. Posthuma}
\address{H.~P.: Korteweg-de Vries Institute for Mathematics, University of Amsterdam, Postbus 94248, 1090 GE Amsterdam, The Netherlands}
\email{H.B.Posthuma@uva.nl}	

\author{S. Shadrin}
\address{S.~S.: Korteweg-de Vries Institute for Mathematics, University of Amsterdam, Postbus 94248, 1090 GE Amsterdam, The Netherlands}
\email{S.Shadrin@uva.nl}	

\begin{abstract} We consider the Hurwitz Dubrovin--Frobenius manifold structure on the space of meromorphic functions on the Riemann sphere with exactly two poles, one simple and one of  arbitrary order. We prove that the all genera partition function (also known as the total descendant potential) associated with this Dubrovin--Frobenius manifold is a tau function of a rational reduction of the Kadomtsev--Petviashvili hierarchy. This statement was conjectured by Liu, Zhang, and Zhou. We also provide a partial enumerative meaning for this partition function associating one particular set of times with enumeration of rooted hypermaps.
\end{abstract}

\maketitle

\tableofcontents

\section*{Introduction}

This paper is a continuation to~\cite{carletHigherGeneraCatalan2021}. We consider the Hurwitz  space $H_{0;1,N-1}$ of rational functions with two poles, one of order $1$ and one of order $N-1$, for $N\geq 2$. As shown by Dubrovin in~\cite{dub96}, this space carries the structure of semi-simple Dubrovin--Frobenius manifold of rank $N$. This family of Frobenius manifolds has also been recently studied from a different point of view in~\cite{Arsie2021DubrovinFrobenius}.

There are two equivalent constructions of the so-called all genera partition function (also known as total descendant potential) associated with a semi-simple Dubrovin--Frobenius manifold, given respectively in~\cite{Giv-Semisimple,Giv2001} and in~\cite{DZ01} (their equivalence is proved in~\cite{DZ01}). 
Until now the combinatorial or enumerative meaning of the partition functions associated to $H_{0;1,N-1}$ was not known, except for the case $N=2$ covered in~\cite{carletHigherGeneraCatalan2021}. 
We prove in this paper that a restriction of this partition function to one set of times gives a generating function for the enumeration of rooted hypermaps on compact two-dimensional surfaces. 

Dubrovin and Zhang proved in \emph{op.~cit.} that the partition function of any Dubrovin--Frobenius manifold is a tau function of an infinite-dimensional integrable system, the so-called Dubrovin--Zhang hierarchy (see also~\cite{BDS2012}). Liu, Zhang, and Zhou conjectured in~\cite{liuCentralInvariantsConstrained2015} that the Dubrovin--Zhang integrable systems associated with the family of Dubrovin--Frobenius manifolds on $H_{0;1,N-1}$, $N\geq 2$, are certain extensions of the rationally $(N-1)$-constrained Kadomtsev--Petviashvili (KP) hierarchies. In this paper we prove a version of their conjecture. Namely, we do prove that the partition function associated to $H_{0;1,N-1}$, $N\geq 2$, is a tau function of the extended rationally constrained KP hierarchies, using an approach to construct the integrable systems due to Givental--Milanov--Tseng~\cite{Giv-An,GivMil,MilTse}. Note that the equivalence of these two approaches to the construction of integrable systems is not known. The first few steps towards their identification are outlined in~\cite{CLPS-Lax}.

Extending the method of Givental--Milanov--Tseng, we first construct Hirota bilinear equations for the partition function of $H_{0;1,N-1}$, $N\geq 2$, and then we derive from the Hirota bilinear equations the Lax representation of the corresponding integrable system, generalizing the methods developed in~\cite{Mil2007, CvdL2013}. Finally, in the Lax form we recognize the rationally $(N-1)$-constrained KP hierarchy, as predicted by Liu, Zhang, and Zhou in~\cite{liuCentralInvariantsConstrained2015}.

\subsection*{Organization of the paper} To keep this paper reasonably short and technical we expect the reader to be familiar with the theory of Dubrovin--Frobenius manifolds as well as with the Givental theory. We will focus on the particular computations necessary to generalize the results of~\cite{carletHigherGeneraCatalan2021}, to which we refer for a survey of the relevant definitions in the same notation as we use here. 

In Section~\ref{sec:FrobeniusManifold} we recall the definition and the basic structures, like flat metric, flat coordinates, the unit and the Euler vector fields, etc. of the Dubrovin--Frobenius manifold $H_{0;1,N-1}$. In the further computations of the various structures of this Dubrovin--Frobenius manifold we restrict to a special point that we use as the base point for the expansion of the total descendant potential. In particular, we recall there the choice of calibration proposed by Liu, Zhang, and Zhou. 

In Section~\ref{sec:PartitionFunction} we recall the Givental formulas for the total descendant and ancestor potentials (expanded at the special point) and prove that the restriction of the total descendant potential to one set of times gives a generating function for the enumeration of rooted hypermaps, and thus a hypergeometric KP tau function by itself (see Theorem~\ref{thm:Hypermaps} and Corollary~\ref{cor:KP}).

In Section~\ref{sec:PeriodVectors} we discuss the geometry behind the solutions of Dubrovin's Fuchsian system associated to the Dubrovin--Frobenius manifold $H_{0;1,N-1}$ and choose a suitable (sub)orbit of its monodromy group for further analysis. In particular, we discuss the period vectors associated to this (sub)orbit. Using these period vectors, we construct the vertex operators and discuss their asymptotics and conjugation properties (needed in the next section for the fomulation and analysis of the Hirota quadratic equations).

In Section~\ref{sec:HirotaAncestor} we formulate and prove the Hirota quadratic equations for the total ancestor  and the total descendant potentials, see Theorem~\ref{thm:HirotaForAncestors} and Theorem~\ref{thm:DescHirota}, respectively. Note that we consider the expansions of the both potentials at the special point of $H_{0;1,N-1}$. At the end of the section we finally present the Hirota quadratic equations for the descendant potential in a very explicit form, see Corollary~\ref{cor:explicitform}.

In Section~\ref{sec:LaxFormulation} we analyse the explicit form of the Hirota quadratic equations and derive the Lax formulation of the associated integrable system, which we recognise to be the rationally $(N-1)$-constrained KP hierarchy, see Proposition~\ref{prop:RRKP} and Proposition~\ref{prop:KricheverRR}. In Proposition~\ref{logprop} we provide the Lax representation of the extra chain of flows that provide the mentioned extension of the well-known rational reductions of the KP hierarchy. It has to be noted, however, that certain expected properties of the Lax pairs associated to the extra flows could not be proved with the currently available methods, see Remark~\ref{extraconj} for the related conjecture. 

\subsection*{Acknowledgments} 
This work is supported by the EIPHI Graduate School (contract ANR-17-EURE-0002).
H.~P. and S.~S. were supported by the Netherlands Organization for Scientific Research.

\section{The Dubrovin--Frobenius manifold}
\label{sec:FrobeniusManifold}

We recall the definition of the family of Dubrovin--Frobenius manifolds under consideration in terms of their superpotential following the construction on Hurwitz spaces in~\cite[Lecture 5]{dub96}. This family of Dubrovin--Frobenius manifolds was also studied in detail in~\cite[Section 8]{db19}. The $N=2$ case is discussed in detail in~\cite{carletHigherGeneraCatalan2021}
and some analysis of the case $N=3$ is developed in~\cite[Example 5.5]{dub96}.

Most of our analysis throughout the paper is performed in the neighborhood of the special point $t_{sp}$, defined below. After a short survey of the basic structures of the Dubrovin--Frobenius manifold, we switch to the computations at $t_{sp}$. 

\subsection{Definition and basic structures} 

Let $M$ be the Hurwitz space $H_{0;1,N-1}$, i.e. the space of meromorphic functions on the Riemann sphere $\C_\infty$ with two poles of order $1$ and $N-1$ respectively and simple finite ramification points, modulo automorphisms of $\C_\infty$.  
The space $\tilde{M}$ of meromorphic functions on $\C_\infty$  with ramification profile $(N-1,1)$ over $\infty$, modulo automorphisms, can be identified with $\C^{N-1}\times\C^*$ via the {\it superpotential}
\begin{equation}
f(p) = p^{N-1} + a_{2} p^{N-3} + \dots + a_{N-1} + \frac{a_N}{p-a_1},
\end{equation}
so $M$ is identified with an open subset of $\tilde{M}\simeq\C^{N-1}\times\C^*$, given by the functions $f(p)$ which have $N$ distinct finite simple critical points. 
Denoting $\Delta_{crit}$ the corresponding degenerate subset of $\tilde{M}$, we have that $M=\tilde{M}\setminus  \Delta_{crit}$. 

In the neighbourhood of any point $a \in M$, the critical values $u^i = f(\tilde{p}_i)$,  corresponding to the $N$ distinct critical points $\tilde{p}_i$ with $f'(\tilde{p}_i)=0$,  $i=1, \dots, N$, define local coordinates. The assumption of simple ramification at the finite ramification points means that $f ''(\tilde{p}_i)\not= 0$. 
Denote by $\hat{M}$ the subset of $M$ where $f$ has $N$ distinct critical values. 

The commutative associative product $\cdot$ on the tangent spaces to $M$ is defined by declaring the coordinates $u^i$ to be canonical, i.e., the coordinate tangent fields are idempotents,
\begin{equation}
\frac{\partial }{\partial u^i} \cdot \frac{\partial }{\partial u^j} = \delta_{ij} \frac{\partial }{\partial u^i}.  
\end{equation}

The action of the affine group $f \mapsto \alpha f +\beta$ on $H_{0;1,N-1}$ induces the unit and the Euler vector fields which are explicitly given by 
\begin{equation}
e = \frac{\partial }{\partial a_{N-1}}, \qquad 
E = \sum_i  \frac{i}{N-1} a_i \frac{\partial }{\partial a_i}.
\end{equation}
By direct computation we have
\begin{equation} \label{eq:ActionEonSuperPot}
E(f) = f-\frac 1{N-1} p \frac{\partial f}{\partial p}, \qquad e(f)=1,
\end{equation}	
from which it follows that in canonical coordinates 
\begin{equation} 
e = \sum_i \frac{\partial }{\partial u^i}, \qquad 
E = \sum_i u^i \frac{\partial }{\partial u^i}.
\end{equation}

The metric is given by 
\begin{equation}
( \partial' , \partial'' ) = - (\res_{p=\infty} + \res_{p=a_1} ) \frac{\partial' f \ \partial'' f }{\partial_p f} \ dp .
\end{equation}
The flatness of the metric can be directly proved by introducing flat coordinates $t^1,\dots,t^N$ given by $t^1 = a_1$ and
\begin{equation}
t^\alpha  = - \frac{N-1}{\alpha-1} \res_{p=\infty} f^{\frac{\alpha-1}{N-1}}\ dp, \qquad  \alpha=2,\dots,N, 
\end{equation}
which in particular implies $t^N = a_N$. The metric in these coordinates is given by 
\begin{equation} \label{eq:Eta-Definition}
	\eta_{\alpha\beta} = \begin{cases}
		\frac{\delta_{\alpha+\beta,N+1}}{N-1} & 2\leq \alpha \leq N-1, \\
		\delta_{\alpha+\beta,N+1} & \alpha=1,N.
	\end{cases} 
\end{equation}

Using~\eqref{eq:ActionEonSuperPot} it is easy to see that the unit and the Euler vector fields in flat coordinates are given by
\begin{align}
e= \frac{\partial}{\partial t^{N-1}}, \qquad	E = \sum_{\alpha=1}^{N} \frac{\alpha}{N-1} t^\alpha\frac{\partial}{\partial t^\alpha }.
\end{align}

The \emph{charge} $d$ of this Dubrovin--Frobenius manifold is equal to 
\begin{equation}
d = \frac{N-3}{N-1}.	
\end{equation}

Recall also that the matrices $\mu$ and $\cU$ are respectively defined as 
\begin{equation}
\mu = \frac{2-d}2 - \nabla E = \diag\left(\frac{N-1}{2(N-1)}, \frac{N-3}{2(N-1)}, \dots, \frac{1-N}{2(N-1)}\right), \qquad 
\cU = E \cdot
\end{equation}
and are respectively skew-symmetric and symmetric w.r.t. the inner product $\eta$
\begin{equation}  
\eta \mu  = -  \mu\eta, \qquad 
\eta\,  \cU   = \cU^T\!\eta.
\end{equation}

\subsection{Computations at the special point} 
\label{sec:SpecialPoint}

We fix a special point $t_{sp}\coloneqq \{a_i=\delta_{i,N}\}$, where the superpotential takes the form $f(p,t_{sp})=p^{N-1}+p^{-1}$. Our goal is to compute some of the structures associated with the Dubrovin--Frobenius manifold $M$  at this point. 

In all computations below it is sufficient to use the fact that in the neighborhood of the special point (that is, in the coordinates $t^\alpha = \tilde t^\alpha+\delta^{\alpha,N}$)
\begin{equation}
	f(p,t) = f(p,t_{sp})+ \sum_{\alpha=2}^{N} \tilde t^\alpha p^{N-1-\alpha} + \tilde t^1 p^{-2} +o(\tilde t^1,\dots,\tilde t^N).
\end{equation}

The critical points of $f(p,t_{sp})$ are $c_i = (N-1)^{-1/N}\exp(2\pi\i \cdot i/N)$, $i=1,\dots,N$. Therefore, the values of the canonical coordinates at $t_{sp}$ are $u^i = f(c_i,t_{sp})$, $i=1,\dots,N$, and 
\begin{equation}
	\frac{\partial u^i}{\partial t^\alpha}\Big|_{t_{sp}} = 
	\begin{cases}
		c_i^{N-1-\alpha}, & \alpha = 2,\dots,N, \\
		c_i^{-2}=(N-1)c_i^{N-2}, & \alpha = 1. 
	\end{cases}
\end{equation}
Note that 
\begin{equation}
	\delta_{ij}\Delta_i^{-1} |_{t_{sp}} = \left(\frac{\partial}{\partial u^i},\frac{\partial}{\partial u^j}\right)\Big|_{t_{sp}} = \delta_{ij} N^{-1} (N-1)^{-3/N} \exp(2\pi\i\cdot 3i/N).
\end{equation}
Fix
\begin{equation}
	\Delta_i^{1/2} |_{t_{sp}} = N^{1/2} (N-1)^{3/2N} \exp(-2\pi\i\cdot 3i/2N).
\end{equation}
Then
\begin{align}
	\Psi^i_\alpha |_{t_{sp}} & = \Delta_i^{-1/2}\frac{\partial u^i}{\partial t^\alpha}\Big|_{t_{sp}}
	\\ \notag 
	& = \begin{cases}
	N^{-1/2} (N-1)^{(-2N-1+2\alpha)/2N} \exp(2\pi\i \cdot i(2N+1-2\alpha)/2N), & \alpha = 2,\dots,N, \\
	N^{-1/2} (N-1)^{1/2N} \exp(2\pi\i \cdot i(-1)/2N), & \alpha =1.	
	\end{cases}
\end{align}

\subsubsection{The $R$ matrix} \label{sec:R-matrix}

The Givental $R$ matrix given by $R(z)=\sum_{\ell=0}^\infty R_\ell z^\ell$ is defined by the equations $R_0=\mathrm{Id}$ and \begin{align}
	[\Psi^{-1}R_{m+1} \Psi, \cU] = (m+\mu) \Psi^{-1}R_m\Psi.
\end{align}	
Knowing $\Psi|_{t_{sp}}$ and $\cU|_{t_{sp}}$ this equation allows to fully reconstruct $R(z)|_{t_{sp}}$, but we won't need the explicit form of the $R$-matrix.

\subsubsection{Calibration} \label{sec:Calibration} 

We recall the choice of calibration proposed in~\cite[Section 5]{liuCentralInvariantsConstrained2015}. Let $f(p)=f(p,a)$. We have, for $m \geq 0$: 
\begin{align}
	\theta_{\alpha,m} & \coloneqq -\frac{1}{(N-1)\prod_{k=0}^m \left(k+ \frac{N-\alpha}{N-1}\right)}\res_{p=\infty} f(p)^{m+\frac{N-\alpha}{N-1}} dp,& \alpha=2,\dots,N-1,\\
	\theta_{1,m} & \coloneqq -\frac{1}{(m+1)!}\res_{p=\infty} f(p)^{m+1}dp , & 
	\\	
	\theta_{N,m} & \coloneqq \frac{N}{(N-1)m!}\res_{p=a_1}  f(p)^m(\widetilde\log f(p) - \ch(m)) dp .& 
\end{align}
In the last expression $\widetilde\log f(p)$ is defined as a series near $p=a_1$
\begin{align}
	\widetilde\log f(p) \coloneqq \frac{N-1}{N} \left(\log a_N + A + \frac 1{N-1} B\right),
\end{align}
where $A$ is a formal power series in $p-a_1$ obtained by the expansion of $\log\left(\frac{p-a_1}{a_N} f(p,a)\right)$ at $p\sim a_1$, and $B$ is a formal power series in $(p-a_1)^{-1}$ obtained by the expansion of $\log\left((p-a_1)^{1-N}f(p,a)\right)$ at $(p-a_1)^{-1}\sim 0$.

The $S$-matrix is then defined as 
\begin{align}\label{eq:S-matrix-general-def}
(S_m)^\alpha_\beta\coloneqq \eta^{\alpha\gamma} \frac{\partial \theta_{\beta,m}}{\partial t^\gamma} . 	
\end{align}

\subsubsection{The $S$ matrix}

From the previous formula and the choice of calibration 
of Liu--Zhang--Zhou we obtain the following formula for the $S$-matrix at $t_{sp}$. Letting $F\coloneqq p^{N-1}+p^{-1}=f(p,t_{sp})$ we have:
\begin{align} \label{eq:S-at-tsp}
	(S_m)^{\alpha}_\beta = \begin{cases}
		-\frac{1}{\prod_{k=0}^{m-1} (k+\frac{N-\beta}{N-1})} \res\limits_{p=\infty} p^{\alpha-2} F^{m-(\beta-1)/(N-1)} dp, & 2\leq \alpha,\beta\leq N-1,
		\\ 
		-\frac{1}{(N-1)\prod_{k=0}^{m-1} (k+\frac{N-\beta}{N-1})} \res\limits_{p=\infty} p^{-1} F^{m-(\beta-1)/(N-1)} dp, & \alpha=1, 2\leq \beta\leq N-1, 
		\\
		-\frac{1}{(N-1)\prod_{k=0}^{m-1} (k+\frac{N-\beta}{N-1})} \res\limits_{p=\infty} p^{-2} F^{m-(\beta-1)/(N-1)} dp,  & \alpha=N, 2\leq \beta\leq N-1, 
		\\
		-\frac{N-1}{m!}\res\limits_{p=\infty} p^{\alpha-2} F^{m} dp,
		& 2\leq \alpha\leq N-1, \beta=1,
		\\
		-\frac{1}{m!}\res\limits_{p=\infty} p^{-1}  F^{m} dp,
		& \alpha=1, \beta=1,
		\\
		-\frac{1}{m!}\res\limits_{p=\infty} p^{-2}  F^{m} dp,
		& \alpha=N, \beta=1.
		\end{cases}
\end{align}
The case of $\beta=N$ is a bit more subtle. We have to use four special series there. Let 
\begin{align}
A & \coloneqq \log(1+p^N) \text{ expanded in } p, &
B & \coloneqq \log(1+p^{-N}) \text{ expanded in } p^{-1}, \\ \notag 
C & \coloneqq (1+p^{N})^{-1} \text{ expanded in } p, &
D&  \coloneqq (1+p^{-N})^{-1} \text{ expanded in } p^{-1}.
\end{align}
Then we have:
\begin{align} \label{eq:S-at-tsp=beta-N}
	(S_m)^{\alpha}_N = \begin{cases}		
\frac{N}{m!} \res\limits_{p=0} p^{\alpha-2}  m F^{m-1}(\frac{N-1}{N}A+\frac 1N B - \ch(m))
& 
\\
\qquad +\frac{N}{m!} \res\limits_{p=0}  F^{m}(\frac{N-1}{N}Cp^{\alpha-1}+\frac 1N D p^{\alpha-1-N}),
& 2\leq \alpha\leq N-1,
\\
\frac{N}{(N-1)m!} \res\limits_{p=0} p^{-1} m F^{m-1}(\frac{N-1}{N}A+\frac 1N B - \ch(m))
& \\ \qquad
+ \frac{N}{(N-1)m!} \res\limits_{p=0} F^{m}
(\frac{N-1}{N}C+\frac 1N D p^{-N}),
& \alpha=1,
\\
\frac{N}{(N-1)m!} \res\limits_{p=0} p^{-2} m F^{m-1}(\frac{N-1}{N}A+\frac 1N B - \ch(m))
& \\ \qquad
+ \frac{1}{m!} \res\limits_{p=0} F^{m}
(-Cp^{N-1}+D p^{-1}+\frac N{N-1} D p^{-N-1}),
& \alpha=N.
\end{cases}
\end{align}
For instance, for $m=0$ we have $(S_m)^\alpha_\beta = \delta^\alpha_\beta$.
\begin{remark}
In the $N=2$ case we get
\begin{equation}
S_1 =
\begin{pmatrix}
0 & 0 \\ 1& 0
\end{pmatrix}
\end{equation}
which means that the calibration is fixed to $\psi = 0$ in the notations of~\cite{carletHigherGeneraCatalan2021}.
\end{remark}

\section{Partition function and enumerative meaning}
\label{sec:PartitionFunction}

\subsection{Ancestor and descendant potentials} 

The purpose of this section is to briefly recall the Givental formulas for the partition function associated with the Dubrovin--Frobenius manifolds that we consider.  We assume the reader to be familiar with Givental's quantization and in particular with the standard definitions of all involved operators. We refer to the original papers~\cite{Giv-Semisimple,Giv2001} for a general exposition and to~\cite{carletHigherGeneraCatalan2021} for the conventions used in this paper. 

Let $\tau_{KdV}( \{ t_d\}_{d\geq 0},\epsilon^2)$ be the string KdV tau function, also referred in the literature as the Kontsevich--Witten tau function, 
and let $Q^i_d$, $i=1,\dots,N$, $d=0,1,2,\dots,$ be formal variables related to the standard descendant variables $t_d$ by the so-called dilaton shift $Q^i_d = t_d - \delta_{1,d}$. 
The total ancestor potential is given by
\begin{equation}
\label{eq-ancestor}
\cA(\{q^\alpha_d\}_{\substack{\alpha=1,\dots,N;\, d\geq 0}}) \coloneqq \hat \Psi \hat R \prod_{i=1}^N \tau_{KdV}(\{ Q^i_d+\delta_{1,d} \}_{d \geq 0},\epsilon^2). 
\end{equation}
Here $\hat R$ is the quadratic differential operator obtained by Givental quantization of the $R$ matrix defined in Section~\ref{sec:R-matrix}; we won't however need its explicit form in this paper.  
The operator $\hat \Psi$ is just the change of variables in the resulting function given by $Q^i_d = \Psi^i_\alpha q^\alpha_d$. 

The total descendant potential is defined as 
\begin{equation} \label{eq:D-in-q}
	\cD(\{t^\alpha_d\}_{\substack{\alpha=1,\dots,N;\, d\geq 0}}) \coloneqq C \hat{S}^{-1}\cA(\{q^\alpha_d\}_{\substack{\alpha=1,\dots,N;\, d\geq 0}}), 
\end{equation}
where $\hat{S}^{-1}$ is the operator obtained by quantization of the $S$-matrix coming from the calibration, and $C$ is a normalization function on the underlying Dubrovin--Frobenius manifold with the property $C(t_{sp})=1$. The variables $t^\alpha_d$ are related to  $q^\alpha_d$ by the dilaton shift $q^\alpha_d = t^\alpha_d - \delta^{\alpha,1}\delta_{d,1}$. 

	Throughout the paper, in order to simplify the exposition and the computations, we are only interested in $\cA$ and $\cD$ computed at the special point $t_{sp}$. For this reason we also omit the discussion of the dependence of $\cA$ and $\cD$ on the choice of point $t\in \hat M$, see~\cite{Giv2001} or e.~g.~\cite{carletHigherGeneraCatalan2021} for details. 

It is also useful to consider representation of Equation~\eqref{eq:D-in-q} entirely in the variables $t^\alpha_d$ and $T^i_d = Q^i_d + \delta_{1,d}$. We have:
\begin{align} \label{eq:D-in-t}
	\cD(\{t^\alpha_d\}_{\substack{\alpha=1,\dots,N;\, d\geq 0}}) = C \,{}^t\! \hat{S}^{-1} \hat \Psi  \,{}^t\! \hat R \prod_{i=1}^N \tau_{KdV}(\{ \Delta_i^{\frac 12}T^i_d \}_{d \geq 0}, \Delta_i \epsilon^2),
\end{align}
where $\Delta_i^{-\frac 12} = \Psi^i_1$ for $i=1, \dots , N$ and 
\begin{align}
	 \,{}^t\! \hat{S} & = \Big(e^{-\frac{\partial}{\partial q^1_1}} \hat{S} e^{\frac{\partial}{\partial q^1_1}}\Big)\Big|_{q^\alpha_d \to t^{\alpha}_d}, & \\
	 \,{}^t\! \hat R & = \Big(e^{-\sum_{i=1}^N \Psi^i_1 \frac{\partial}{\partial Q^i_1}} \hat{R} e^{\sum_{i=1}^N \Psi^i_1 \frac{\partial}{\partial Q^i_1}}\Big)\Big|_{Q^i_d \to T^{i}_d} .
\end{align}

We have hence two different representations of the total descendant potential expanded at the special point $t_{sp}$. The advantage of Equation~\eqref{eq:D-in-q} is that it is very convenient for the proof of Hirota equations in Section~\ref{sec:HirotaAncestor}. On the other hand, Equation~\eqref{eq:D-in-t} represents the total descendant potential via formal operations on formal power series, and in this form it is used to reveal its enumerative meaning in Section~\ref{sec:RootedHypermaps}.

\subsection{Relation to rooted hypermaps} 
\label{sec:RootedHypermaps} 

For a given $N\geq 2$, a $(k_1+1,\dots,k_n+1)$-hypermap of genus $g$ is a way to combinatorially glue an oriented genus $g$ surface from $n$ white polygons with $k_1+1,\dots,k_n+1$ sides, respectively, and $\sum_{i=1}^n (k_i+1)/N$ black $N$-gons (their number is assumed to be integer, of course). The polygons are glued by identifying their sides in pairs, one from a white polygon, and one from a black $N$-gon. 
A $(k_1+1,\dots,k_n+1)$-hypermap of genus $g$ is called \emph{rooted} if one side for each white $(k_i+1)$-gon is distinguished. 

Let $\mathsf{RHM}_{g;k_1+1,\dots,k_n+1}$ be the number of the isomorphism classes of  $(k_1+1,\dots,k_n+1)$-hypermaps of genus $g$. 
This number has various further interpretations in the literature, through a number of bijective identifications and dualities: up to a factor of $\prod_{i=1}^n (k_i+1)$ it is related to enumeration of $N$-orbifold strictly monotone Hurwitz numbers, special kind of Belyi functions, or constellations.

Our goal it to relate the descendant potential to enumeration of hypermaps, thus assigning it an enumerative meaning. The main result is the following:

\begin{theorem}\label{thm:Hypermaps} Consider the expansion of the descendant potential at the special point, with the calibration given by Equation~\eqref{eq:S-at-tsp}. We have:
\begin{equation}
\log\mathcal{D} \big|_{\substack{t^{\alpha}_d = \delta^{\alpha,N}\delta_{d,0}
\\ \alpha=2,\dots,N; d\geq 0}} = 
\sum_{g=0}^\infty \sum_{n=1}^\infty \frac{\epsilon^{2g-2}}{n!} \sum_{k_1,\dots,k_n=0}^\infty \mathsf{RHM}_{g;k_1+1,\dots,k_n+1} \prod_{i=1}^n \frac{t^{1}_{k_i}}{(k_i+1)!} .
\end{equation}
\end{theorem}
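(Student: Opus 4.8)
The plan is to evaluate the Givental formula~\eqref{eq:D-in-t} directly on the restricted locus and to recognise the outcome as the hypergeometric KP tau function that is classically known to enumerate hypermaps.

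First I would impose the restriction $t^\alpha_d=\delta^{\alpha,N}\delta_{d,0}$ ($\alpha=2,\dots,N$, $d\ge 0$) in~\eqref{eq:D-in-t}. This fixes the ``background'' descendant variables: their constant parts $t^\alpha_0=\delta^{\alpha,N}$ place the expansion at the special point $t_{sp}$, while $t^\alpha_d=0$ for $d\ge1$ switches off all higher descendants except along the primary direction $t^1$. Thus the object to analyse is a generating function in the single family $\{t^1_d\}_{d\ge0}$. Since (by Corollary~\ref{cor:KP}) this restricted potential is a KP tau function in these variables, it suffices to compute it on one KP component, which removes the multicomponent bookkeeping coming from the $N$ factors of $\tau_{KdV}$.

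The computational heart is the action of the operator string $\,{}^t\!\hat{S}^{-1}\hat\Psi\,{}^t\!\hat{R}$ on $\prod_{i=1}^N\tau_{KdV}(\{\Delta_i^{1/2}T^i_d\},\Delta_i\epsilon^2)$ after restriction. Using the explicit $\Psi|_{t_{sp}}$ and the $S$-matrix from~\eqref{eq:S-at-tsp}--\eqref{eq:S-at-tsp=beta-N}, I would expand the quantised operators via the Givental graph sum: vertices carry Witten--Kontsevich intersection numbers, edges carry $R$, and the legs carry $\hat\Psi$ and $\hat{S}$. Setting the $\alpha\ge 2$ times to their special-point values collapses all external legs onto the $t^1$-direction, so the graph sum reorganises into a hypergeometric tau function
\begin{equation}
\sum_{\lambda}\Big(\prod_{\square\in\lambda} r(c(\square))\Big)\, s_\lambda(\mathbf p)\big|_{\mathbf p=\mathbf p(t^1)},
\end{equation}
where the content weight $r$ is read off from the $R$- and $S$-data and the power of $\epsilon$ tracks the genus through the rescalings $\Delta_i\epsilon^2$ and $\Delta_i^{1/2}T^i_d$. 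Finally I would identify this tau function combinatorially: the superpotential $f(p)=p^{N-1}+p^{-1}$ at $t_{sp}$ is precisely the spectral datum whose $N$-gon/simple-pole structure governs hypermaps, and via the standard bijections between rooted hypermaps, $N$-orbifold strictly monotone Hurwitz numbers and constellations, the right-hand side is exactly a hypergeometric tau function with content weight matching $r$. The normalisation $1/(k_i+1)!$ is the rooting/dilaton factor attached to each white $(k_i+1)$-gon, so comparing the two hypergeometric expansions term by term yields the claimed equality genus by genus.

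The main obstacle is this middle step: controlling the quantised $R$-action (a quadratic differential operator) together with the $S$-conjugation and proving that, after restriction, the graph sum genuinely collapses to a single hypergeometric tau function with the correct content weight. A cleaner route that sidesteps this operator algebra is to show instead that both sides satisfy the same cut-and-join (equivalently Virasoro) constraints with a matching genus-zero one-point base case; since these constraints determine the tau function uniquely, this would pin down the identification without the full Givental-graph computation.
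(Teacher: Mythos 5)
There is a genuine gap, and it sits exactly where you flag ``the main obstacle.'' Your plan is to restrict~\eqref{eq:D-in-t} and then show that the Givental graph sum collapses into a hypergeometric (content-weighted Schur) expansion whose weight can be matched to the hypermap tau function. That collapse is precisely the hard content of the statement and you give no mechanism for it: a priori the restricted operator expression $\,{}^t\!\hat{S}^{-1}\hat\Psi\,{}^t\!\hat R\prod_i\tau_{KdV}$ has no reason to be diagonal in the Schur basis, and establishing that it is would amount to re-proving the KP integrability of this model from scratch. Worse, your intermediate appeal to Corollary~\ref{cor:KP} is circular: in the paper that corollary is \emph{deduced from} Theorem~\ref{thm:Hypermaps} (together with the classically known hypergeometric form of the hypermap generating function), so it cannot be used as an input. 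The fallback you propose --- matching cut-and-join/Virasoro constraints on both sides --- is also only named, not carried out; deriving such constraints for the restricted descendant potential is itself a substantial task that the paper does not attempt.

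The paper takes a different and shorter route that avoids the operator-algebra problem entirely. It never tries to exhibit the hypergeometric structure from the Givental formula. Instead it uses two known identifications: (i) the Chekhov--Eynard--Orantin correlators $\omega_{g,n}$ of the spectral curve $x=z^{N-1}+1/z$, $y=-z$ expand in $x$ with coefficients $\mathsf{RHM}_{g;k_1+1,\dots,k_n+1}$ (Equation~\eqref{eq:Omega-enumerative}); and (ii) the same $\omega_{g,n}$ are given by the $\,{}^t\!\hat R$-action on the product of KdV tau functions paired with the auxiliary functions $\xi^j$ (Equation~\eqref{eq:Omega-Givental}, quoted from the topological recursion/Givental correspondence). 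With these in hand, the only computations left are: a residue lemma showing that expanding $d(-d/dx)^a\tilde\xi^\alpha$ against $x^{k+1}/(k+1)!$ reproduces exactly the $S$-matrix entries $(S_{k-a})^\alpha_1$ of the Liu--Zhang--Zhou calibration, so that the $\,{}^t\!\hat S^{-1}$-action accounts precisely for the change from the times $t^1_k$ to the $x$-expansion; and a separate explicit check of the unstable $(g,n)=(0,1)$ and $(0,2)$ terms. If you want to salvage your approach, you would need either to import the topological-recursion input as the paper does, or to actually prove the constraint/uniqueness argument you sketch at the end.
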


\begin{proof}
In order to prove this theorem, we recall that enumeration of rooted hypermaps can be resolved by a matrix model computation, and in particular can be computed by expanding the symmetric $n$-differentials obtained via the Chekhov--Eynard--Orantin topological recursion, see~\cite{Eynard-Book, DOPS18}.  

We consider the spectral curve given by the data $X=z/(1+z^{N})$, $\tilde y=z^N$, and $\tilde B=dz_1dz_2/(z_1-z_2)^2$, $\tilde \omega_{0,1}=\tilde ydX/X$. This choice of the spectral curve data is standard from the point of view of the KP integrability, see~\cite{ACEH20,BDKS20}.  It is more convenient for us to us an equivalent form given by  $x=z^{N-1}+1/z$, $y=-z$, $B=dz_1dz_2/(z_1-z_2)^2$, and $\omega_{0,1}=ydx = \tilde \omega_{0,1}-dx/x$; thus $x(z) = f(z,t_{sp})$. 
 
It is proved (see e.~g.~\cite{Eynard-Book,DOPS18,ACEH20,BDKS20}) that the CEO topological recursion applied to this spectral curve produces the symmetric differentials $\omega_{g,n}(z_1,\dots,z_n)$, $2g-2+n>0$, from initial data $\omega_{0,1}(z_1)$, $\omega_{0,2}(z_1, z_2)$,  whose expansions near $z_1=\cdots=z_n=0$ in the variables $x_i=x(z_i)$, $i=1,\dots,n$, are given by
\begin{align} \label{eq:Omega-enumerative}
	\omega_{g,n}= (-1)^n \sum_{k_1,\dots,k_n\geq 0} \mathsf{RHM}_{g;k_1+1,\dots,k_n+1} \prod_{ i=1}^n \frac{d x_i}{x_i^{k_i+2}}.
\end{align}
 
Recall that $c_1,\dots,c_N$ are the critical points (identified with their $z$-coordinates) of $x(z)$. Define
\begin{equation}
	\xi^j(z) \coloneqq  \frac{dz}{d \sqrt{2(x(z)-x(c_j))}}\Bigg|_{z=c_j} \cdot \frac{1}{c_j-z}, \qquad j=1,\dots,N,
\end{equation}
where the choice of the square roots $\sqrt{2(x(z)-x(c_j))}$ is aligned with the choices made for $\Delta_j$, $j=1,\dots,N$, namely, we require that
\begin{equation}
	\frac{dz}{d\sqrt{2(x(z)-x(c_j))}}\Bigg|_{z=c_j} = \Delta_j^{-1/2}, \qquad j=1,\dots,N.
\end{equation}
It is proved in~\cite{db19} that 
\begin{align} \label{eq:Omega-Givental}
	&
	\omega_{g,n} = 
	\\ \notag & [\epsilon^{2g-2}] 
	 \sum_{\substack{1\leq i_1,\dots,i_n \leq N\\ a_1,\dots,a_n\geq 0}} \Bigg(\Big(\prod_{j=1}^n \frac{\partial}{\partial T^{i_j}_{a_j}}\Big) \log \,{}^t\! \hat R \prod_{i=1}^N \tau_{KdV}(\Delta_i^{\frac 12}T^i_d, \Delta_i \epsilon^2)\Bigg)\Bigg|_{T^i_a = 0} \prod_{j=1}^n d\Big(-\frac{d}{dx_j}\Big)^{a_j} \xi^{i_j}(z_{j})
\end{align}
in the stable range. 
Combining Equations~\eqref{eq:Omega-enumerative} and~\eqref{eq:Omega-Givental} we prove the theorem. It requires some computation performed below, which is a straightforward generalization of similar computations done in~\cite{carletHigherGeneraCatalan2021,DOSS}.

First, we obtain an equivalent form of Equation~\eqref{eq:Omega-Givental} in the flat frame. To this end, define $\tilde{\xi}^\alpha \coloneqq (\Psi^{-1})^\alpha_i \xi^i$, $\alpha=1,\dots,N$. We have: 
\begin{align} \label{eq:Omega-Givental-Flat} 
	&
	\omega_{g,n} = 
	\\ \notag &  [\epsilon^{2g-2}]
	\sum_{\substack{1\leq \alpha_1,\dots,\alpha_n \leq N\\ a_1,\dots,a_n\geq 0}} \Bigg(\Big(\prod_{j=1}^n \frac{\partial}{\partial t^{\alpha_j}_{a_j}}\Big) \log \hat\Psi \,{}^t\! \hat R \prod_{i=1}^N \tau_{KdV}(\Delta_i^{\frac 12}T^i_d, \Delta_i \epsilon^2)\Bigg)\Bigg|_{t^\alpha_a = 0} \prod_{j=1}^n d\Big(-\frac{d}{dx_j}\Big)^{a_j} \tilde \xi^{\alpha_j}(z_{j})
\end{align}
(in this form Equation~\eqref{eq:Omega-Givental-Flat} could be also directly derived from~\cite{DNOPS-Hurwitz}, since we consider here an instance of a Hurwitz Frobenius manifold in the sense of Dubrovin). Note that 
\begin{align}
	\tilde \xi^\alpha =
	\begin{cases}
	 \frac{z^\alpha}{1-(N-1)z^{N}} \cdot (N-1), & \alpha = 2,\dots,N-1,	\\
	 \frac{z}{1-(N-1)z^{N}}, & \alpha = 1,	\\
	 \frac{1}{1-(N-1)z^{N}}, & \alpha = N.	\\
	\end{cases}
\end{align} 

Second, we use the fact that the action of $\,{}^t\! \hat{S}$ amounts to a linear triangular change of variables combined with the shift of the point of expansion (with the implied correction of the unstable terms), that is
\begin{align} \label{eq:S-action-explained}
& 
		\left( \prod_{j=1}^n \frac{\partial}{\partial t^{\alpha_j}_{a_j}}  \log{\,}^t\!\hat{S}^{-1}\hat{\Psi}{\,}^t\!\hat{R}\prod_{i=1}^2\tau_{KdV}(\{\Delta_i^{1/2}T^i_a\}_{a\geq 0},\Delta_i\epsilon^2)\right)\Big|_{t^\alpha_a=\delta^{\alpha,N}\delta_{a,0}} 
		\\  \notag 
&
		= \sum_{\substack{0\leq \ell_j \leq a_j,\\j=1,\dots,n}} (S_{\ell_j})^{\beta_j}_{\alpha_j} \left( \prod_{j=1}^n \frac{\partial}{\partial t^{\beta_j}_{a_j-\ell_j}}  \log\hat{\Psi}{\,}^t\!\hat{R}\prod_{i=1}^2\tau_{KdV}(\{\Delta_i^{1/2}T^i_a\}_{a\geq 0},\Delta_i\epsilon^2)\right)\Big|_{t^\alpha_a = 0} .
\end{align}
With~\eqref{eq:S-action-explained},~\eqref{eq:Omega-Givental-Flat}, and~\eqref{eq:Omega-enumerative} combined, the statement of the theorem reduces in the stable range $2g-2+n>0$ to the following lemma:
\begin{lemma} We have:
	\begin{equation}
		\res_{z=\infty} \frac{x^{k+1}}{(k+1)!} d \Big(-\frac{d}{dx}\Big)^a \tilde \xi^\alpha = 
		\begin{cases}
			0, & a> k\geq -1, \\
			(S_{k-a})^\alpha_1, & k\geq a \geq 0.
		\end{cases}
	\end{equation}
\end{lemma}
\begin{proof} Recall Equation~\eqref{eq:S-at-tsp} for $\beta=1$. Note that 
	\begin{align}
		\res_{z=0} \frac{x^{k+1}}{(k+1)!} d \Big(-\frac{d}{dx}\Big)^a \tilde \xi^\alpha
		= - \res_{z=0} \frac{x^{k-a}}{(k-a)!} dx \tilde \xi^\alpha
	\end{align}
if $k\geq a$ and $0$ otherwise. Then for $m=k-a$ we rewrite the latter expression as 
	\begin{align}
	- \res_{z=0} \frac{x^{m}}{m!} dx \tilde \xi^\alpha = \res_{z=0} (N-1)z^{\alpha-2} \frac{x^{m}}{m!} dz = \frac{-(N-1)}{m!} \res_{z=\infty} z^{\alpha-2} x^{m}dz
\end{align}
in the case $\alpha =2,\dots, N-1$ and if $\alpha=1$ or $N$, then we have
	\begin{align} 
	- \res_{z=0} \frac{x^{m}}{m!} dx \tilde \xi^\alpha = \res_{z=0}z^{p-2} \frac{x^{m}}{m!} dz = \frac{-1}{m!} \res_{z=\infty} z^{p-2} x^{m}dz,
\end{align}
with $p=1$ for $\alpha=1$ and $p=0$ for $\alpha=N$. For all $\alpha$ we obtain the formulas that coincide with the ones given in Equation~\eqref{eq:S-at-tsp} for $\beta=1$.
\end{proof}

Finally, we have to check the unstable cases $(g,n)=(0,1)$ and $(g,n)=(0,2)$. To this end, we use explicit formulas for the unstable cases in terms of the $S$-matrix from~\cite{Giv-Semisimple}:
\begin{align}
	[\epsilon^{-2} t^1_a] \log \cD \big|_{\substack{t^{\alpha}_d = \delta^{\alpha,N}\delta_{d,0}
			\\ \alpha=2,\dots,N; d\geq 0}} & = \eta_{\unity,\alpha} (S_{a+2})^\alpha_1, \\ \notag 
	[\epsilon^{-2} t^1_at^1_b] \log \cD\big|_{\substack{t^{\alpha}_d = \delta^{\alpha,N}\delta_{d,0}
			\\ \alpha=2,\dots,N; d\geq 0}} & = [z^a w^b] \frac{-\eta_{11} + \sum_{m,n=0}^\infty (S_m)^\mu_1 z^m  (S_n)^\nu_1 w^n \eta_{\mu\nu}}{z+w} ,
\end{align}
which allow us to conclude the proof of the theorem with the following two lemmata.

\begin{lemma} We have: 
\begin{equation} \label{eq:ENU-01}
	\frac{1}{(k+1)!}\mathsf{RHM}_{0;k+1} = \eta_{\unity,\alpha} (S_{k+2})^\alpha_1	.
\end{equation}	
\end{lemma}

\begin{proof} 
Recall that the unit vector field is $\partial/\partial t^{N-1}$. Hence, the right hand side of Equation~\eqref{eq:ENU-01} is equal to
\begin{align}
	\eta_{\unity,\alpha} (S_{k+2})^\alpha_1 & = \frac{1}{N-1}(S_{k+2})^2_1 
	= \frac{1}{N-1} \cdot \frac{-(N-1)}{(k+2)!} \res_{p=\infty} \left(p^{N-1}+\frac 1p\right)^{k+2} dp\\
	\notag &
	 = \frac{1}{(k+2)!} \res_{p=0} \left(p^{N-1}+\frac 1p\right)^{k+2} dp.
\end{align} 
On the other hand, Equation~\eqref{eq:Omega-enumerative} for $g=0$, $n=1$ gives $\mathsf{RHM}_{0;k+1} = [X^{k+1}] \tilde y$, where $\tilde y = z^N = z(z^{N-1}+1/z)-1$, thus
\begin{align}
	\frac{1}{(k+1)!}\mathsf{RHM}_{0;k+1} & = \frac{1}{(k+1)!}\res_{z=0} z^N \frac{d (z/(1+z^N))}{(z/(1+z^N))^{k+2}} \\ \notag
	& = \frac{-1}{(k+1)!}\res_{z=0} \left(z\left(z^{N-1}+\frac 1z \right) -1 \right) \left(z^{N-1}+\frac 1z \right)^k d  \left(z^{N-1}+\frac 1z \right) \\ \notag
	& = \frac{1}{(k+2)!}\res_{z=0}  \left(z^{N-1}+\frac 1z \right)^{k+2} dz,
\end{align}
which implies the statement of the lemma.
\end{proof}

\begin{lemma} 
We have:
	\begin{align} \label{eq:ENU-02}
		& \frac{1}{(k_1+1)!(k_2+1)!}\mathsf{RHM}_{0;k_1+1,k_2+1} \\ \notag & = [w_1^{k_1}w_2^{k_2}]
		\frac{-\eta_{1,1}+\sum_{n_1,n_2=0}^\infty w_1^{n_1}w_2^{n_2}  \eta_{\alpha_1,\alpha_2} (S_{n_1})^{\alpha_1}_1 (S_{n_2})^{\alpha_2}_1	}{w_1+w_2}.
	\end{align}	
\end{lemma}

\begin{proof} 
First we analyze the right hand side of Equation~\eqref{eq:ENU-02}. Recall the definition of $\eta$ given in Equation~\eqref{eq:Eta-Definition}, in particular, $\eta_{1,1}=0$. Recall also the formulas for $(S_n)^\alpha_1$ at the special point given in Equation~\eqref{eq:S-at-tsp}. In particular, a useful version of this formula in the case $\alpha=N$ is 
\begin{equation}
(S_n)^\alpha_1 = -\frac{1}{n!}\res_{p=\infty} p^{-2} f(p,t_{sp})^{n} dp	= \frac{N-1}{n!}\res_{p=0} p^{N-2} f(p,t_{sp})^{n} dp	.
\end{equation}	
Using this expression, and, furthermore, replacing the residues at $p=\infty$ with the residues at $p=0$, we have:
\begin{align} \label{eq:SS-inProof02}
&
\sum_{n_1,n_2=0}^\infty w_1^{n_1}w_2^{n_2}  \eta_{\alpha_1,\alpha_2} (S_{n_1})^{\alpha_1}_1 (S_{n_2})^{\alpha_2}_1 
\\ \notag & 
= \sum_{n_1,n_2=0}^\infty w_1^{n_1}w_2^{n_2} (N-1) \sum_{\alpha=1}^N \res_{p_1=0} \res_{p_2=0} p_1^{\alpha-2} p_2^{N-1-\alpha} \frac{f(p_1,t_{sp})^{n_1}}{n_1!}\frac{f(p_2,t_{sp})^{n_2}}{n_2!} dp_1dp_2.
\end{align}
Obviously, the constant term of this series is equal to zero. 
	
Now, to compute the left hand side of Equation~\eqref{eq:ENU-02}, we recall that from Equation~\eqref{eq:Omega-enumerative} for $g=0$, $n=2$ we have that
	\begin{align}
		& \frac{1}{(k_1+1)!(k_2+1)!}\mathsf{RHM}_{0;k_1+1,k_2+1} \\ \notag 
		& = \res_{z_1=0}\res_{z_2=0} \frac{x(z_1)^{k_1+1}}{(k_1+1)!}\frac{x(z_2)^{k_2+1}}{(k_2+1)!}\cdot 
		d_{z_1} d_{z_2} \Big(\log(z_1-z_2)-\log\big(x(z_1)-x(z_2)\big)\Big)
		\\ \notag 
		& = -\res_{z_1=0}\res_{z_2=0} \frac{x(z_1)^{k_1+1}}{(k_1+1)!}\frac{x(z_2)^{k_2+1}}{(k_2+1)!}\cdot d_{z_1} d_{z_2} \log\left(1-z_1z_2\frac{z_1^{N-1}-z_2^{N-1}}{z_1-z_2}\right).
	\end{align}
Hence, 
	\begin{align}
	&(w_1+w_2) \sum_{k_1,k_2\geq 0} w_1^{k_1} w_2^{k_2} \frac{1}{(k_1+1)!(k_2+1)!}\mathsf{RHM}_{0;k_1+1,k_2+1} \\ \notag 
	& = \sum_{\substack{k_1,k_2\geq 0\\ k_1+k_2\geq 1}} w_1^{k_1} w_2^{k_2} \res_{z_1=0}\res_{z_2=0} \frac{x(z_1)^{k_1}}{k_1!}\frac{x(z_2)^{k_2}}{k_2!}\cdot 
	\\ \notag & \quad 
	\left(
	d_{z_1} x(z_1) d_{z_2} \log\left(1-z_1z_2\frac{z_1^{N-1}-z_2^{N-1}}{z_1-z_2}\right)
	+ d_{z_2} x(z_2) d_{z_1} \log\left(1-z_1z_2\frac{z_1^{N-1}-z_2^{N-1}}{z_1-z_2}\right)
	\right)
	\\ \notag
		& = \sum_{\substack{k_1,k_2\geq 0\\ k_1+k_2\geq 1}} w_1^{k_1} w_2^{k_2} \res_{z_1=0}\res_{z_2=0} \frac{x(z_1)^{k_1}}{k_1!}\frac{x(z_2)^{k_2}}{k_2!}\cdot 
(N-1)\sum_{\alpha=1}^{N} z_1^{\alpha-2} z_2^{N-1-\alpha} dz_1dz_2.
\end{align}
The latter expression coincides with~\eqref{eq:SS-inProof02}, which implies the statement of the lemma. 
\end{proof}

These computations complete the proof of the theorem.	
\end{proof}

\subsection{Hypergeometric KP tau function} 

Consider the generating function $\cZ$ for the rooted hypermaps
\begin{equation}
	\cZ \coloneqq \exp \left( \sum_{g\geq 0} \epsilon^{2g-2} \sum_{n\geq 1} \frac{1}{n!} \sum_{k_1,\dots,k_n\geq 1} \mathsf{RHM}_{g,k_1,\dots,k_n} \prod_{i=1}^n \KPt_{k_i} \right).
\end{equation}
This function is a KP tau function of hypergeometric type~\cite{GouldenJackson} and it can be given by
\begin{equation}\label{eq:HyperGeomKPtau}
	\cZ = \left( \sum_\lambda s_\lambda(\{\KPp_i\}_{i\geq 1}) s_\lambda(\{\tilde\KPp_i\}_{i\geq 1}) \prod_{(i,j)\in\lambda} (1+\epsilon(i-j))\right)\Bigg|_{\substack{\KPp_i = i\KPt_i/\epsilon,\ i\geq 1 \\
			\tilde\KPp_i = \delta_{iN}/\epsilon,\ i\geq 1}}\,.
\end{equation}
Here the sum is taken over the Young diagrams $\lambda$, and $s_\lambda(\{\KPp_i\}_{i\geq 1})$ and  $s_\lambda(\{\tilde\KPp_i\}_{i\geq 1})$ are the Schur functions considered in the power sums variables $\KPp_1,\KPp_2,\dots$ and $\tilde\KPp_1,\tilde\KPp_2,\dots$, respectively. 

\begin{corollary} 
\label{cor:KP} 
The restriction of the total descendant potential 
\begin{equation}
	\cD \big|_{\substack{t^1_d = (d+1)! \KPt_{d+1},\ d\geq 0 \\ t^{\alpha,d} = \delta^{\alpha,N}\delta^{d,0},\ \alpha=2,\dots,N; d\geq 0}}	
\end{equation}
 is a hypergeometric KP tau-function given by~\eqref{eq:HyperGeomKPtau}.
\end{corollary}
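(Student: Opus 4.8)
The plan is to obtain the corollary as an immediate consequence of Theorem~\ref{thm:Hypermaps}, by substituting the prescribed values of the times and matching the resulting generating series term by term with the definition of $\cZ$. Since the statement that $\cZ$ is a hypergeometric KP tau function admitting the Schur expansion~\eqref{eq:HyperGeomKPtau} is already available from~\cite{GouldenJackson}, the only thing left to establish is the identity of formal series
\begin{equation}
\cD \big|_{\substack{t^1_d = (d+1)! \KPt_{d+1},\ d\geq 0 \\ t^{\alpha}_d = \delta^{\alpha,N}\delta_{d,0},\ \alpha=2,\dots,N}} = \cZ.
\end{equation}

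First I would apply Theorem~\ref{thm:Hypermaps}, which already imposes $t^\alpha_d = \delta^{\alpha,N}\delta_{d,0}$ for $\alpha = 2, \dots, N$ and expresses $\log\cD$ as a generating series in the remaining variables $t^1_d$, with coefficients given by the rooted hypermap counts $\mathsf{RHM}_{g;k_1+1,\dots,k_n+1}$ weighted by $\prod_{i=1}^n t^1_{k_i}/(k_i+1)!$. Then I would perform the substitution $t^1_d = (d+1)!\,\KPt_{d+1}$ for $d\geq 0$, so that $t^1_{k_i} = (k_i+1)!\,\KPt_{k_i+1}$; the factorial in the denominator cancels exactly against the factorial produced by the substitution, and $\prod_{i=1}^n t^1_{k_i}/(k_i+1)!$ collapses to $\prod_{i=1}^n \KPt_{k_i+1}$.

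Next I would reindex the inner sum by setting $m_i \coloneqq k_i+1$. As $k_i$ ranges over $\{0,1,2,\dots\}$ the shifted index $m_i$ ranges over $\{1,2,\dots\}$, and one has $\mathsf{RHM}_{g;k_1+1,\dots,k_n+1} = \mathsf{RHM}_{g;m_1,\dots,m_n}$. After this relabelling the series becomes
\begin{equation}
\sum_{g\geq 0}\epsilon^{2g-2}\sum_{n\geq 1}\frac{1}{n!}\sum_{m_1,\dots,m_n\geq 1} \mathsf{RHM}_{g;m_1,\dots,m_n}\prod_{i=1}^n \KPt_{m_i},
\end{equation}
which is precisely $\log\cZ$. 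Comparing with the definition of $\cZ$, I conclude that the restricted descendant potential equals $\cZ$, and therefore is the hypergeometric KP tau function given by~\eqref{eq:HyperGeomKPtau}.

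The argument is essentially bookkeeping, so the only points that require care — and thus the main (mild) obstacle — are to verify that the symmetry factors $1/n!$, the cancelling factorials, and the index shift all align simultaneously, and that the unstable contributions from $(g,n)=(0,1)$ and $(g,n)=(0,2)$ are correctly incorporated. The latter is automatic because Theorem~\ref{thm:Hypermaps} is stated and proved for all $(g,n)$, including these unstable ranges, so no separate treatment is needed.
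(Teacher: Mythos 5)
Your proposal is correct and is exactly the intended argument: the paper states Corollary~\ref{cor:KP} as an immediate consequence of Theorem~\ref{thm:Hypermaps} without a written proof, and the substitution $t^1_{k_i}=(k_i+1)!\,\KPt_{k_i+1}$ followed by the reindexing $m_i=k_i+1$ is precisely the bookkeeping that turns the generating series of the theorem into $\log\cZ$. Your remark that the unstable $(g,n)=(0,1),(0,2)$ cases need no separate treatment is also right, since the theorem's proof already covers them via its last two lemmata.
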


\begin{remark} 
The function $\cZ$ is expanded in the power sum variables $\KPp_i$, $i\geq 1$, as 
	\begin{equation}
		\cZ =\exp \left( \sum_{g\geq 0} \epsilon^{2g-2} \sum_{n\geq 1} \frac{1}{n!} \sum_{k_1,\dots,k_n\geq 1} \mathsf{OSMH}_{g,k_1,\dots,k_n} \prod_{i=1}^n \KPp_i \right).
	\end{equation}
Here $\mathsf{OSMH}_{g,k_1,\dots,k_n}$ are the so-called $N$-orbifold strictly monotone Hurwitz numbers, see e.g.~\cite{HarnadOrlov,ALS}.
\end{remark}

\section{Period vectors and vertex operators} 
\label{sec:PeriodVectors} 

Following the presentation in~\cite[Section 5]{dub-Painleve}, it is convenient to think of the superpotential $\lambda = f(p,t)$ as a multi-valued function $p=p(\lambda,t)$ defined on the cut $\lambda$-plane. This allows to fix the choices needed to describe very explicitly the period vector. The analysis of this section is valid in a neighbourhood of the special point $t_{sp}$, with explicit computations performed only at $t_{sp}$.

\subsection{Cuts in the \texorpdfstring{$\lambda$}{lambda}-plane and one-point cycles}
\label{sec:cuts} 

Recall that the critical points of $f(p,t_{sp})$ are given by 
$c_i = (N-1)^{-1/N}\exp(2\pi\i \cdot i/N)$, $i=1,\dots,N$ and the values of the canonical coordinates are equal to 
$u^i = f(c_i,t_{sp}) = N(N-1)^{\frac 1N -1}\exp(-2\pi \i \frac iN)$, $i=1,\dots,N$.
The line $\ell_i:=e^{2\pi \i \frac{i}N}\R_+$ passing through the critical point $c_i$ covers twice the half line $\tilde{\ell}_i:=u^i + e^{-2\pi \i \frac{i}N}\R_+$, where $i=1,\dots , N$.
Let us denote by $D_{i-1}$ the sector with $2\pi \frac{i-1}N<\arg(p)<2\pi \frac{i}N$ cut out by $\ell_{i}$ and $\ell_{i-1}$. 
Clearly two preimages of a point $\lambda$ of  $\tilde{\ell}_i$ are contained on $\ell_i$ and tend to zero  and $\infty$ respectively when $|\lambda| \to \infty$; the remaining $N-2$ preimages are contained in the sectors $D_j$ with $j\not= i, i-1$. Each sector $D_i$ is mapped biholomorphically to the complex plane minus $\tilde{\ell}_i \cup \tilde{\ell}_{i+1}$.
Let $\gamma_i$ be the deck transformation induced by a small path going counterclockwise around the critical value $u^i$; then we  have
\begin{equation}
\gamma_i : D_{i-1} \leftrightarrow D_i, \quad i=1, \cdots , N
\end{equation}
while $\gamma_i$ leaves invariant  $D_j$ with $j\not=i, i-1$. 
We denote by $p_j(\lambda)$ the preimage of $\lambda$ in $D_j$, $j=0,\dots,N-1$, with the additional convention that $D_N=D_0$ and  $p_N(\lambda)=p_0(\lambda)$.

\subsection{Period vectors} 

We define the one-point period vectors $I^{(0)}_{p_i}(\lambda,t)$ as
\begin{equation}
\label{eq:period1cycle}
	\left(I^{(0)}_{p_i}(\lambda,t_{sp})\right)^\alpha \coloneqq \frac 12 \eta^{\alpha\beta} \int_{p_i} \frac{\partial f(p,t_{sp})}{\partial t^{\beta}}  \frac{dp}{d_pf (p,t_{sp})} = -\frac 12 \eta^{\alpha\beta} \frac{\partial p_i(\lambda,t_{sp})}{\partial t^\beta},
\end{equation}
for $i=0,\dots,N-1$.
These period vectors are multi-valued vector-valued functions that in general don't solve the  Fuchsian system. They are single-valued on the cut $\lambda$-plane and have the same monodromy as the one-point cycles described above. 

We define the period vectors 
$I^{(-1)}_{e_i}(\lambda,t_{sp})$, $i=1,\dots,N$ corresponding to vanishing cycles as
\begin{align}
	(I^{(-1)}_{e_i}(\lambda,t_{sp}))^\alpha =  \frac 12\eta^{\alpha\beta} \int_{p_{i-1}(\lambda)}^{p_i(\lambda)} \frac{\partial f}{\partial t^{\beta}} (p,t_{sp}) dp
\end{align}
where the integration path in the $p$-plane is contained in $D_{i-1}\cup D_i$. 

Notice that we can write
\begin{align}
(I^{(-1)}_{e_i}(\lambda,t_{sp}))^\alpha &= (I^{(-1)}_{p_i}(\lambda,t_{sp}))^\alpha - (I^{(-1)}_{p_{i-1}}(\lambda,t_{sp}))^\alpha \\
&+ \frac12 \eta^{\alpha\beta}  \int_{u_{i-1}}^{u_i} \frac{\partial f(p_{i-1}(\rho, t_{sp}),t_{sp})}{\partial t^{\beta}}  
\frac{1}{\partial_p f(p_{i-1}(\rho, t_{sp}),t_{sp}) } d\rho
\end{align}
where
\begin{equation}
 (I^{(-1)}_{p_i}(\lambda,t_{sp}))^\alpha =  \frac12 \eta^{\alpha\beta}  \int_{u_{i}}^{\lambda} 
  \frac{\partial f(p_{i-1}(\rho, t_{sp}),t_{sp})}{\partial t^{\beta}}  
\frac{1}{\partial_p f(p_{i-1}(\rho, t_{sp}),t_{sp}) } d\rho.
\end{equation}
One can easily check that $\partial_\lambda I^{(-1)}_{p_i}(\lambda,t_{sp}) = I^{(0)}_{p_i}(\lambda,t_{sp})$
and this implies that 
\begin{align}
	I^{(0)}_{e_i}(\lambda,t_{sp})  := \partial_\lambda I^{(-1)}_{e_i}(\lambda,t_{sp}) = I^{(0)}_{p_i}(\lambda,t_{sp})- I^{(0)}_{p_{i-1}}(\lambda,t_{sp}).
\end{align}

For $\ell\in\Z$ and $i=1,\dots,N$, define
\begin{equation}
	I^{(\ell)}_{e_i} (\lambda,t_{sp})  := (\partial_\lambda)^{\ell +1} I^{(-1)}_{e_i}(\lambda,t_{sp}),
\end{equation}
where, for $\ell<0$, the operator $\partial_\lambda^{-1}$ in this formula denotes integration along a path in the cut $\lambda$-plane from $u_i$ to $\lambda$.
It can be checked that the period vectors $I^{(\ell)}_{e_i}$  solve the (generalized) Fuchsian equation associated to the Frobenius manifold and have the correct asymptotic behaviour in the neighbourhood of the critical values $u_i$.

For future reference it is important to compute the period vector corresponding to $w=\sum_{i=1}^N e_i$, i.e.
\begin{align}
	(I^{(-1)}_w (\lambda,t_{sp}))^\alpha & = \frac 12\sum_{i=1}^N \int_{p_{i-1}(\lambda)}^{p_i(\lambda)} \eta^{\alpha\beta} \frac{\partial f}{\partial t^{\beta}} (p,t_{sp}) dp.
\end{align} 
Using the topology of the covering  described in Section~\ref{sec:cuts}, we see that this formula is equal to an integral in the $p$-plane along a contour homotopy equivalent to the positively oriented loop around $p=0$. Notice also that  
\begin{align}
	\eta^{\alpha\beta}\frac{\partial f}{\partial t^{\beta}} (p,t_{sp}) = 
	\begin{cases}
		(N-1)p^{\alpha-2} & \alpha = 2,\dots,N-1, \\
		p^{-1} & \alpha=1, \\
		p^{-2} & \alpha = N.	
	\end{cases}
\end{align}
Therefore, 
$(I^{(-1)}_w (\lambda,t_{sp}))^\alpha =\pi\i \delta^{\alpha}_{1}$.

\subsection{Monodromy period vectors and choice of orbit}

The monodromy of the period vectors $I^{(0)}_{e_i} =  I^{(0)}_{p_i} -  I^{(0)}_{p_{i-1}}$ is determined by the monodromy of the one-point cycle period vectors which coincides with the monodromy of the one-point cycles. The fundamental group of the pointed $\lambda$ plane acts via $\gamma_i I^{(0)}_{v} =  I^{(0)}_{\gamma_i v}$ where
\begin{equation}
\gamma_i e_j = e_j - 2 G_{ij} e_i
\end{equation}
and the intersection matrix $G$ is given by $G_{11}=\cdots=G_{NN}=1$, and $G_{ij} = -1/2$ for $i=j\pm1$ and for $i=1,j=N$ and $i=N,j=1$.



Our goal is to choose a reasonably small orbit of the action of the monodromy group. To this end, let $w=\sum_{i=1}^N e_i$ and consider the vectors $v_0,\dots,v_{N-1}\in \langle e_1,\dots,e_{N-1} \rangle$ given by
\begin{align}
	v_i \coloneqq \sum^{N-1}_{\substack{j=0}} (p_i - p_j) 
	= Np_i-\sum_{j=0}^{N-1} p_j = \sum_{j=1}^i je_j + \sum_{j=i+1}^{N-1} (j-N) e_j .
\end{align}
Notice that $\gamma_i w =w$,  $\gamma_i\colon  v_{i-1} \leftrightarrow v_i$ and $\gamma_i v_k = v_k$, $k\not=i-1,i$ for  $i=1,\dots, N$ . In particular, the set $\{v_0,\dots,v_{N-1}\}\subset \langle e_1,\dots,e_{N-1}\rangle$ is closed under the subgroup of the monodromy group generated by $\gamma_1,\dots,\gamma_{N-1}$.  Notice in particular that $\gamma_N v_i = v_i$, $i=1,\dots,N-2$, $\gamma_N v_{N-1} = v_0+Nw$ and $\gamma_N v_0 = v_{N-1}-Nw$. 

  The whole orbit under the monodromy group that we consider is given by $\{v_0, v_1, v_2,\dots,v_{N-1}\} + N\mathbb{Z}w$, and the nontrivial actions of $\gamma_i$, $i=1,\dots, N$, can be schematically represented as
\begin{align}\label{eq:NontrivMonodromy}
\xymatrix@W=50pt{
	& & \cdots  \ar@{<->}[r]^{\gamma_{N-1}}& v_{N-1}-2Nw \ar@{<->}@(d,u)  [dlll]_{\gamma_N} 
	\\
v_0-Nw \ar@{<->}[r]^{\gamma_1} & v_1-Nw \ar@{<->}[r]^{\gamma_2} & \cdots  \ar@{<->}[r]^{\gamma_{N-1}}& v_{N-1}-Nw \ar@{<->}@(d,u)  [dlll]_{\gamma_N} 
\\
v_0\ar@{<->}[r]^{\gamma_1} & v_1 \ar@{<->}[r]^{\gamma_2} & \cdots  \ar@{<->}[r]^{\gamma_{N-1}}& v_{N-1}  \ar@{<->}@(d,u)  [dlll]_{\gamma_N}
\\
v_0+Nw \ar@{<->}[r]^{\gamma_1} & v_1+Nw \ar@{<->}[r]^{\gamma_2} & \cdots  \ar@{<->}[r]^{\gamma_{N-1}}& v_{N-1}+Nw \ar@{<->}@(d,u)  [dlll]_{\gamma_N} 
\\
v_0+2Nw \ar@{<->}[r]^{\gamma_1} & \cdots & &
}
\end{align}

\subsection{The function \texorpdfstring{$\cW$}{curly W} }

Given $a,b \in \C^N$ we define the function
\begin{equation}
\cW_{a,b} (t,\lambda)  = (I_a^{(0)}(t,\lambda) , I_b^{(0)}(t,\lambda) ).
\end{equation}

The fundamental group of the pointed plane acts on the integral of $\cW_{a,a}$ as follows 
\begin{lemma} \label{lem:MonodromyPhaseGeneralStatement}
For $a\in \C^N$ we have that 
\begin{equation}
\gamma_i \left( \int_{\lambda_0}^{\lambda} \cW_{a,a}d\rho \right) = 
\int_{\lambda_0}^\lambda \cW_{\gamma_i a, \gamma_i a}  d\rho 
+ 4 <a,e_i> \int_{\lambda_0}^{u^i} \cW_{\pi_ia,e_i} d\rho+\pi \i <a,e_i>^2 .
\end{equation}
\end{lemma}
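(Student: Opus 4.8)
The plan is to use the Picard--Lefschetz decomposition of $a$ relative to the vanishing cycle $e_i$ and to track how each piece of $\cW_{a,a}$ behaves under analytic continuation around $u^i$. First I would set $\pi_i a \coloneqq a - <a,e_i> e_i$, so that $a = \pi_i a + <a,e_i> e_i$ and $\gamma_i a = \pi_i a - <a,e_i> e_i$. Since $G_{ii}=1$ one checks that $<\pi_i a, e_i> = 0$, hence $\pi_i a$ is $\gamma_i$-invariant, while $\gamma_i e_i = e_i - 2G_{ii}e_i = -e_i$. Expanding the symmetric bilinear form gives
\begin{equation}
\cW_{a,a} = \cW_{\pi_i a,\pi_i a} + 2<a,e_i>\, \cW_{\pi_i a,e_i} + <a,e_i>^2\, \cW_{e_i,e_i},
\end{equation}
and I would compute the monodromy of the integral of each of the three summands separately.

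The one ingredient I need to import is the local behaviour of the one-point periods at the critical value $u^i$: the period vector $I^{(0)}_v$ is single-valued, and in fact holomorphic, near $u^i$ precisely when $<v,e_i>=0$, whereas $I^{(0)}_{e_i}$ is anti-invariant, $\gamma_i I^{(0)}_{e_i} = -I^{(0)}_{e_i}$, with a $(\lambda-u^i)^{-1/2}$ leading singularity normalised so that
\begin{equation}
\res_{\lambda=u^i} \cW_{e_i,e_i}\, d\lambda = \tfrac12 .
\end{equation}
These are the standard vanishing-cycle asymptotics of the periods solving the Fuchsian system, and at $t_{sp}$ they can be read off directly from the explicit expressions for $I^{(0)}_{e_i}$. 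Granting them, $\cW_{\pi_i a,\pi_i a}$ is holomorphic and $\gamma_i$-invariant near $u^i$, so its integral is unchanged by $\gamma_i$.

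For the mixed term, $\cW_{\pi_i a,e_i}$ has an integrable $(\lambda-u^i)^{-1/2}$ branch point and changes sign under $\gamma_i$; writing its antiderivative near $u^i$ as $G(u^i)+\sqrt{\lambda-u^i}\,\Phi(\lambda)$ with $\Phi$ holomorphic and $G(u^i)=\int_{\lambda_0}^{u^i}\cW_{\pi_i a,e_i}\,d\rho$ the regularised value, the flip $\sqrt{\lambda-u^i}\mapsto-\sqrt{\lambda-u^i}$ gives $\gamma_i G = 2G(u^i)-G$, that is
\begin{equation}
\gamma_i \int_{\lambda_0}^\lambda \cW_{\pi_i a,e_i}\, d\rho = 2\int_{\lambda_0}^{u^i}\cW_{\pi_i a,e_i}\, d\rho - \int_{\lambda_0}^\lambda \cW_{\pi_i a,e_i}\, d\rho .
\end{equation}
Finally $\cW_{e_i,e_i}$ is $\gamma_i$-invariant with a simple pole at $u^i$, so encircling $u^i$ shifts its integral by $2\pi\i\,\res_{u^i}\cW_{e_i,e_i}\,d\lambda = \pi\i$.

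Assembling the three computations, the $\lambda$-dependent contributions recombine, via $\gamma_i a = \pi_i a - <a,e_i> e_i$ and the same bilinear expansion, exactly into $\int_{\lambda_0}^\lambda \cW_{\gamma_i a,\gamma_i a}\,d\rho$, while the leftover pieces $2<a,e_i>\cdot 2\int_{\lambda_0}^{u^i}\cW_{\pi_i a,e_i}\,d\rho = 4<a,e_i>\int_{\lambda_0}^{u^i}\cW_{\pi_i a,e_i}\,d\rho$ and $<a,e_i>^2\cdot\pi\i$ are the remaining two terms in the statement. The hard part will be the second ingredient: pinning down the precise leading coefficient of $I^{(0)}_{e_i}$ at $u^i$, equivalently the residue $\tfrac12$, since this is what fixes the constant phase to be exactly $\pi\i$ rather than some other multiple of it; the algebraic recombination, together with the holomorphicity of the invariant period, is then routine.
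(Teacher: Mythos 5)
Your argument is correct and is exactly the ``local analysis of the action of the monodromy and paths of the integration'' that the paper invokes without spelling out: the Picard--Lefschetz splitting $a=\pi_ia+\langle a,e_i\rangle e_i$, the holomorphy of $I^{(0)}_{\pi_ia}$ at $u^i$, the sign flip of the square-root term, and the residue $\tfrac12$ of $\cW_{e_i,e_i}$ (consistent with the paper's use of $\cW_{e_1,e_1}-\tfrac12(\rho-u^1)^{-1}$ being integrable at $u^1$) assemble precisely into the stated formula. No gap; your proof supplies the details the paper omits, following the same route.
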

Here $\pi_i a=(a+\gamma_ia)/2$ is the projection onto the hyperplane invariant under $\gamma_i$.
This is a general formula valid for arbitrary Frobenius manifolds (with $I^{(0))}_a(t,\lambda)$ being the normalized solutions of Dubrovin's Fuchsian system, see~\cite[Section 5]{dub-Painleve}), once we have  $<e_i,e_i>=1$ for $i=1,\dots,N$. The proof is reduced to a local analysis of the action of the monodromy and paths of the integration.
\subsection{A choice of orbit and covariant coefficients} 
\label{sec:OrbitCoefficients}

Let $\overline O$ be the orbit of the action of the monodromy group given by $\overline O\coloneqq \{\frac 2N(v_i + \ell Nw) | i=0,\dots,N-1, \ell\in \mathbb{Z}\}$. Our goal is to assign to the vectors of the orbit $\overline{O}$ certain coefficients that are covariant with respect to the action of the monodromy group. The peculiar choice of the coefficient $2/N$ in front of $v_i$ will become clear later, in the proof of Lemma~\ref{lem:coefficients}, cf.~Equation~\eqref{eq:proofLemmacoeff-cW}.


We define the coefficients $c_{i,\ell}$ as
\begin{align}
c_{i,\ell}(\lambda) & := d_{i,\ell}\exp \left[-\int_{\lambda_0}^\lambda \cW_{\frac 2N v_i,\frac 2N v_i}d\lambda \right], & i=0,\dots,N-1,
\end{align}
with 
\begin{align}
	d_{i,\ell} \coloneqq \exp \left[-\int_{p_0(\lambda_0)}^{p_i(\lambda_0)} \left(I_{2p-\frac 2N\sum_{j=1}^N p_j(\lambda(p))}^{(0)}(t,\lambda) , I_{2p-\frac 2N\sum_{j=1}^N p_j(\lambda(p))}^{(0)}(t,\lambda)  \right) d\lambda(p)\right],
\end{align}
where $I_{2p-\frac 2N\sum_{j=1}^N p_j(\lambda(p))}^{(0)}(t,\lambda)$ is the corresponding linear combination of one-point cycle period vectors defined as in \eqref{eq:period1cycle}.

We have to specify the contour in this definition. Consider a disk in the $p$-plane around the finite pole $0$ of $f(p,t_{sp})=p^{N-1}+p^{-1}$
 and all its critical points, with the points $p_0(\lambda_0),\dots,p_{N-1}(\lambda_0)$ on its boundary (the points are ordered
 according to the counterclockwise direction of its boundary). The contour connecting $p_0(\lambda_0)$ and $p_i(\lambda_0)$ in the definition of $d_{i,\ell}$ first rotates $\ell$ times along the boundary of this disk ($\ell$ times in the positive direction for $\ell\geq 0$, and $-\ell$ times in the negative direction for $\ell<0$), and then goes from $p_0(\lambda_0)$ to $p_i(\lambda_0)$ along the boundary in the positive direction. 


\begin{lemma} 
\label{lem:coefficients} 
{\ }
\begin{itemize}
\item[(1)] The function $\frac 2N(v_i+\ell Nw) \mapsto c_{i,\ell}$ is covariant on the orbit $\overline{O}$, that is, we have $\gamma_{i}\colon c_{i-1,\ell} \leftrightarrow c_{i,\ell}$, $i=1,\dots,N-1$, and $\gamma_N\colon c_{N-1,\ell}\leftrightarrow c_{0,\ell+1}$, and all other actions of $\gamma_i$, $i=1,\dots,N$ on $c_{j,\ell}$, $j=0,\dots,N-1$, are trivial. 	
\item[(2)] We have $c_{i,\ell} = c_{i} \exp(-\frac 2N 2\pi\i\ell)$, where $c_i\coloneqq c_{i,0}$, $i=0,\dots,N-1$.
\end{itemize}
\end{lemma}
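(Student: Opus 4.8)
The plan is to prove both statements by tracking how the monodromy transformations $\gamma_i$ act on the two ingredients defining $c_{i,\ell}$, namely the exponentiated integral $\exp[-\int_{\lambda_0}^\lambda \cW_{\frac2N v_i, \frac2N v_i}\,d\lambda]$ and the constant prefactor $d_{i,\ell}$. For part~(1), I would first apply Lemma~\ref{lem:MonodromyPhaseGeneralStatement} to the vector $a=\frac2N v_i$. This produces three terms: the integral of $\cW_{\gamma_i a,\gamma_i a}$ (which is exactly the integral term defining $c_{\gamma_i(i),\ell}$ after using $\gamma_i v_{i-1}\leftrightarrow v_i$), a linear-in-$\lambda$ correction term involving $\langle a, e_i\rangle \int_{\lambda_0}^{u^i}\cW_{\pi_i a, e_i}\,d\lambda$, and a constant phase $\pi\i\langle a,e_i\rangle^2$. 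The strategy is that the integral-correction term, when exponentiated, should be absorbed precisely into the transformation of the prefactor $d_{i,\ell}$, while the constant phase contributes an overall scalar that matches across the swap $c_{i-1,\ell}\leftrightarrow c_{i,\ell}$. I would compute $\langle \frac2N v_i, e_i\rangle$ explicitly using the intersection matrix $G$ and the expansion $v_i=\sum_{j=1}^i j e_j+\sum_{j=i+1}^{N-1}(j-N)e_j$; the coefficient $2/N$ is chosen so that these phase contributions come out cleanly (this is the promised clarification referenced at Equation~\eqref{eq:proofLemmacoeff-cW}).

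Next I would handle the transformation of $d_{i,\ell}$ directly from its definition as an integral over a contour in the $p$-plane. The key geometric input is the combinatorics of the contour: it rotates $\ell$ times around the disk containing $p=0$ and all critical points, then runs from $p_0(\lambda_0)$ to $p_i(\lambda_0)$ along the boundary. Under $\gamma_i$ (for $i=1,\dots,N-1$) the endpoints $p_{i-1}(\lambda_0)$ and $p_i(\lambda_0)$ are swapped via the deck transformation $\gamma_i\colon D_{i-1}\leftrightarrow D_i$, which shifts the endpoint of the boundary path by one step and thereby interchanges $d_{i-1,\ell}$ and $d_{i,\ell}$ up to exactly the correction produced by the Lemma. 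The action of $\gamma_N$ is the subtle case: from the schematic monodromy diagram~\eqref{eq:NontrivMonodromy} we have $\gamma_N\colon v_{N-1}\mapsto v_0+Nw$ and $v_0\mapsto v_{N-1}-Nw$, so $\gamma_N$ must increment the winding number, sending the level-$\ell$ contour to the level-$(\ell+1)$ contour; this is how $\gamma_N\colon c_{N-1,\ell}\leftrightarrow c_{0,\ell+1}$ arises. I would verify that $\gamma_N w=w$ keeps the integrand $\cW_{\frac2N v,\frac2N v}$ controlled and that the extra loop around the disk contributes precisely the $d_{0,\ell+1}/d_{N-1,\ell}$ ratio.

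For part~(2), the relation $c_{i,\ell}=c_i\exp(-\frac2N 2\pi\i\ell)$ should follow by isolating the $\ell$-dependence, which lives entirely in $d_{i,\ell}$ through the $\ell$ extra loops of the contour around the disk. Since each full loop around $p=0$ (equivalently, around all critical values once in the $\lambda$-plane, i.e. the monodromy associated to $w$) multiplies $d_{i,\ell}$ by a fixed factor, I would compute that factor as $\exp(-\frac2N 2\pi\i)$ per loop by evaluating $\int (I^{(0)}_{2p-\frac2N\sum_j p_j}, I^{(0)}_{2p-\frac2N\sum_j p_j})\,d\lambda$ over one loop of the disk boundary. The period vector corresponding to $w$ was already computed to be $(I^{(-1)}_w)^\alpha=\pi\i\,\delta^\alpha_1$, and I expect this residue-type computation to yield the clean factor $\exp(-\frac2N 2\pi\i\ell)$ after accounting for the $2/N$ normalization.

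The main obstacle I anticipate is the bookkeeping in matching the integral-correction term from Lemma~\ref{lem:MonodromyPhaseGeneralStatement} against the endpoint-shift of the contour defining $d_{i,\ell}$: one must confirm that the path $\int_{\lambda_0}^{u^i}\cW_{\pi_i a, e_i}\,d\lambda$ produced by the monodromy lemma is homotopic (in the cut $\lambda$-plane, equivalently as a $p$-contour) to the segment of the disk boundary by which the two prefactors differ, including all branch choices and orientations. Getting the signs, the phases $\pi\i\langle a,e_i\rangle^2$, and the winding count for $\gamma_N$ all mutually consistent is the delicate part; the rest is a controlled residue and homotopy computation. I would carry this out by reducing everything to explicit contour integrals of $\cW_{\frac2N v_i,\frac2N v_i}$ in the $p$-plane and exploiting the symmetry $c_i\exp(2\pi\i\i/N\cdots)$ structure dictated by the $\Z/N$ rotational symmetry of the special point $t_{sp}$.
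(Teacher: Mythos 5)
Your plan is a genuinely different route from the paper's, but it leaves the central computation undone, and that computation is precisely what makes the lemma true. The paper's proof does not invoke Lemma~\ref{lem:MonodromyPhaseGeneralStatement} at all. Instead it evaluates the integrand appearing in both $d_{i,\ell}$ and the $\lambda$-integral explicitly: using $\sum_{j=0}^{N-1}p_j(\lambda,t)=a_1=t^1$ (which forces $\sum_j (I^{(0)}_{p})^\alpha|_{p=p_j(\lambda)}=-\tfrac12\delta^\alpha_N$), one finds that $\bigl(I^{(0)}_{2p-\frac2N\sum_j p_j},I^{(0)}_{2p-\frac2N\sum_j p_j}\bigr)\,d\lambda = d\log\bigl(\tfrac{d\lambda}{dp}\bigr)+\tfrac2N\tfrac{dp}{p-t^1}$, i.e.\ Equation~\eqref{eq:proofLemmacoeff-cW}. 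Once the integrand is recognized as this explicit closed $1$-form on the $p$-plane, everything in the lemma reduces to elementary residues: the extra loops give $\oint d\log(d\lambda/dp)\in 2\pi\i\Z$ and $\oint \tfrac{dp}{p-t^1}=2\pi\i$, which yields part~(2); the residues at the critical points of $\lambda(p)$ are integers, which gives the reverse swaps in part~(1); and all remaining monodromy actions are absorbed by contour deformations that cross no poles, which gives triviality. Your proposal never identifies this form of the integrand, and this is also the actual explanation of the factor $2/N$ that the paper promises to clarify.

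Concretely, the gap in your route is the step you yourself flag as ``the delicate part'': matching the correction terms $4\langle a,e_i\rangle\int_{\lambda_0}^{u^i}\cW_{\pi_i a,e_i}\,d\rho+\pi\i\langle a,e_i\rangle^2$ produced by Lemma~\ref{lem:MonodromyPhaseGeneralStatement} against the ratio $d_{i-1,\ell}/d_{i,\ell}$. That ratio is an integral of the one-point-cycle pairing along an arc of the disk boundary in the $p$-plane, and it cannot be evaluated (nor compared with the $\lambda$-plane integral $\int_{\lambda_0}^{u^i}\cW_{\pi_i a,e_i}\,d\rho$, which involves a different pairing) without something equivalent to~\eqref{eq:proofLemmacoeff-cW}; note also that Lemma~\ref{lem:MonodromyPhaseGeneralStatement} applies to solutions of the Fuchsian system, whereas the one-point-cycle periods entering $d_{i,\ell}$ are not such solutions. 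Two further points: your appeal to $(I^{(-1)}_w)^\alpha=\pi\i\,\delta^\alpha_1$ in part~(2) is not the relevant input (what is needed is $\sum_j p_j=t^1$ at the level of $I^{(0)}$), and your plan does not address why the actions of $\gamma_j$ on $c_{i,\ell}$ for $j\neq i,i+1$ (and of $\gamma_N$ on $c_{i,\ell}$ for $i\neq 0,N-1$) are trivial, which again follows only from the pole structure of the explicit $1$-form. So the approach as written would not close without rediscovering the paper's key identity.
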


\begin{proof} 
The fact that $\gamma_{i+1} c_{i,\ell} = c_{i+1,\ell}$, $i=0,\dots,N-2$, and $\gamma_{N} c_{N-1,\ell} = c_{0,\ell+1}$ follows directly from the definition.
	

Since $\lambda = p^{N-1}+ a_2p^{N-3} + \cdots +a_{N-1} + a_N/(p-a_1)$, we have $\sum_{i=0}^{N-1} p_i(\lambda, t) = a_1 = t^1$, and, therefore,
\begin{align}
\sum_{i=0}^{N-1}	\left(I^{(0)}_{p}(\lambda,t)\right)^\alpha\Big|_{p=p_i(\lambda)} = -\frac 12 \delta^{\alpha}_N.
\end{align}
Thus, 
\begin{align}\label{eq:proofLemmacoeff-cW}
	 \left(I_{2p-\frac 2N\sum_{j=1}^N p_j(\lambda(p))}^{(0)}(t,\lambda) , I_{2p-\frac 2N\sum_{j=1}^N p_j(\lambda(p))}^{(0)}(t,\lambda)  \right) d\lambda & = 
	\frac{\partial p}{\partial t^\alpha} \eta^{\alpha\beta} \frac{\partial p}{\partial t^\beta} d\lambda -\frac{2}N \frac{\partial p}{\partial t^N}d\lambda
	\\ \notag 
	& = d \log\Big(\frac{d\lambda}{dp}\Big) + \frac2N \frac{dp}{p-t^1}
\end{align}
(the second equality here is an explicit computation at $t_{sp}$).

Since $\oint d \log(d f/dp) \in 2\pi\i\Z$ for any closed contour in the $p$-plane and $\oint dp/(p-t^1)=2\pi\i$ for any closed contour that bounds a disk containing the finite pole of $f$ in the $p$-plane, we see that indeed $c_{i,\ell} = c_{i,0} \exp(-\frac 2N 2\pi\i\ell)$, $i=0,\dots,N-1$.


The assertion that $\gamma_{i+1}c_{i+1,\ell} = c_{i,\ell}$, $i=0,\dots,N-1$, and $\gamma_{N} c_{0,\ell} = c_{N-1,\ell-1}$, follows from the fact that the residues of~\eqref{eq:proofLemmacoeff-cW} at the critical points of $\lambda(p,t)$ in the $p$-plane are integers. Finally, all other actions of $\gamma_i$ on $ c_{j,\ell}$ are trivial since they are absorbed by the deformations of the integration contours that do not cross the poles of~\eqref{eq:proofLemmacoeff-cW}.
\end{proof}

\subsection{The vertex operators} 
\label{sec:VertexOperators}

For a vector $a \in \C^N=\langle e_1,\dots,e_N \rangle$ define
\begin{equation}
	\cf_a(t,\lambda,z) \coloneqq \sum_{l\in\Z}  I^{(l)}_a(t,\lambda) (-z)^l .
\end{equation}
Let $\Gamma^a$ denote the vertex operator defined as the exponential of the quantization of linear Hamiltonian $h_{\cf_a}$ of $\cf_a$:
\begin{equation}
\label{eq:defquant}
\Gamma^a \coloneqq e^{\widehat{\cf_a}}=e^{\widehat{(\cf_a)_-}}e^{\widehat{(\cf_a)_+}}.	
\end{equation}

\subsection{Asymptotics at \texorpdfstring{$\lambda \sim \infty$}{lambda sim infty}} 

Let $\Log\lambda$ be the main branch of the logarithm defined for the cut $\R_+$ in the $\lambda$-plane. Define $I_{e_1+\cdots+e_i,\infty}^{(0)}$, $i=1,\dots,N-1$ as
\begin{align}
	(I_{e_1+\cdots+e_i,\infty}^{(0)})^\alpha & = \frac 12\lambda^{(\alpha-N)/(N-1)}, & \alpha=2,\dots,N-1,
	\\ \notag
	(I_{e_1+\cdots+e_i,\infty}^{(0)})^1& =\frac{N}{2(N-1)}\lambda^{-1}, &
	\\ \notag
	(I_{e_1+\cdots+e_i,\infty}^{(0)})^N& =\frac 12, &
\end{align}
and $I^{(0)}_{w,\infty}=0$, where we recall that $w=\sum_{i=1}^Ne_i$.
Here the index $i$ of $e_i$ directs the choice of the branch of the logarithm:
\begin{align} \label{eq:ChoiceOfBranchOfLog}
	\log \lambda & = \Log \lambda + 2\pi\I (i-1), & \lambda^{1/(N-1)} = \exp\Big(\frac 1{N-1}\log \lambda\Big).
\end{align}

Let us now define, for $\ell\in\Z$ and $i=1,\dots,N$, 
\[
I^{(\ell)}_{e_i,\infty} := (\partial_\lambda)^{\ell } I^{(0)}_{e_i,\infty},
\]
where, for $\ell<0$, we use the following convention for the formal integration without constants:
\begin{equation}
	(\partial_\lambda^{-m})\lambda^{-1} = \frac 1{m!} \lambda^m (\log \lambda - \ch(m)),\qquad m\geq 0,
\end{equation}
and $(I_{w,\infty}^{(-1)})^\alpha = \pi\I\delta^{\alpha}_{1}$, so that we have
\begin{equation*}
(I_{w,\infty}^{(\ell)})^\alpha = \pi\I\delta^{\alpha}_{1} \frac{\lambda^\ell}{\ell!}, \quad \ell \geq 0. 
\end{equation*}
We also define
\[
\cf_{e_i,\infty}(\lambda,z) := \sum_{\ell\in\Z}  I^{(\ell)}_{e_i,\infty} (-z)^{\ell}.
\]
Let $\Gamma^v_\infty$, for $v\in \C^N$, denote the vertex operator defined as  quantisation of $\cf_{a,\infty}$ as above:
\begin{align}
	\Gamma^v_\infty \coloneqq e^{\widehat{\cf_{v,\infty}}}=e^{\widehat{(\cf_{v,\infty})_-}}e^{\widehat{(\cf_{v,\infty})_+}}.	
\end{align}
In particular, note that $	\Gamma^w_\infty = e^{\widehat{(\cf_{w,\infty})_-}}$.

\begin{lemma} 
\label{lem:Asymptotic-S}
	For $v\in\C^N$, the asymptotic expansion of $\cf_{v}$  for $|\lambda| \sim \infty$, $\arg \lambda \not=0$  is given by
	\begin{equation} \label{fSf}
		\cf_{v} (t, \lambda, z) \sim S(t, z)\ \cf_{v,\infty} (\lambda,z).
	\end{equation}
\end{lemma}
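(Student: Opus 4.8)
The plan is to deduce the asymptotic identity from the fact that both sides are horizontal sections of the deformed flat connection in the flat ($t$) directions, so that they can only differ by a $t$-independent factor, and then to pin down that factor by a direct expansion of the period integrals at $\lambda=\infty$. By linearity in $v$ it suffices to treat the spanning vectors $e_1+\cdots+e_i$ and $w=\sum_i e_i$, for which $\cf_{v,\infty}$ is defined explicitly. First I would record that the period-vector generating series satisfies $z\,\partial_\alpha\cf_v=(\partial_\alpha\bullet)\,\cf_v$, where $\partial_\alpha\bullet=C_\alpha$ is Frobenius multiplication; this is just the period recursion $\partial_\alpha I^{(\ell)}_v=-C_\alpha I^{(\ell+1)}_v$ read off by differentiating the residue formula \eqref{eq:period1cycle} and using associativity, i.e.\ the fact (quoted in the excerpt) that the $I^{(\ell)}_{e_i}$ solve Dubrovin's generalized Fuchsian system, cf.\ \cite{dub-Painleve}. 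Dually, from the Liu--Zhang--Zhou calibration recursion $\partial_\gamma\partial_\beta\theta_{\alpha,m+1}=c^\delta_{\gamma\beta}\partial_\delta\theta_{\alpha,m}$ together with the definition \eqref{eq:S-matrix-general-def}, the $S$-matrix $S(t,z)=\sum_{m\ge0}S_m z^m$ is the fundamental solution of the \emph{same} system, $z\,\partial_\alpha S=C_\alpha S$ with $S_0=\mathrm{Id}$. Since $\cf_{v,\infty}$ depends only on $(\lambda,z)$ and not on $t$, the product $S\,\cf_{v,\infty}$ is horizontal as well.

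Next I would set $\Phi_v:=S^{-1}\cf_v$ and check, using the two displayed equations and $\partial_\alpha S^{-1}=-z^{-1}S^{-1}C_\alpha$, that $\partial_\alpha\Phi_v=0$; hence $\Phi_v$ is a function of $(\lambda,z)$ alone and $\cf_v=S(t,z)\,\Phi_v(\lambda,z)$. This reduces the lemma to the purely $t$-independent statement $\Phi_v(\lambda,z)\sim\cf_{v,\infty}(\lambda,z)$ as $\lambda\to\infty$, which may be verified at any convenient point, say $t_{sp}$. The asymptotic (rather than exact) nature of the claim enters precisely here: the coefficient of each power of $z$ in $\Phi_v$ is a transcendental function of $\lambda$ with a full expansion in decreasing powers of $\lambda^{1/(N-1)}$, and the subleading terms are reorganized into higher-$z$ contributions of $S$ acting on the lower-$z$ terms of $\cf_{v,\infty}$.

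To identify the leading symbol I would compare both objects as formal solutions of the second structure connection in $\lambda$ at $\lambda=\infty$, whose exponents are the eigenvalues of $-\mu-\tfrac12$. A quick check confirms the normalization built into $\cf_{v,\infty}$: the exponents $(\alpha-N)/(N-1)$ appearing in $(I^{(0)}_{\bullet,\infty})^\alpha=\tfrac12\lambda^{(\alpha-N)/(N-1)}$ equal $-\mu_{\alpha\alpha}-\tfrac12$, with $\alpha=N$ giving exponent $0$ (coefficient $\tfrac12$) and the resonant value $\alpha=1$ giving the integer exponent $-1$ (coefficient $\tfrac{N}{2(N-1)}$, reflecting the $\eta$-normalization \eqref{eq:Eta-Definition}). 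Thus $\cf_{v,\infty}$ is exactly the $\lambda^{-\mu-\frac12}z^{R}$-type formal fundamental solution, the logarithmic terms $\lambda^m(\log\lambda-\ch(m))$ arising from the nilpotent part in the resonant sectors. Matching $\Phi_v$ to it then amounts to a Laplace/residue expansion of the integral \eqref{eq:period1cycle} near $\lambda=\infty$ to extract its leading behaviour, fixing the constants and selecting the branch of the logarithm according to \eqref{eq:ChoiceOfBranchOfLog}; this is the straightforward generalization of the $N=2$ computation in \cite{carletHigherGeneraCatalan2021}.

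The hard part will be the resonant $w$-sector, i.e.\ the logarithmic asymptotics tied to the calibration component $\theta_{N,m}$. There one must show that the harmonic-number shifts $\ch(m)$ in the formal-integration convention $(\partial_\lambda^{-m})\lambda^{-1}=\tfrac1{m!}\lambda^m(\log\lambda-\ch(m))$ and the $i$-dependent branch shifts $2\pi\I(i-1)$ line up \emph{on the nose} with the $S$-matrix entries in the $\beta=N$ column, whose definition \eqref{eq:S-matrix-general-def} is built from $\widetilde\log f$ expanded at the simple pole $p=a_1$. Equivalently, one checks that $S\,z^{\mu}z^{R}$ solves the $\lambda$-structure connection so that conjugation by $S$ turns the period $\lambda$-equation into the constant-coefficient equation solved by $\cf_{v,\infty}$. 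Everything else is the horizontality argument plus a routine saddle-point expansion; it is the exact alignment of the logarithmic normalizations (not merely up to an undetermined additive constant) that requires care, and which is why the special Liu--Zhang--Zhou calibration, rather than an arbitrary one, is used.
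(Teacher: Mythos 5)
Your reduction via horizontality is sound and is a genuinely different organization from the paper's: since $z\,\partial_\alpha\cf_v=C_\alpha\cf_v$ and $z\,\partial_\alpha S=C_\alpha S$, the quantity $S^{-1}\cf_v$ is indeed $t$-independent, so the lemma reduces to a single identification at $\lambda\sim\infty$. The paper instead keeps $t$ as a parameter and proves the $z^{\geq 0}$ part of \eqref{fSf} directly, in the equivalent form $-\tfrac12\bigl(p_i(\lambda,t)-p_0(\lambda,t)\bigr)\sim\sum_{k}(-1)^k\theta_{\beta,k}\,\partial_\lambda^k(I^{(0)}_{e_1+\cdots+e_i,\infty})^\beta$, by applying the Lagrange--B\"urmann inversion formula to the two branches $p_i\to\infty$ and $p_0\to t^1$ of the inverse superpotential and recognizing the resulting coefficients as exactly the residues defining $\theta_{\beta,m}$ in the Liu--Zhang--Zhou calibration.

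However, your proposal stops short of the actual content of the lemma. First, the step you defer as ``a routine saddle-point expansion'' \emph{is} the proof: the identification of the $\lambda\to\infty$ coefficients of the period integrals with the specific residue formulas \eqref{eq:S-matrix-general-def} is exactly what must be computed, and the horizontality argument does not make it any easier. Second, your proposed shortcut --- match only the leading symbol and conclude by comparing formal solutions of the second structure connection in $\lambda$ --- requires the additional input that $S$ conjugates that connection into its normal form at $\lambda=\infty$, i.e.\ a homogeneity property of the LZZ calibration that you neither state nor verify; the paper's direct computation never invokes the $\lambda$-equation at all. You also mislocate the difficulty: in the $z^{\geq0}$ part the $\beta=N$ (logarithmic) column of $S$ enters only through $\theta_{N,0}=t^1$, because $(I^{(0)}_{e_1+\cdots+e_i,\infty})^N=\tfrac12$ is constant and is annihilated by $\partial_\lambda^k$ for $k\geq1$; the real work is the Lagrange--B\"urmann matching for $\beta=1,\dots,N-1$ together with the branch choice \eqref{eq:ChoiceOfBranchOfLog} distinguishing the sectors $i$, while the alignment of the $\ch(m)$ conventions you single out concerns only the extension to negative powers of $z$, obtained by integrating the $\ell=0$ relation.
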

\begin{proof}
It is sufficient to prove that 
	\begin{equation} \label{expell}
		I^{(0)}_{e_1+\cdots+e_i}(t, \lambda) \sim \sum_{k=0}^\infty (-1)^k S_k I^{(k)}_{e_1+\cdots+e_i,\infty }, \quad i=1,\dots,N-1,
	\end{equation}
and it is sufficient to do this in a neighborhood of the special point $t_{sp}$. 
In this formula, the right hand side is a function
on the $\lambda$-plane with a branch cut along $\R_+$. For the left hand side we deform the cuts $\tilde{\ell}_i,~i=0,\ldots,N-1$ in such a way that they all become asymptotic in the upper half plane to $\R_+$ as $\lambda\to\infty$ in decreasing order $\tilde{\ell}_1,\ldots,\tilde{\ell}_{N}$ in ${\rm Im}(\lambda)$.

Equation~\eqref{expell} is equivalent to 
\begin{align}
	-\frac 12 \eta^{\alpha\beta} \frac{\partial}{\partial t^\beta} \left(p_i(\lambda,t)-p_0(\lambda,t)\right)
	\sim \sum_{k=0}^\infty (-1)^k (S_k)^\alpha_\beta \partial_\lambda^k (I^{(0)}_{e_1+\cdots+e_i,\infty })^\beta,
\end{align}
where by $p_j$, $j=0,\dots,N-1$, we mean the point that is in the preimage of $\lambda=f(p,t)$ in the disk $D_j$ (here we use that $t$ is close to $t_{sp}$ and we use the deformation of the cuts chosen in Section~\ref{sec:cuts}).

Recall the definition of $S$ given in Equation~\eqref{eq:S-matrix-general-def}, Section~\ref{sec:Calibration}. With this definition, it is sufficient to prove that 
 \begin{align}
  	-\frac 12 \left(p_i(\lambda,t)-p_0(\lambda,t)\right)
  	\sim \sum_{k=0}^\infty (-1)^k \theta_{\beta,k} \partial_\lambda^k (I^{(0)}_{e_1+\cdots+e_i,\infty })^\beta,
 \end{align}
Note that $p_i\to \infty$ for $\lambda\to \infty$, and in the corresponding sector we have (take into account the choice of the branch of $\log\lambda$ given in~\eqref{eq:ChoiceOfBranchOfLog})
\begin{align}
	{\lambda^{-\frac{1}{N-1}}} = \frac{p_i^{-1}}{\left(f(p_i,t)/p_i^{N-1}\right)^{\frac{1}{N-1}}}.
\end{align}
The Lagrange-B\"urmann formula implies that
\begin{align}
	p_i & = \sum_{n=1}^\infty \lambda^{-\frac{n}{N-1}} \res_{p=\infty}  \frac 1n f(p,t)^{\frac{n}{N-1}}dp \\ \notag
	& = \sum_{\beta=2}^{N-1} \sum_{k=0}^\infty \frac{(-\partial_\lambda)^k \lambda^{\frac{\beta-N}{N-1}} }{\prod_{i=0}^{k-1}(i-\frac{\beta-N}{N-1})} \frac 1n \res_{p=\infty} \frac{ f(p,t)^{k-\frac{\beta-N}{N-1}}}{(N-1)(k-\frac{\beta-N}{N-1})}dp
	\\ \notag 
	& \quad + \sum_{k=0}^\infty \frac{(-\partial_\lambda)^k \lambda^{-1} }{(k-1)!}  \res_{p=\infty} \frac{ f(p,t)^{k}}{(N-1)k}dp
	\\ \notag
	& = -2\sum_{\beta=2}^{N-1} \sum_{k=0}^\infty (-1)^k \theta_{\beta,k} \partial_\lambda^k (I^{(0)}_{e_1+\cdots+e_i,\infty })^\beta - \frac 2N \sum_{k=0}^\infty (-1)^k \theta_{1,k} \partial_\lambda^k (I^{(0)}_{e_1+\cdots+e_i,\infty })^1
\end{align}
(and one can formally add $0\cdot \sum_{k=0}^\infty (-1)^k \theta_{N,k} \partial_\lambda^k (I^{(0)}_{e_1+\cdots+e_i,\infty })^N$ to the latter expression). On the other hand, with $p_0\to t^1$ for $\lambda\to \infty$, we have:
\begin{align}
	\lambda^{-1} = \frac{p_0-t^1}{(p_0-t^1)f(p_0,t)}.
\end{align}
In this case, the  Lagrange-B\"urmann formula implies 
\begin{align}
p_0 -t^1 & = \sum_{n=1}^\infty \lambda^{-n} \res_{p=t^1} \frac 1n ((p_0-t^1)f(p_0,t))^n \frac{d(p-t^1)}{(p-t^1)^n} \\
& = - \sum_{k=0}^\infty \frac{(-\partial_\lambda)^k \lambda^{-1}}{(k-1)!} \res_{p=\infty} \frac 1k f(p_0,t)^k dp  \\
& = \frac{2(N-1)}{N} \sum_{k=0}^\infty (-1)^k \theta_{1,k} \partial_\lambda^k (I^{(0)}_{e_1+\cdots+e_i,\infty })^1.
\end{align}
Note also that $\theta_{N,0} = t^1$, hence $t^1 = 2 \theta_{N,0} (I^{(0)}_{e_1+\cdots+e_i,\infty })^N = 2 \sum_{k=0}^\infty (-1)^k \theta_{N,k} \partial_\lambda^k (I^{(0)}_{e_1+\cdots+e_i,\infty })^N$ for any $i=1,\dots,N-1$. 

Combining these computations, we obtain
\begin{align}
	-\frac 12 (p_i-p_0) = \sum_{k=0}^\infty (-1)^k \theta_{\beta,k} \partial_\lambda^k (I^{(0)}_{e_1+\cdots+e_i,\infty })^\beta, \quad i=1,\dots,N-1
\end{align}
in the corresponding sector of expansion.
\end{proof}

\subsection{Monodromy at \texorpdfstring{$\infty$}{infty}} 
\label{sec:asymptotics-and-monodromy-infty}

For future reference, it is useful to collect together the explicit formulas for $\cf_{ \frac 2N v_i,\infty}$ and the action of the monodromy around $\lambda=\infty$ on them. For $i=0$ we have:
\begin{align}\label{eq:finfty-0}
	(\cf_{ \frac 2N v_0,\infty})^1 & = \sum_{\ell\in\Z} (-z)^\ell \partial_\lambda^\ell \Big(-\partial_\lambda\Log \lambda\Big),
	\\ \notag  
	(\cf_{ \frac 2N v_0,\infty})^\alpha & = 0, & \alpha = 2,\dots,N-1,
	\\ \notag 
	(\cf_{ \frac 2N v_0,\infty})^N & =  \sum_{\ell\in\Z} (-z)^\ell \partial_\lambda^\ell \Big(-\frac 12\Big),
\end{align}
and for $i=1,\dots,N-1$ we have:
\begin{align}\label{eq:finfty-i}
	(\cf_{ \frac 2N v_i,\infty})^1 & = \sum_{\ell\in\Z} (-z)^\ell \partial_\lambda^\ell \Big(\frac{1}{N-1}\partial_\lambda\big(\Log \lambda+2\pi\i(i-1)\big)\Big),
	\\ \notag  
	(\cf_{ \frac 2N v_i,\infty})^\alpha & = \sum_{\ell\in\Z} (-z)^\ell \partial_\lambda^\ell \Big(\lambda^{\frac{\alpha-N}{N-1}} \exp\Big({2\pi\i(i-1)\frac{\alpha-N}{N-1}}\Big)\Big), & \alpha = 2,\dots,N-1,
	\\ \notag 
	(\cf_{ \frac 2N v_i,\infty})^N & =  \sum_{\ell\in\Z} (-z)^\ell \partial_\lambda^\ell \Big(\frac 1N\Big).
\end{align}
Let $\gamma_\infty=\gamma_1\cdots\gamma_N$ denote the monodromy along a contour around $\lambda=\infty$ oriented in the counterclockwise direction. The action of $\gamma_\infty$ on $\cf_{ \frac 2N v_i,\infty}$ is given by
\begin{align}
	(\gamma_\infty\cf_{ \frac 2N v_0,\infty})^1 & = (\cf_{ \frac 2N v_0,\infty})^1 -2\pi\i \sum_{\ell\in\Z} (-z)^\ell \partial_\lambda^{\ell+1} 1,
	\\ \notag  
	(\gamma_\infty\cf_{ \frac 2N v_0,\infty})^\alpha & = (\cf_{ \frac 2N v_0,\infty})^\alpha, & \alpha = 2,\dots,N,
\end{align}
or, in other words,
\begin{align}
	\gamma_\infty\cf_{ \frac 2N v_0,\infty} = \cf_{ \frac 2N v_0,\infty}  - 2 \cf_{w,\infty}.
\end{align}
For $i=1,\dots,N-2$ we have:
\begin{align}
	\gamma_\infty \cf_{ \frac 2N v_i,\infty} & = \cf_{ \frac 2N v_{i+1},\infty},
\end{align}
and finally 
\begin{align}
	(\gamma_\infty\cf_{ \frac 2N v_{N-1},\infty})^1 & = (\cf_{ \frac 2N v_1,\infty})^1 + 2\pi\i \sum_{\ell\in\Z} (-z)^\ell \partial_\lambda^{\ell+1} 1,
	\\ \notag  
	(\gamma_\infty\cf_{ \frac 2N v_{N-1},\infty})^\alpha & = (\cf_{ \frac 2N v_1,\infty})^\alpha, & \alpha = 2,\dots,N,
\end{align}
or, in other words,
\begin{align}
	\gamma_\infty\cf_{ \frac 2N v_{N-1},\infty} = \cf_{ \frac 2N v_1,\infty}  + 2 \cf_{w,\infty}.
\end{align}

\subsection{Conjugation by \texorpdfstring{$S$}{S}} 
\label{sec:conjugacybyS}

Let us now consider the conjugation of the vertex operator by the $S$ action of the Givental group.  
We define $\cW_{a,b}^{\infty} = ( I_{a,\infty}^{(0)}(t,\lambda) , I_{b,\infty}^{(0)}(t,\lambda) )$, $a,b\in \C^N$. Notice that 
\begin{align}
	\cW_{\frac 2N v_0,\frac 2N v_0}^{\infty} &=  \Big(2-\frac 2N\Big) \frac 1\lambda, \\
	\cW_{\frac 2N v_i,\frac 2N v_i}^{\infty} &=  \Big(1+\frac 1{N-1}-\frac 2N\Big) \frac 1\lambda, \qquad i =1,\dots,N-1.
\end{align}

For our choice of calibration, we have the following result:
\begin{proposition} 
\label{prop:conjugationS}
For $a\in\C^N$, and $t=t_{sp}$, we have 
\begin{equation}
		\hS \Gamma_\infty^{a} \hS^{-1} = e^{\frac12 \int_\lambda^\infty  \left( \cW_{a,a} - \cW_{a,a}^\infty \right) d\rho} \Gamma^a .
	\end{equation}
\end{proposition}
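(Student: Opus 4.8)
The plan is to prove the conjugation formula by a standard application of the quantization dictionary, reducing the statement about operators $\hS \Gamma_\infty^a \hS^{-1}$ to a statement about the underlying linear Hamiltonians, with the scalar prefactor arising from the cocycle correction to quantization. Recall that Givental quantization sends a linear symplectic transformation $S$ to an operator $\hS$ such that conjugation of a vertex operator $\Gamma^a = \exp(\widehat{\cf_a})$ yields $\hS e^{\widehat{\cf}} \hS^{-1} = e^{\widehat{S\cf}} \cdot (\text{scalar})$, where the scalar is determined by the failure of $\widehat{\cdot}$ to be a strict homomorphism. The first step is therefore to identify $S \cf_{a,\infty}$ with $\cf_a$ up to the discrepancy controlled by Lemma~\ref{lem:Asymptotic-S}. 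Indeed, Equation~\eqref{fSf} of that lemma states precisely that $\cf_a(t,\lambda,z) \sim S(t,z)\,\cf_{a,\infty}(\lambda,z)$, so at the level of linear Hamiltonians the conjugation $\hS \Gamma_\infty^a \hS^{-1}$ should produce $\Gamma^a$ on the nose, leaving only the scalar factor to be computed.

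Second, I would pin down the scalar prefactor. For Givental quantization the relevant identity is $\widehat{f}\,\widehat{g} = \widehat{fg} + \tfrac12\{f,g\}_{\text{cocycle}}$, and conjugating a vertex operator by $\hS$ introduces exactly the term $\exp\bigl(\tfrac12\,\Omega(\cf_{a,\infty}, S^{-1}\text{-correction})\bigr)$ where $\Omega$ is the relevant symplectic pairing. The key computation is to recognize that this cocycle term is expressible through the functions $\cW_{a,a}$ and $\cW_{a,a}^\infty$ defined just before the proposition, since $\cW_{a,b} = (I_a^{(0)}, I_b^{(0)})$ is the natural bilinear pairing of period vectors that enters the Poisson bracket of the Hamiltonians $h_{\cf_a}$. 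Concretely, I expect the scalar to be $\exp\bigl(\tfrac12 \int_\lambda^\infty (\cW_{a,a} - \cW_{a,a}^\infty)\,d\rho\bigr)$, where the subtraction of $\cW_{a,a}^\infty$ is precisely the regularization needed to make the integral from $\lambda$ to $\infty$ convergent, given that both $\cW_{a,a}$ and $\cW_{a,a}^\infty$ have the same $1/\lambda$ asymptotics (as witnessed by the explicit expansions $\cW_{\frac2N v_i, \frac2N v_i}^\infty = (\cdots)\tfrac1\lambda$ recorded above).

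Third, I would make the cocycle computation rigorous by working order by order in $z$. Writing $\cf_a = \sum_l I_a^{(l)}(-z)^l$ and similarly for $\cf_{a,\infty}$, the symplectic pairing that controls the conjugation scalar is a sum of products $I_a^{(l)} \cdot I_a^{(m)}$ (contracted with $\eta$) over complementary degrees, which is exactly how $\cW_{a,a} = (I_a^{(0)}, I_a^{(0)})$ and its $\lambda$-antiderivatives appear after using $I_a^{(\ell)} = \partial_\lambda^{\ell+1} I_a^{(-1)}$. The integration $\int_\lambda^\infty d\rho$ then arises as the antiderivative matching the $\partial_\lambda^{-1}$ conventions fixed in the definition of the period vectors. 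I would verify that the asymptotic relation in Lemma~\ref{lem:Asymptotic-S} is compatible term-by-term so that the nonscalar part of $\hS \Gamma_\infty^a \hS^{-1}$ collapses exactly to $\Gamma^a$.

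The main obstacle I anticipate is the careful bookkeeping of the cocycle/normal-ordering term and the convergence of the regularized integral. The quantization $\widehat{\cf_a} = \widehat{(\cf_a)_-} + \widehat{(\cf_a)_+}$ splits into creation and annihilation parts, and conjugation by $\hS$ mixes these in a way that the scalar is the commutator-type anomaly; getting the sign and the factor of $\tfrac12$ correct, and confirming that the difference $\cW_{a,a} - \cW_{a,a}^\infty$ is genuinely integrable on $[\lambda,\infty)$ (rather than merely formally so), is where the real work lies. A secondary subtlety is that the asymptotic equivalence in Lemma~\ref{lem:Asymptotic-S} is only valid for $\arg\lambda \neq 0$ and requires the deformation of cuts described there; I would need to ensure that the conjugation identity, being an identity of formal vertex operators at $t = t_{sp}$, is insensitive to this choice once both sides are interpreted as asymptotic series in the appropriate sector.
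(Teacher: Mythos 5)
Your plan follows essentially the same route as the paper: the Baker--Campbell--Hausdorff cocycle formula $\hS\,\widehat{e^{\cf}}\,\hS^{-1}=e^{\frac12 W(\cf_+,\cf_+)}\,\widehat{e^{S\cf}}$, identification of $S\cf_{a,\infty}$ with $\cf_a$ via Lemma~\ref{lem:Asymptotic-S}, and recognition of the phase through $\frac{d}{d\lambda}W\bigl((\cf_{a,\infty})_+,(\cf_{a,\infty})_+\bigr)=-\cW_{a,a}+\cW^{\infty}_{a,a}$. The one concrete step your outline leaves out is fixing the constant of integration: knowing the $\lambda$-derivative of the phase only determines it up to a $\lambda$-independent (but $a$-dependent) constant, and the paper pins this down by computing $\lim_{\lambda\to\infty}W=(W_{00}I^{(0)}_{a,\infty},I^{(0)}_{a,\infty})=(S_1)^1_N\,(I^{(0)}_a)^N(I^{(0)}_a)^N$ and observing that $(S_1)^1_N=0$ for the Liu--Zhang--Zhou calibration. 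This is precisely where the hypothesis ``for our choice of calibration'' enters; without verifying it, the stated identity could be off by a nontrivial multiplicative factor, so you should add this boundary-value check to complete the argument.
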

\begin{proof}
It follows from the Baker-Campbell-Hausdorff formula that, for $\cf = \sum_l \cf_l z^l$ in the loop space $\cV$ and $S(z)$ in the twisted loop group, we have
	\begin{equation}\label{eq:ConjugationByS}
		\hat{S} \widehat{e^{\cf}} \hat{S}^{-1} = e^{\frac12 W(\cf_+,\cf_+)} \widehat{e^{S \cf}},
	\end{equation}
	where 
	\begin{equation}
		W(\cf_+,\cf_+)= \sum_{k,l\geq 0} (W_{k,l} \cf_l,\cf_k ),
	\end{equation}
	and the coefficients $W_{k,l}\in \End(V)$ are defined by the generating formula
	\begin{equation} \label{defW}
		\sum_{k,l\geq 0} W_{k,l} w^{-k} z^{-l} = \frac{S^*(w) S(z) -1}{w^{-1} + z^{-1} }.
	\end{equation}
	
	We need therefore to evaluate the phase factor $W((\cf_{a,\infty})_+, (\cf_{a,\infty})_+)$. It follows from the symplectic properties of $S$ and Lemma~\ref{lem:Asymptotic-S} that for any $a,b \in \C^N$
	\begin{equation} \label{derlam}
		\frac{d}{d\lambda}W((\cf_{a,\infty})_+, (\cf_{b,\infty})_+)  = - \cW_{a,b} + \cW_{a,b}^\infty.
	\end{equation}
The right-hand side is a formal power series in powers of $\lambda^{-\frac1{N-1}}$ with leading order $\lambda^{-1 -\frac1{N-1}}$, so it can be integrated to a formal power series of the same type. The integration constant is  obtained by computing the $\lambda \to \infty$ limit in 
\begin{equation}
W((\cf_{a,\infty})_+, (\cf_{b,\infty})_+)  = \sum_{k,l\geq 0} (-1)^{k+l} (W_{kl} I_{a,\infty}^{(k)}, I_{b,\infty}^{(l)} ),
\end{equation}
which is equal to 
\begin{equation}
\lim_{\lambda \to \infty} (W_{00} I_{a,\infty}^{(0)}, I_{b,\infty}^{(0)} ) = (S_1)^1_N (I_a^{(0)})^N (I_b^{(0)})^N.
\end{equation}
From the properties of $S$ and the asymptotics of the periods we get 
\begin{equation}
(S_1)^1_N = 0 , \qquad 
(I_a^{(0)})^N =\frac12 (a_1 - a_{N}).
\end{equation}
\end{proof}

\section{The Hirota quadratic equations} 
\label{sec:HirotaAncestor}

In this section we  define the ancestor Hirota quadratic equations and prove that the ancestor potential $\cA$ satisfies them.  
Then, we prove the descendent Hirota quadratic equations for the descendant  potential $\cD$. 
In both proofs we assume the expansions at the special point $t_{sp}$.

\subsection{Definition of Hirota quadratic equations for the ancestor potential}  
\label{sec:def-hirota-ancestor}

Recall that the total ancestor potential $\cA$, defined in \eqref{eq-ancestor}, is a formal power series in the variables $q^\alpha_\ell+\delta^{a}_{N-1}\delta^1_\ell$ for $\alpha=1,\dots,N$ and $\ell \geq0$ whose coefficients are Laurent series in $\epsilon$, and which depends 
analytically on the point of $M$. 

In the infinite orbit $\overline O$ of the monodromy group defined in the beginning of Section \ref{sec:OrbitCoefficients} we choose 
a finite subset $O\coloneqq \{\frac 2N v_i  | i=0,\dots,N-1\}$, and we associate the functions $c_{i}=c_{i,0}$ defined in Section~\ref{sec:OrbitCoefficients} to the vectors in $O$. 

Recall that $(I^{(-1)}_w)^\alpha=\pi\i \delta^{\alpha}_{1}$, hence $(I^{(-1)}_w)_\alpha=\pi\i \delta_{\alpha,N}$ (where $w=\sum_{i=1}^N e_i$ and we use the scalar product to lower the index). Define 
\begin{equation}
	\cN = \exp\left( - \sum_{(j,\ell)\not= (N,0)}  \frac{(I^{(-\ell-1)}_{w})_j q^j_\ell }{(I^{(-1)}_{w})_N}  \frac{\partial }{\partial q^N_0}  \right)=\exp\left( -\sum_{\substack{\ell\geq 1\\ j=1,\dots,N}}  \frac{(I^{(-\ell-1)}_{w})_j q^j_\ell }{\pi\I}  \frac{\partial }{\partial q^N_0}  \right).
\end{equation}

\begin{lemma} \label{lem:SingleValued}
	If  $ q^N_0-\bar{q}^N_0 \in \epsilon (\mathbb{Z}+\frac 2N)$,
	the following expression is a single-valued $1$-form in $\lambda$:
	\begin{equation} \label{A-HQE}
		(\cN \otimes \cN)
		\left( \sum_{i=0}^{N-1} c_{i}\Gamma^{\frac 2N v_i} \otimes \Gamma^{-\frac 2Nv_i}\right) \left( \cA \otimes \cA \right) d\lambda .
	\end{equation}
	Here the two copies of $\cN$, $\Gamma$, and $\cA$ depend on the variables $q^i_\ell$ and $\bar{q}^i_\ell$ respectively.
\end{lemma}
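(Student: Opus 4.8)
The statement asserts single-valuedness of the $1$-form in $\lambda$, which is equivalent to invariance under the monodromy group of the cut $\lambda$-plane, i.e. under each generator $\gamma_1,\dots,\gamma_N$ (and automatically under the composite $\gamma_\infty=\gamma_1\cdots\gamma_N$). The whole construction is engineered so that the monodromy simply permutes the summands attached to the finite orbit $O=\{\tfrac 2N v_i\}_{i=0}^{N-1}$: the vertex operator $\Gamma^{\frac 2N v_i}$ is a multivalued function of $\lambda$ whose analytic continuation is governed by $\gamma_i I^{(\ell)}_v=I^{(\ell)}_{\gamma_i v}$, so that $\Gamma^{\pm\frac 2N v_i}$ continues into $\Gamma^{\pm\frac 2N\gamma_i v_i}$ up to scalar phases coming from normal reordering and from the path-dependence of the negative-degree periods. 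The plan is thus to verify $\gamma_i$-invariance term by term, the two genuine difficulties being (a) the cancellation of these scalar phases, and (b) the fact that, by~\eqref{eq:NontrivMonodromy}, $\gamma_N$ sends $v_{N-1}\mapsto v_0+Nw$ and $v_0\mapsto v_{N-1}-Nw$ and hence leaves the finite orbit $O$.

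First I would dispose of the interior monodromies $\gamma_i$, $i=1,\dots,N-1$. Here $\gamma_i$ interchanges $v_{i-1}\leftrightarrow v_i$ and fixes the remaining $v_j$ as well as $w$; in particular the period vectors of $w$ are single-valued around the finite critical values (their only logarithmic monodromy sits at $\infty$), so $\cN\otimes\cN$ commutes with this continuation and can be set aside. By Lemma~\ref{lem:coefficients}(1) the scalar coefficients transform covariantly, $\gamma_i\colon c_{i-1}\leftrightarrow c_i$, this being exactly where the factor $\exp[-\int\cW_{\frac 2N v_i,\frac 2N v_i}\,d\rho]$ and the monodromy formula of Lemma~\ref{lem:MonodromyPhaseGeneralStatement} are used. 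It then remains to check that the scalar phases produced by continuing the two vertex operators $\Gamma^{\frac 2N v_i}$ and $\Gamma^{-\frac 2N v_i}$ multiply to $1$; this is where the opposite signs $\pm\frac 2N v_i$ in the two tensor slots are essential, as the phase is quadratic in the charge and the two contributions are mutually inverse. Consequently the $i$-th summand is carried precisely into the $(i-1)$-th, and the sum over $O$ is invariant.

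The essential obstacle is the remaining monodromy, which I would handle through $\gamma_\infty$ (equivalent to $\gamma_N$ once the interior generators are settled). Writing $\tfrac 2N(v_0+Nw)=\tfrac 2N v_0+2w$ and using the linearity $\cf_{\frac 2N v_0+2w}=\cf_{\frac 2N v_0}+2\cf_w$, the continuation introduces an extra $\pm 2w$ in the exponents of the two vertex operators. The whole purpose of $\cN$ is to absorb the secular (negative-degree, logarithmic) part of $\widehat{\cf_w}$: conjugating $\cN\otimes\cN$ through the shifted operators cancels the terms $(I^{(-\ell-1)}_w)_j q^j_\ell$ and reduces the $Nw$-shift to a single scalar factor arising from the pairing with $(I^{(-1)}_w)_N=\pi\i$, which acts as a translation of $q^N_0$ (resp. $\bar q^N_0$). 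Collecting this factor together with the covariance phase $\exp(-\tfrac 2N 2\pi\i)$ coming from $\gamma_N c_{N-1,0}=c_{0,1}$ in Lemma~\ref{lem:coefficients}(2), the total multiplier is of the form $\exp\!\big(2\pi\i(\tfrac{q^N_0-\bar q^N_0}{\epsilon}-\tfrac 2N)\big)$, which is trivial precisely under the hypothesis $q^N_0-\bar q^N_0\in\epsilon(\Z+\tfrac 2N)$. This is exactly where the peculiar normalization $\tfrac 2N$ of the orbit $\overline O$ and the charge condition enter.

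The hard part is the precise bookkeeping of all the scalar phases in this last step: the central terms of the normal ordering $e^{\widehat{(\cf)_-}}e^{\widehat{(\cf)_+}}$, the path integrals $\int\cW$ hidden in the $c_i$, the monodromy constants $\pi\i\langle a,e_i\rangle^2$ of Lemma~\ref{lem:MonodromyPhaseGeneralStatement}, and the $\cN$-induced translation of $q^N_0$ must all be shown to collapse to the single factor above. To keep this tractable I would pass to the asymptotics at $\lambda\sim\infty$, replacing $\cf_{\frac 2N v_i}$ by $S(t,z)\,\cf_{\frac 2N v_i,\infty}$ via Lemma~\ref{lem:Asymptotic-S} and using the conjugation-by-$S$ identity of Proposition~\ref{prop:conjugationS} to trade the geometric phases $e^{\frac 12\int_\lambda^\infty(\cW_{a,a}-\cW_{a,a}^\infty)\,d\rho}$ for the explicit $\infty$-monodromy data $\gamma_\infty\cf_{\frac 2N v_i,\infty}$ computed in Section~\ref{sec:asymptotics-and-monodromy-infty}. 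In those variables the $\pm 2\cf_{w,\infty}$ shifts and the branch choice~\eqref{eq:ChoiceOfBranchOfLog} enter transparently, turning the vanishing of the total phase into a finite explicit verification.
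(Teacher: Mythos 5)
Your proposal follows essentially the same route as the paper: single-valuedness is checked generator by generator; the interior $\gamma_i$, $i=1,\dots,N-1$, merely reshuffle the summands thanks to the covariance of the coefficients (Lemma~\ref{lem:coefficients}); and the only nontrivial check is the last generator, where $\tfrac 2N\gamma_N v_{N-1}=\tfrac 2N v_0+2w$ produces the factors $e^{\pm 2\hat\cf_w}$, which commute past $\cN\otimes\cN$ to the multiplier $e^{\pm 2\pi\i(q^N_0-\bar q^N_0)/\epsilon}$; combined with the phase $e^{\mp\frac 2N 2\pi\i}$ from Lemma~\ref{lem:coefficients}(2), this is trivial exactly under the stated congruence. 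Two caveats. First, your parenthetical claim that the continuation phases of $\Gamma^{\frac 2N v_i}$ and $\Gamma^{-\frac 2N v_i}$ are ``mutually inverse because the phase is quadratic in the charge'' is backwards: a quadratic phase is \emph{even} in the charge, so the two tensor slots would contribute equal, not inverse, factors. In the paper no per-slot phase arises at all --- $\gamma_i$ carries $\Gamma^{\frac 2N v_{i-1}}$ exactly into $\Gamma^{\frac 2N v_i}$, and the entire phase bookkeeping is packaged once into the covariant coefficient $c_i$, whose definition contains the full $\exp\bigl[-\int\cW_{\frac 2N v_i,\frac 2N v_i}\,d\lambda\bigr]$ precisely for this purpose --- so your remark is redundant rather than fatal, but as a mechanism it would not work. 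Second, the closing suggestion to pass to the asymptotics at $\lambda\sim\infty$ via Lemma~\ref{lem:Asymptotic-S} and Proposition~\ref{prop:conjugationS} is an unnecessary detour: that machinery belongs to the descendant version (Lemma~\ref{lem:descsinglevalues} and Theorem~\ref{thm:DescHirota}), while here the only computation needed is the exact identity $\cN e^{\pm 2\hat\cf_w}=e^{\pm 2\pi\i q^N_0/\epsilon}\cN$, which follows directly from $\hat\cf_w=\epsilon^{-1}\sum_{\ell\geq 0}(I^{(-\ell-1)}_w)_\alpha q^\alpha_\ell$ together with $(I^{(-1)}_w)_N=\pi\i$ and the definition of $\cN$.
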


\begin{proof}
	We need to prove that \eqref{A-HQE} is invariant under the action of the generators $\gamma_1, \dots,\gamma_N$ of the fundamental group of the pointed complex plane.  Note that the coefficients of $\cN$ are single-valued functions in $\lambda$. Indeed, since $I^{(-1)}_{w} $ is constant, all $I^{(-\ell-1)}_{w}$, $\ell\geq 0$, are polynomials in $\lambda$. 
	
Note that for $i=1,\dots,N-1$ we have $\gamma_i\colon O\to O$, $\gamma_i \colon c_{i-1} \leftrightarrow c_{i}$ and $\gamma_i \colon \Gamma^{\frac 2N v_{i-1}} \leftrightarrow \Gamma^{\frac 2N v_{i}}$, and the action of $\gamma_i$ on all other coefficients and vertex operators is trivial. Hence the action of $\gamma_i$, $i=1,\dots,N-1$, is just a reshuffling of the summands in~\eqref{A-HQE} which leaves~\eqref{A-HQE} invariant.

	For $\gamma_N$ we have $\gamma_N\colon c_{0}\mapsto c_{N-1}e^{\frac 2N 2\pi\i}$ and $\gamma_N\colon c_{{N-1}}\mapsto c_{0}e^{-\frac 2N 2\pi\i}$ and $\gamma_N c_{i} \mapsto c_{i}$ for $i=1,\dots,N-2$, by Lemma \ref{lem:coefficients}. In the meanwhile, $\gamma_N \Gamma^{\frac 2N v_i} = \Gamma^{\frac 2N v_i}$ for $i=1,\dots,N-2$, and
	\begin{align}\label{eq:BCHfw}
		& \gamma_N \Gamma^{\frac 2N v_0} =  \Gamma^{\frac 2N v_{N-1}-2w} = e^{-2\hat \cf_w} \Gamma^{\frac 2N v_{N-1}}, \\ \notag
		& \gamma_N \Gamma^{\frac 2N v_{N-1}} =  \Gamma^{\frac 2N v_0+2w} = e^{2\hat \cf_w} \Gamma^{\frac 2N v_0},
	\end{align}
	where we use the definition \eqref{eq:defquant} of the vertex operators $\Gamma$ and  the fact that $(\cf_w)_+=0$ for the second equalities. 
	Therefore, the action of $\gamma_N$ on~\eqref{A-HQE} is given by
	\begin{align} \label{eq:firststepgamma2AncestorHQE}
		&\gamma_N\,	(\cN \otimes \cN)
		\left( \sum_{i=0}^{N-1}  c_{i} \Gamma^{\frac 2N v_i} \otimes \Gamma^{-\frac 2N v_i}\right) \left( \cA \otimes \cA \right) d\lambda  \\  \notag 
		& =(\cN \otimes \cN)\, \Bigg( e^{-\frac 2N 2\pi\i} (e^{2\hat \cf_w}\otimes e^{-2\hat \cf_w}) c_{0} \Gamma^{\frac 2N v_0} \otimes \Gamma^{-\frac 2N v_0} 
		\\ \notag & \qquad \qquad  
		+e^{\frac 2N 2\pi\i} (e^{-2\hat \cf_w}\otimes e^{2\hat \cf_w}) c_{{N-1}} \Gamma^{\frac 2N v_{N-1}} \otimes \Gamma^{-\frac 2N v_{N-1}} 
		\\ \notag & \qquad \qquad 
		 +\sum_{i=1}^{N-2} c_{i} \Gamma^{\frac 2N v_i} \otimes \Gamma^{-\frac 2N v_i}\Bigg) \left( \cA \otimes \cA \right) d\lambda.
	\end{align}
	Now note that $\hat{f}_w = \epsilon^{-1}\sum_{\ell\geq 0} (I^{(-\ell-1)}_w)_\alpha q^\alpha_\ell$, and therefore $\cN e^{\pm 2\hat{\cf}_w} = e^{\pm 2\pi\i \epsilon^{-1} q^N_0} \cN$. Therefore, the right hand side of~\eqref{eq:firststepgamma2AncestorHQE} is equal to
	\begin{align}
		& \Bigg( e^{\frac{2\pi\i}{ \epsilon} (q^N_0-\bar{q}^N_0 - \frac{2\epsilon }{N})} (\cN \otimes \cN)  c_{0} \Gamma^{\frac 2N v_0} \otimes \Gamma^{-\frac 2N v_0}   \\ \notag 
		&+  e^{-\frac{2\pi\i}{ \epsilon} (q^N_0-\bar{q}^N_0 - \frac{2\epsilon }{N})} ( \cN \otimes \cN)   c_{{N-1}} \Gamma^{\frac 2N v_{N-1}} \otimes \Gamma^{-\frac 2N v_{N-1}}  \\ 
		\notag & + \cN \otimes \cN\sum_{i=1}^{N-2} c_{i} \Gamma^{\frac 2N v_i} \otimes \Gamma^{-\frac 2N v_i}\Bigg) \left( \cA \otimes \cA \right) d\lambda.
	\end{align}
	Under the condition that $2\pi\i \epsilon^{-1} (q^N_0-\bar{q}^N_0- \epsilon \frac2N ) \in 2\pi \I \mathbb{Z}$  this expression is equal to~\eqref{A-HQE}, which proves the invariance under the action of $\gamma_N$.
\end{proof}

Lemma~\ref{lem:SingleValued} implies that expression~\eqref{A-HQE}, restricted to  $q^N_0-\bar{q}^N_0 \in \epsilon (\mathbb{Z}+\frac 2N)$, is a formal power series in the variables $q^\alpha_\ell+\delta^{a}_{N-1}\delta^1_\ell$ and $\bar q^\alpha_\ell+\delta^{a}_{N-1}\delta^1_\ell$ for $\alpha=1,\dots,N,~\ell \geq0$ and a Laurent series in $\epsilon$ whose coefficients   are rational functions of $\lambda$ with possible poles at the points $\lambda=u^1,\dots,u^N,\infty$. 

\begin{definition} 
We say that the ancestor potential $\cA$ satisfies the \emph{ancestor Hirota quadratic equations} for the set $O$ if the aforementioned dependence on $\lambda$ is polynomial, that is, if there are no poles at $\lambda=u^1,\dots,u^N$.
\end{definition}

\subsection{Proof of the ancestor Hirota equations} 

\begin{theorem}
\label{thm:HirotaForAncestors} 
	The  ancestor potential $\cA$ satisfies the ancestor Hirota quadratic equations.
\end{theorem}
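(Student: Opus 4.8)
The plan is to show that expression~\eqref{A-HQE}, which by Lemma~\ref{lem:SingleValued} is a single-valued rational $1$-form in $\lambda$ with possible poles only at $\lambda=u^1,\dots,u^N,\infty$, in fact has no poles at the finite critical values $u^i$. The strategy is entirely local: near each critical value $u^i$ the only summands in $\sum_i c_i\,\Gamma^{\frac2N v_i}\otimes\Gamma^{-\frac2N v_i}$ that can develop a singularity are those involving the period vectors that have nontrivial monodromy around $u^i$, namely the pair exchanged by $\gamma_i$. Thus I would fix $i$ and analyze the two terms $c_{i-1}\Gamma^{\frac2N v_{i-1}}\otimes\Gamma^{-\frac2N v_{i-1}}$ and $c_{i}\Gamma^{\frac2N v_{i}}\otimes\Gamma^{-\frac2N v_{i}}$ together, since $\gamma_i$ swaps them. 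The claim is that these two terms, added, produce a function holomorphic at $\lambda=u^i$: the would-be poles of the individual summands must cancel against each other.

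First I would extract the local behaviour near $u^i$. The period vectors $I^{(\ell)}_{e_i}$ have the standard vanishing-cycle asymptotics governed by $(\lambda-u^i)^{1/2}$, so the vertex operators $\Gamma^{\frac2N v_j}$ have a controlled expansion in $(\lambda-u^i)^{1/2}$. The scalar prefactor also matters: by the definition of $c_{i,\ell}$ as $d_{i,\ell}\exp[-\int\cW_{\frac2N v_i,\frac2N v_i}\,d\lambda]$, and because $\cW_{\frac2N v_i,\frac2N v_i}$ has a simple-pole contribution at $u^i$ coming from the self-intersection $\langle v_i,e_i\rangle$, the coefficient $c_i$ carries a fractional power of $(\lambda-u^i)$ that precisely compensates the fractional-power singularity of $\Gamma^{\frac2N v_i}\otimes\Gamma^{-\frac2N v_i}$. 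The second key input is that $\cA$ itself, written via Givental's formula~\eqref{eq-ancestor} as $\hat\Psi\hat R$ applied to a product of KdV tau functions, satisfies the \emph{local} Hirota bilinear equations of the Kontsevich--Witten/KdV hierarchy at each node $i$; this is exactly the content of the fact that $\tau_{KdV}$ solves KdV, transported through the $R$-calibration. Combining the local KdV Hirota relation with the matched asymptotics of $c_i$ and $\Gamma^{\frac2N v_i}$, the pair of summands assembles into the local KdV vertex-operator bilinear expression, whose single-valued part is holomorphic at $u^i$.

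Concretely, the order of steps I would carry out is: (1) reduce, via single-valuedness and the reshuffling action of $\gamma_i$ established in Lemma~\ref{lem:SingleValued}, to proving holomorphicity at a single $u^i$ using only the two adjacent summands; (2) compute the local asymptotic expansion of $I^{(\ell)}_{e_{i-1}}$, $I^{(\ell)}_{e_i}$ and hence of $\cf_{\frac2N v_{i-1}}$, $\cf_{\frac2N v_i}$ in the variable $(\lambda-u^i)^{1/2}$, isolating the vanishing-cycle part $I^{(\ell)}_{e_i}$ that carries the monodromy; (3) determine the fractional-power prefactor in $c_{i-1}$ and $c_i$ from the residue of $\cW_{\frac2N v_i,\frac2N v_i}$ at $u^i$, using~\eqref{eq:proofLemmacoeff-cW} and Lemma~\ref{lem:MonodromyPhaseGeneralStatement}; (4) insert the Givental representation of $\cA$ and invoke that the product of KdV tau functions satisfies the KdV Hirota equation in the node-$i$ variables, so that the paired vertex-operator action reproduces the KdV bilinear identity up to the spectator nodes; (5) conclude that the sum of the two summands is holomorphic at $u^i$, and since $i$ was arbitrary, the whole expression~\eqref{A-HQE} is a polynomial $1$-form in $\lambda$, which is the definition of satisfying the ancestor Hirota quadratic equations.

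The main obstacle I anticipate is step~(4): matching the abstract conjugated vertex operators $\Gamma^{\pm\frac2N v_i}$ acting on $\cA$ to the genuine KdV vertex operators that enter the Kontsevich--Witten Hirota bilinear identity, keeping careful track of the dilaton shift $Q^i_d=t_d-\delta_{1,d}$, the normalization factors $\Delta_i^{1/2}$ and $\Delta_i\epsilon^2$ in~\eqref{eq:D-in-t}, and the $\hat R$-dressing. The bookkeeping of half-integer powers of $(\lambda-u^i)$ and the phase factors from $\langle v_i,e_i\rangle$ and the $\frac2N$-coefficient must come out exactly so that the residual fractional-power and logarithmic terms cancel; any mismatch by a sign or a power of two would leave a spurious pole. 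I expect this is precisely where the peculiar choice of the coefficient $\frac2N$ in the orbit $\overline O$ and the specific Liu--Zhang--Zhou calibration are forced, and verifying the cancellation will require the asymptotic and conjugation properties of the vertex operators, together with the standard local KdV Hirota equation as the engine driving the cancellation.
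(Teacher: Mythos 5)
Your overall strategy coincides with the paper's for the generic critical values: localize at each $u^i$, observe that only the two summands whose period vectors have nontrivial monodromy around $u^i$ can be singular there, split the relevant $\frac2N v_j$ as $v\mp e_i$ with $\langle v,e_i\rangle=0$ so that $\Gamma^{\pm v}$ is a regular spectator, match the fractional powers of $(\lambda-u^i)$ contributed by the coefficients $c_j$ (via the residue of $\cW$ at $u^i$) against those of $\Gamma^{\pm e_i}$, and reduce via the $\hat R$-conjugation of vertex operators to the KdV Hirota bilinear identity for the $i$-th Kontsevich--Witten factor. Your steps (1)--(5) are exactly the paper's argument for $\lambda\to u^1,\dots,u^{N-1}$, and your anticipated difficulty in step (4) is resolved there by the standard conjugation formula $\Gamma^{\pm e_i}\hat\Psi\hat R=\hat\Psi\hat R\, e^{\frac12\int_{u^i}^{\lambda}(\cW_{e_i,e_i}-\frac12\frac{1}{\rho-u^i})d\rho}e^{\pm\widehat{\cf_{\mathrm{KdV}}}e_i}$.

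There is, however, one genuine gap: the critical value $u^N$ cannot be handled by your step (1) as stated. The two summands adjacent to $u^N$ are $i=0$ and $i=N-1$, and $\gamma_N$ does not simply swap them: it sends $\frac2N v_0\mapsto\frac2N v_{N-1}-2w$ and $\frac2N v_{N-1}\mapsto\frac2N v_0+2w$, and multiplies the coefficients by $e^{\mp\frac{4\pi\i}{N}}$. Consequently the pair $c_0\Gamma^{\frac2N v_0}\otimes\Gamma^{-\frac2N v_0}+c_{N-1}\Gamma^{\frac2N v_{N-1}}\otimes\Gamma^{-\frac2N v_{N-1}}$ is \emph{not} single-valued near $u^N$ on its own, so the local cancellation argument cannot even be set up there in the form you describe. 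This is precisely where the operator $\cN$ (which you never mention, and which is indeed an inert prefactor at $u^1,\dots,u^{N-1}$) and the lattice condition $q^N_0-\bar q^N_0\in\epsilon(\Z+\frac2N)$ enter essentially: writing $\frac2N v_0=e_N+v-w$ and $\frac2N v_{N-1}=-e_N+v+w$ and commuting $\cN\otimes\cN$ through $e^{\mp\hat\cf_w}\otimes e^{\pm\hat\cf_w}$ produces phases $e^{\mp\pi\i(q^N_0-\bar q^N_0)/\epsilon}$ which, after restriction to the lattice, convert the two summands into $c_{N-1}e^{-2\int_{u^N}^{\lambda}\cW_{v,e_N}}\,\Gamma^{-e_N}\otimes\Gamma^{e_N}+c_{0,1}e^{2\int_{u^N}^{\lambda}\cW_{v,e_N}}\,\Gamma^{e_N}\otimes\Gamma^{-e_N}$, a genuinely single-valued pair to which the same KdV cancellation then applies. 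Without this step the proof is incomplete at one of the $N$ nodes.
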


\begin{proof} 
Let us prove that~\eqref{A-HQE} is regular at $\lambda\to u^1$ (the cases of $\lambda\to u^2,\dots,u^{N-1}$ are completely analogous, and $\lambda\to u^{N}$ is just a bit more special and is discussed below). The period vector $I^{(\ell)}_{\frac 2N v_i}$ is holomorphic at $\lambda\to u^1$ for $i\not=0,1$, $\ell\in \Z$. 
Therefore,~\eqref{A-HQE} is regular at $\lambda\to u^1$ if and only if the following expression
\begin{align}
(\cN \otimes \cN)
\left(c_{0} \Gamma^{\frac 2N v_0} \otimes \Gamma^{-\frac 2N v_0} +c_{1} \Gamma^{\frac 2N v_1} \otimes \Gamma^{-\frac 2N v_1}\right) \left( \cA \otimes \cA \right) d\lambda,
\end{align}
which is single valued for $\lambda$ near  $u^1$, is also regular at $\lambda\to u^1$. Since the factor $\cN \otimes \cN$ doesn't affect the regularity at $\lambda\to u^1$, it is sufficient to prove the regularity of 
\begin{align}
\label{expr:regular-at-u1-1}
& \left(  c_{0} \Gamma^{\frac 2N v_0} \otimes \Gamma^{-\frac 2N v_0} + c_{1} \Gamma^{\frac 2N v_1} \otimes \Gamma^{-\frac 2N v_1}\right) \left( \cA \otimes \cA \right) d\lambda
\end{align}
in a neighborhood of $u^1$ where it is still single valued.

Since $\frac 2N v_0=-e_1+v$, $\frac 2N v_1=e_1+v$, where 
\begin{align}
	v=\frac{2-N}{N}e_1 - \frac 2N\sum_{j=2}^{N-1} (j-N)e_j.
\end{align}
Notice that $<e_1,v>=0$ implies that $I_v^{(\ell)}$ is holomorphic near $u^1$. By a computation similar to \cite{Giv-An}, we have:
\begin{align}
	\Gamma^{v\pm e_1} & = e^{\pm\int_{u^1}^\lambda \cW_{v,e_1}d\rho} \Gamma^v\Gamma^{\pm e_1}, & 	\Gamma^{-v\pm e_1} = e^{\mp\int_{u^1}^\lambda \cW_{v,e_1}d\rho} \Gamma^{-v}\Gamma^{\pm e_1}.
\end{align}
This allows to rewrite~\eqref{expr:regular-at-u1-1} as
\begin{align}\label{expr:regular-at-u1-2}
	 & (\Gamma^{v} \otimes \Gamma^{-v}) \Big( c_0e^{-2\int_{u^1}^\lambda \cW_{v,e_1}d\rho} \Gamma^{-e_1}\otimes\Gamma^{e_1}
	 +c_1 e^{ 2\int_{u^1}^\lambda \cW_{v,e_1}d\rho} \Gamma^{e_1} \otimes \Gamma^{-e_1}\Big) \left( \cA \otimes \cA \right) d\rho.
\end{align}
Notice that $\Gamma^{v} \otimes \Gamma^{-v}$ doesn't affect the regularity at $\lambda\to u^1$, so it is sufficient to prove the regularity of
\begin{align}\label{expr:regular-at-u1-bis2}
	&  \Big( c_0e^{-2\int_{u^1}^\lambda \cW_{v,e_1}d\rho} \Gamma^{-e_1}\otimes\Gamma^{e_1}
+c_1 e^{ 2\int_{u^1}^\lambda \cW_{v,e_1}d\rho} \Gamma^{e_1} \otimes \Gamma^{-e_1}\Big) \left( \cA \otimes \cA \right) d\rho.
\end{align}
Recall that $\cA=\hat\Psi \hat R \prod_{i=1}^N\tau_{\rm KdV_i}$, and the $\hat{R}$-conjugation of the vertex operators is given by (c.f.\ \cite[Prop. 3]{Giv-An})
\begin{align}
	\Gamma^{\pm e_1} \hat\Psi \hat R = \hat\Psi \hat R e^{\frac 12 \int_{u^1}^\lambda (\cW_{e_1,e_1} - \frac 12 \frac 1{\rho-u^1})d\rho } e^{\pm\widehat{\cf_{\rm KdV}(\lambda-u^1,z)}e_1}.
\end{align}
Notice that 
\begin{align}
	\log c_{0} -2\int_{u^1}^\lambda \cW_{v,e_1}d\rho + \int_{u^1}^\lambda (\cW_{e_1,e_1} - \frac 12 \frac 1{\rho-u^1})d\rho & = C - \frac 12 \log(\lambda-u^1),\\
	 \log c_{1} +2\int_{u^1}^\lambda \cW_{v,e_1}d\rho + \int_{u^1}^\lambda (\cW_{e_1,e_1} - \frac 12 \frac 1{\rho-u^1})d\rho & = C - \frac 12 \log(\lambda-u^1)-\pi\I,
\end{align}
where 
\begin{align}
C=-\int_{\lambda_0}^\lambda \cW_{v,v}d\rho -2\int_{u^1}^{\lambda_0} \cW_{v,e_1}d\rho +\int_{u^1}^{\lambda_0} (\cW_{e_1,e_1}-\frac 12 \frac1{\rho-u^1})d\rho +\frac 12 \log(\lambda_0-u^1).	
\end{align}
Thus we can rewrite~\eqref{expr:regular-at-u1-bis2} as
\begin{align}\label{expr:regular-at-u1-3}
 & (\hat\Psi \hat R \otimes \hat\Psi \hat R ) e^C
 \Big( (\lambda-u^1)^{-\frac 12}e^{-\widehat{\cf_{\rm KdV}(\lambda-u^1,z)}e_1}\otimes e^{\widehat{\cf_{\rm KdV}(\lambda-u^1,z)}e_1}
  \\ \notag
 &
  -(\lambda-u^1)^{-\frac 12} e^{\widehat{\cf_{\rm KdV}(\lambda-u^1,z)}e_1}\otimes e^{-\widehat{\cf_{\rm KdV}(\lambda-u^1,z)}e_1}\Big) 
 \Big(\prod_{i=1}^N\tau_{\rm KdV_i} \otimes \prod_{i=1}^N\tau_{\rm KdV_i} \Big) d\lambda. 
\end{align}
In this expression the regularity at $\lambda\to u^1$ of 
\begin{align}
	& \Big( (\lambda-u^1)^{-\frac 12}e^{-\widehat{\cf_{\rm KdV}(\lambda-u^1,z)}e_1}\otimes e^{\widehat{\cf_{\rm KdV}(\lambda-u^1,z)}e_1}
	  \\ \notag
	&
	-(\lambda-u^1)^{-\frac 12} e^{\widehat{\cf_{\rm KdV}(\lambda-u^1,z)}e_1}\otimes e^{-\widehat{\cf_{\rm KdV}(\lambda-u^1,z)}e_1}\Big) 
	\Big(\tau_{\rm KdV_1} \otimes \tau_{\rm KdV_1} \Big) d\lambda
\end{align}
is equivalent to the equations of the Korteweg--de Vries hierarchy for $\tau_{\rm KdV_1}$, and all other terms in~\eqref{expr:regular-at-u1-3} are regular at $\lambda\to u^1$.

Exactly the same argument proves regularity of~\eqref{A-HQE} at $\lambda \to u^2,\dots,u^{N-1}$. However, for $\lambda\to u^N$ we need some additional argument, where we have to use the operator $\cN$ and the assumption  $ q^N_0-\bar{q}^N_0 \in \epsilon (\mathbb{Z}+\frac 2N)$. 

Note that~\eqref{A-HQE} is regular at $\lambda\to u^N$ if the following expression
\begin{align} \label{eq:uNreg-1}
	& (\cN \otimes \cN)
	\Big(c_{0} \Gamma^{\frac 2N v_0} \otimes \Gamma^{-\frac 2N v_0} \\ \notag & \qquad +c_{N-1} \Gamma^{\frac 2N v_{N-1}} \otimes \Gamma^{-\frac 2N v_{N-1}}\Big) \left( \cA \otimes \cA \right) d\lambda \Big\vert_{q^N_0-\bar{q}^N_0 \in \epsilon (\mathbb{Z}+\frac 2N)},
\end{align}
in a neighborhood of $u^N$ where it is still single valued.
 Rewrite $\frac 2N v_0, \frac 2N v_{N-1}$ as $\frac 2N v_0=e_N+v-w$, $\frac 2N v_{N-1}=-e_N +v +w$, where 
\begin{align}
	v=\frac 1N\sum_{i=1}^{N-1} (2i-N)e_i.
\end{align}
Notice that $<e_N,v>=0$ implies that $I^{(\ell)}_v$ is holomorphic at $u^N$. We use the argument with the Baker-Campbell-Hausdorff formula as in Equation~\eqref{eq:BCHfw} and the following formulas
\begin{align}
	\Gamma^{v\pm e_1} & = e^{\pm\int_{u^1}^\lambda \cW_{v,e_1}d\lambda} \Gamma^v\Gamma^{\pm e_1}, & 	\Gamma^{-v\pm e_1} = e^{\mp\int_{u^1}^\lambda \cW_{v,e_1}d\lambda} \Gamma^{-v}\Gamma^{\pm e_1},
\end{align}
in order to rewrite~\eqref{eq:uNreg-1} as
\begin{align}\label{eq:uNreg-2}
	& (\cN \otimes \cN)
	\Big((e^{-\hat \cf_w}\otimes e^{\hat \cf_w}) c_{0}e^{ 2\int_{u^N}^\lambda \cW_{v,e_N}d\lambda} (\Gamma^{v} \otimes \Gamma^{-v}) (\Gamma^{e_N} \otimes \Gamma^{-e_N}) \\ \notag & +(e^{\hat \cf_w}\otimes e^{-\hat \cf_w})c_{N-1} e^{ -2\int_{u^N}^\lambda \cW_{v,e_N}d\lambda} (\Gamma^{v} \otimes \Gamma^{-v}) (\Gamma^{-e_N} \otimes \Gamma^{e_N}) \Big)
	\\ \notag &
	 \left( \cA \otimes \cA \right) d\lambda \Big\vert_{q^N_0-\bar{q}^N_0 \in \epsilon (\mathbb{Z}+\frac 2N)}.
\end{align}
We commute  $(\cN \otimes \cN)$ with $(e^{-\hat \cf_w}\otimes e^{\hat \cf_w})$ in the first summand and with $(e^{\hat \cf_w}\otimes e^{-\hat \cf_w})$ in the second and rewrite~\eqref{eq:uNreg-2} as
\begin{align}\label{eq:uNreg-3}
	& 
	\Big(e^{-\frac{\pi\i}\epsilon(q^n_0-\bar q^n_0)} (\cN \otimes \cN) c_{0}e^{ 2\int_{u^N}^\lambda \cW_{v,e_N}d\lambda} (\Gamma^{v} \otimes \Gamma^{-v}) (\Gamma^{e_N} \otimes \Gamma^{-e_N}) \\ \notag & +e^{\frac{\pi\i}\epsilon(q^n_0-\bar q^n_0)} (\cN \otimes \cN) c_{N-1} e^{ -2\int_{u^N}^\lambda \cW_{v,e_N}d\lambda} (\Gamma^{v} \otimes \Gamma^{-v}) (\Gamma^{-e_N} \otimes \Gamma^{e_N}) \Big)
	\\ \notag &
	 \left( \cA \otimes \cA \right) d\lambda \Big\vert_{q^N_0-\bar{q}^N_0 \in \epsilon (\mathbb{Z}+\frac 2N)}.
\end{align}
Substituting $q^N_0-\bar{q}^N_0 \in \epsilon (\mathbb{Z}+\frac 2N)$ in the first factors in both summands, we obtain 
\begin{align}\label{eq:uNreg-4}
	& 
	\Big(e^{-\frac{2\pi\i}N} (\cN \otimes \cN) c_{0}e^{ 2\int_{u^N}^\lambda \cW_{v,e_N}d\lambda} (\Gamma^{v} \otimes \Gamma^{-v}) (\Gamma^{e_N} \otimes \Gamma^{-e_N}) \\ \notag & +e^{\frac{2\pi\i}N} (\cN \otimes \cN) c_{N-1} e^{ -2\int_{u^N}^\lambda \cW_{v,e_N}d\lambda} (\Gamma^{v} \otimes \Gamma^{-v}) (\Gamma^{-e_N} \otimes \Gamma^{e_N}) \Big)
	\left( \cA \otimes \cA \right) d\lambda \Big\vert_{q^N_0-\bar{q}^N_0 \in \epsilon (\mathbb{Z}+\frac 2N)}
	\\ \notag &
	= e^{\frac {2\pi\i}N}(\cN \otimes \cN)(\Gamma^{v} \otimes \Gamma^{-v})
	\Big(
	c_{N-1} e^{ -2\int_{u^N}^\lambda \cW_{v,e_N}d\lambda} (\Gamma^{-e_N} \otimes \Gamma^{e_N}) 
	\\ \notag & \quad 
	+ c_{0,1}e^{ 2\int_{u^N}^\lambda \cW_{v,e_N}d\lambda}  (\Gamma^{e_N} \otimes \Gamma^{-e_N})
	\Big)
	\left( \cA \otimes \cA \right) d\lambda \Big\vert_{q^N_0-\bar{q}^N_0 \in \epsilon (\mathbb{Z}+\frac 2N)}. 
\end{align}
In order to prove the regularity of the latter expression at $\lambda\to u^N$, it is sufficient to prove the regularity of
\begin{align}
		\Big(
	c_{N-1} e^{ -2\int_{u^N}^\lambda \cW_{v,e_N}d\lambda} (\Gamma^{-e_N} \otimes \Gamma^{e_N}) 
	+ c_{0,1}e^{ 2\int_{u^N}^\lambda \cW_{v,e_N}d\lambda}  (\Gamma^{e_N} \otimes \Gamma^{-e_N})
	\Big)
	\left( \cA \otimes \cA \right) d\lambda,
\end{align}
which follows from exactly the same argument as the regularity of~\eqref{expr:regular-at-u1-bis2}.
\end{proof}

\subsection{Definition of descendant Hirota equations} 
Let us define
\begin{equation} \label{eq:def-N-infinity}
	\cN_\infty \coloneqq \exp\left( - \sum_{(j,\ell)\not= (N,0)} \frac{(I^{(-\ell-1)}_{w,\infty})_j q^j_\ell }{(I^{(-1)}_{w,\infty})_N}  \frac{\partial }{\partial q^N_0}  \right) = \exp\left( -
	\sum_{\ell\geq 1} \frac{\lambda^{\ell}}{\ell!}q^N_\ell \frac{\partial }{\partial q_0^N} \right)\, 
\end{equation}
so that that the following identity holds
\begin{equation*}
\cN_\infty \Gamma_\infty^{\pm 2 w} = e^{\mp 2\pi \I \frac{q_0^N}{\epsilon} } \cN_\infty .
\end{equation*}
We also define
\begin{align} \label{eq:CoefDefInfinity}
	c_{0}^{\infty} & \coloneqq -\lambda^{-2+\frac 2N},
	\\ \notag
	c_{i}^{\infty}  &\coloneqq \frac{\lambda^{-1-\frac 1{N-1}+\frac 2N}}{N-1}e^{2\pi\i (i-1)(-1-\frac 1{N-1}+\frac 2N)}, & i=1,\dots,N-1,
\end{align}
where we use the principal branch of the logarithm $\Log\lambda$ for the definition of the fractional degrees of $\lambda$.
%
%
Recall $\Gamma^{\frac 2N v_i}_\infty = e^{\widehat{\cf_{\frac 2N v_i,\infty}}} = e^{\widehat{(\cf_{ \frac 2N v_i,\infty})_-}} e^{\widehat{(\cf_{ \frac 2N v_i,\infty})_+}} $. 
Consider the Hirota one-form 
\begin{align} \label{eq:ExprHirota-1}
\omega_\infty =	\cN_\infty \otimes \cN_\infty \Bigg(
	\sum_{i=0}^{N-1} 	c_{i}^{\infty}\Gamma^{\frac 2N v_i}_\infty\otimes \Gamma^{-\frac 2N v_i}_\infty
	\Bigg) \cD\otimes \cD \ d\lambda,
\end{align}
where the two copies of $\cD$ depend on two different sets of variables, $\{q^\alpha_\ell\}$ and $\{\bar q^\alpha_\ell\}$ respectively. 

The vertex operators $\Gamma_\infty^{\pm \frac2N v_i}$ are formal asymptotic series for $\lambda \sim \infty$, namely combinations of logarithm and fractional powers of $\lambda$. However, because the triviality of its monodromy at infinity, the one-form $\omega_\infty$ turns out to be single-valued as proved in the following lemma.
\begin{lemma}\label{lem:descsinglevalues} For $q^N_0-\bar q^N_0\in \epsilon (\Z+\frac 2N)$ the Hirota one-form $\omega_\infty$ is single valued in $\lambda$.
\end{lemma}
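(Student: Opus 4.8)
The plan is to show that the only possible source of multi-valuedness in $\omega_\infty$ comes from the monodromy $\gamma_\infty$ around $\lambda=\infty$, and that this monodromy is cancelled by the restriction $q^N_0-\bar q^N_0\in\epsilon(\Z+\tfrac2N)$ together with the action of $\cN_\infty$. First I would recall the explicit action of $\gamma_\infty$ on the asymptotic vertex data computed in Section~\ref{sec:asymptotics-and-monodromy-infty}, namely $\gamma_\infty\cf_{\frac2Nv_i,\infty}=\cf_{\frac2Nv_{i+1},\infty}$ for $i=1,\dots,N-2$, while $\gamma_\infty\cf_{\frac2Nv_0,\infty}=\cf_{\frac2Nv_0,\infty}-2\cf_{w,\infty}$ and $\gamma_\infty\cf_{\frac2Nv_{N-1},\infty}=\cf_{\frac2Nv_1,\infty}+2\cf_{w,\infty}$. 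Quantizing, this gives $\gamma_\infty\Gamma^{\frac2Nv_i}_\infty=\Gamma^{\frac2Nv_{i+1}}_\infty$ for the intermediate indices, and for the two boundary indices it produces the extra factors $e^{\mp2\widehat{\cf_{w,\infty}}}=\Gamma^{\mp2w}_\infty$ (using that $(\cf_{w,\infty})_+=0$, so there is no Baker--Campbell--Hausdorff phase), exactly as in the ancestor case treated in Lemma~\ref{lem:SingleValued}.

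\smallskip

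Next I would track how $\gamma_\infty$ acts on the fractional/logarithmic prefactors $c^\infty_i$ defined in~\eqref{eq:CoefDefInfinity}. Since $c^\infty_i$ is built from $\lambda^{-1-\frac1{N-1}+\frac2N}$ with the phase $e^{2\pi\i(i-1)(-1-\frac1{N-1}+\frac2N)}$, advancing the branch $i\mapsto i+1$ multiplies it by precisely $e^{2\pi\i(-1-\frac1{N-1}+\frac2N)}$, so the set $\{c^\infty_i\Gamma^{\frac2Nv_i}_\infty\otimes\Gamma^{-\frac2Nv_i}_\infty\}_{i=1}^{N-2}$ is permuted by $\gamma_\infty$ cyclically in a manner compatible with the vertex-operator shift; one checks the boundary terms $i=0$ and $i=N-1$ close up this cycle, again in complete parallel with the ancestor computation. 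The point is that $\gamma_\infty=\gamma_1\cdots\gamma_N$ reshuffles the summands of $\omega_\infty$ up to the two anomalous factors $\Gamma^{\pm2w}_\infty$ and scalar phases.

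\smallskip

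The key step is then to absorb the anomalous $\Gamma^{\pm2w}_\infty$ factors using $\cN_\infty$. By the defining identity recorded just after~\eqref{eq:def-N-infinity}, we have $\cN_\infty\Gamma^{\pm2w}_\infty=e^{\mp2\pi\i q^N_0/\epsilon}\cN_\infty$, and the corresponding statement for $\bar q^N_0$ on the second tensor factor. Applying this to the two boundary summands, the net scalar acquired under $\gamma_\infty$ is $e^{\mp2\pi\i(q^N_0-\bar q^N_0)/\epsilon}$ combined with the phase $e^{\pm2\pi\i(-1-\frac1{N-1}+\frac2N)}$ from $c^\infty_i$; imposing $q^N_0-\bar q^N_0\in\epsilon(\Z+\tfrac2N)$ makes $e^{\mp2\pi\i(q^N_0-\bar q^N_0)/\epsilon}=e^{\mp\frac{4\pi\i}N}$, and I would check that this exactly cancels the residual branch phase so that $\gamma_\infty\omega_\infty=\omega_\infty$. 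Finally, since $\gamma_\infty=\gamma_1\cdots\gamma_N$ and each $\gamma_i$ individually only permutes summands (the one-point cycle structure at $\lambda=\infty$ being encoded entirely in the asymptotic data), invariance under $\gamma_\infty$ together with invariance under the individual $\gamma_i$ for $i<N$ gives single-valuedness of $\omega_\infty$ on the punctured $\lambda$-plane.

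\smallskip

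The main obstacle I anticipate is the bookkeeping of the fractional-power phases: one must verify that the branch-jump phase picked up by $c^\infty_i\,\Gamma^{\frac2Nv_i}_\infty\otimes\Gamma^{-\frac2Nv_i}_\infty$ under $\gamma_\infty$, the scalar from commuting $\cN_\infty$ past $\Gamma^{\pm2w}_\infty$, and the arithmetic condition $q^N_0-\bar q^N_0\in\epsilon(\Z+\tfrac2N)$ conspire to an overall factor of $1$. This is the same delicate cancellation that made the $\gamma_N$ analysis the crux of Lemma~\ref{lem:SingleValued}; the novelty here is that the branch phase comes from genuine fractional powers of $\lambda$ at infinity rather than from the coefficients $c_{i,\ell}$, so the value $\tfrac2N$ in the arithmetic condition must be matched against the specific exponent $-1-\frac1{N-1}+\frac2N$ appearing in~\eqref{eq:CoefDefInfinity}.
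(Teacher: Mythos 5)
Your proposal follows the paper's proof essentially verbatim: you compute the $\gamma_\infty$-action on the coefficients $c_i^\infty$ and on $\Gamma^{\pm\frac2N v_i}_\infty$ from Section~\ref{sec:asymptotics-and-monodromy-infty}, commute the anomalous $\Gamma^{\mp2w}_\infty\otimes\Gamma^{\pm2w}_\infty$ factors through $\cN_\infty\otimes\cN_\infty$ to produce $e^{\mp 2\pi\i(q^N_0-\bar q^N_0)/\epsilon}$, and cancel this against the branch phases using $q^N_0-\bar q^N_0\in\epsilon(\Z+\frac2N)$, exactly as in the paper. The one inaccuracy is the residual branch phase you quote: for the two anomalous summands it is $e^{\pm\frac{4\pi\i}{N}}$, coming from the exponent $-2+\frac2N$ for $i=0$ and from $(N-1)\bigl(-1-\frac1{N-1}+\frac2N\bigr)=-N+2-\frac2N$ for the $N-1\to 1$ wrap-around, rather than the single-step phase $e^{\pm2\pi\i(-1-\frac1{N-1}+\frac2N)}$ you wrote; with this correction the cancellation you anticipate does close.
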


\begin{proof}
Recall explicit formulas in Section~\ref{sec:asymptotics-and-monodromy-infty}. Note also that $\gamma_\infty c_{0}^\infty = c_{0}^\infty e^{\frac 2N 2\pi\i}$, $\gamma_\infty c_{i}^\infty = c_{i+1}^\infty$ for $i=1,\dots,N-2$, and $\gamma_\infty c^\infty_{N-1}=c_1^\infty e^{-\frac 2N 2\pi\i}$. Moreover, 
\begin{align}
\gamma_\infty (\Gamma_\infty^{\frac 2N v_0} \otimes \Gamma_\infty^{-\frac 2N v_0}) & = (\Gamma_\infty^{-2w} \otimes \Gamma_\infty^{2w})(\Gamma_\infty^{\frac 2N v_0} \otimes \Gamma_\infty^{-\frac 2N v_0}), 
\\ \notag 
\gamma_\infty (\Gamma_\infty^{\frac 2N v_i} \otimes \Gamma_\infty^{-\frac 2N v_i} )& = \Gamma_\infty^{\frac 2N v_{i+1}} \otimes \Gamma_\infty^{-\frac 2N v_{i+1}}, & i=1,\dots,N-2,
\\ \notag 
\gamma_\infty (\Gamma_\infty^{\frac 2N v_{N-1}} \otimes \Gamma_\infty^{-\frac 2N v_{N-1}}) & = (\Gamma_\infty^{2w} \otimes \Gamma_\infty^{-2w})(\Gamma_\infty^{\frac 2N v_1} \otimes \Gamma_\infty^{-\frac 2N v_1}), 
\end{align} 
Thus we have:
	\begin{align} \label{eq:ExprHirota-gammainfty}
		& \gamma_\infty\left[ \cN_\infty \otimes \cN_\infty \Bigg(
		\sum_{i=0}^{N-1} 	c_{i}^{\infty}\Gamma^{\frac 2N v_i}_\infty\otimes \Gamma^{-\frac 2N v_i}_\infty
		\Bigg) \cD\otimes \cD d\lambda \Big|_{q^N_0-\bar q^N_0\in \epsilon (\Z+\frac 2N)}\right]
		\\ \notag
		& = \cN_\infty \otimes \cN_\infty \Bigg(
		\sum_{i=2}^{N-1} 	c_{i}^{\infty}\Gamma^{\frac 2N v_i}_\infty\otimes \Gamma^{-\frac 2N v_i}_\infty
		\Bigg) \cD\otimes \cD d\lambda \Big|_{q^N_0-\bar q^N_0\in \epsilon (\Z+\frac 2N)}
		\\ \notag & \quad
		+e^{\frac 2N 2\pi\i}( \cN_\infty \otimes \cN_\infty)(\Gamma_\infty^{-2w} \otimes \Gamma_\infty^{2w})c_{0}^\infty (\Gamma_\infty^{\frac 2N v_0} \otimes \Gamma_\infty^{-\frac 2N v_0}) \cD\otimes \cD d\lambda \Big|_{q^N_0-\bar q^N_0\in \epsilon (\Z+\frac 2N)}
		\\ \notag & \quad
		+ e^{-\frac 2N 2\pi\i}( \cN_\infty \otimes \cN_\infty)(\Gamma_\infty^{2w} \otimes \Gamma_\infty^{-2w})c_{1}^\infty (\Gamma_\infty^{\frac 2N v_1} \otimes \Gamma_\infty^{-\frac 2N v_1}) \cD\otimes \cD d\lambda \Big|_{q^N_0-\bar q^N_0\in \epsilon (\Z+\frac 2N)}.
	\end{align}
Note that 
\begin{align}
	e^{\pm \frac 2N 2\pi\i}( \cN_\infty \otimes \cN_\infty)(\Gamma_\infty^{\mp 2w} \otimes \Gamma_\infty^{\pm 2w}) = e^{\pm \frac 2N 2\pi\i \mp \frac{2\pi\i}{\epsilon} (q^N_0-\bar q^N_0)}( \cN_\infty \otimes \cN_\infty),
\end{align}
hence, taking into account the restriction to $q^N_0-\bar q^N_0\in \epsilon (\Z+\frac 2N)$, we see that the right hand side of Equation~\eqref{eq:ExprHirota-gammainfty} is indeed equal to
\begin{align}\label{eq:DescendantHirota}
	\cN_\infty \otimes \cN_\infty \Bigg(
	\sum_{i=0}^{N-1} 	c_{i}^{\infty}\Gamma^{\frac 2N v_i}_\infty\otimes \Gamma^{-\frac 2N v_i}_\infty
	\Bigg) \cD\otimes \cD d\lambda \Big|_{q^N_0-\bar q^N_0\in \epsilon (\Z+\frac 2N)}.
\end{align}
\end{proof}

\begin{definition}
\label{def:DescHirota}
We say that the descendant potential $\cD$ satisfies the \emph{descendant Hirota quadratic equations} if the Hirota one-form $\omega_\infty$ is regular at $\lambda \sim \infty$.
\end{definition}

Regularity of $\omega_\infty$ means that the coefficients of the negative powers of $\lambda$ and of the monomials in $y^\alpha_d$ vanish, after the change of variables $x^\alpha_d = \frac12 ( q^\alpha_d + \bar q^\alpha_d ) + \delta^\alpha_{N-1} \delta_d^1$ and $y^\alpha_d = \frac12 ( q^\alpha_d - \bar q^\alpha_d )$, with $q^N_0 - \bar q^N_0 \in \epsilon( \Z + \frac2N )$.

\subsection{Proof of the descendant Hirota equations} 

The goal of this section is to prove 

\begin{theorem} \label{thm:DescHirota}
	The descendant potential $\cD$ satisfies the descendant Hirota quadratic equations.
\end{theorem}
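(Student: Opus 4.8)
The plan is to deduce the descendant Hirota quadratic equations (Theorem~\ref{thm:DescHirota}) from the ancestor Hirota quadratic equations (Theorem~\ref{thm:HirotaForAncestors}) by conjugating the ancestor Hirota one-form by the operator $\hS^{-1}$, using the relation between $\cD$ and $\cA$ given in Equation~\eqref{eq:D-in-q}, namely $\cD = C\,\hS^{-1}\cA$ at the special point. The key technical tool is Proposition~\ref{prop:conjugationS}, which computes how $\hS$ conjugates the vertex operators, together with Lemma~\ref{lem:Asymptotic-S}, which identifies the asymptotic expansion $\cf_v\sim S\,\cf_{v,\infty}$ of the period vectors near $\lambda\sim\infty$.

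First I would apply $\hS^{-1}\otimes\hS^{-1}$ to the regular ancestor one-form
\begin{align*}
(\cN\otimes\cN)\Bigg(\sum_{i=0}^{N-1} c_i\,\Gamma^{\frac 2N v_i}\otimes\Gamma^{-\frac 2N v_i}\Bigg)(\cA\otimes\cA)\,d\lambda
\end{align*}
and track how each factor transforms. By Proposition~\ref{prop:conjugationS}, $\hS\Gamma^{a}_\infty\hS^{-1}=e^{\frac12\int_\lambda^\infty(\cW_{a,a}-\cW^\infty_{a,a})d\rho}\Gamma^a$, so inverting gives $\hS^{-1}\Gamma^{a}\hS=e^{-\frac12\int_\lambda^\infty(\cW_{a,a}-\cW^\infty_{a,a})d\rho}\Gamma^a_\infty$ for $a=\pm\frac 2N v_i$. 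The crucial point is that the scalar prefactors produced this way recombine with the ancestor coefficients $c_i$ to yield precisely the descendant coefficients $c_i^\infty$ defined in Equation~\eqref{eq:CoefDefInfinity}: one checks that $c_i\exp[-\tfrac12\int_\lambda^\infty(\cW_{\frac 2N v_i,\frac 2N v_i}+\cW_{-\frac 2N v_i,-\frac 2N v_i}-\cW^\infty_{\frac 2N v_i,\frac 2N v_i}-\cW^\infty_{-\frac 2N v_i,-\frac 2N v_i})d\rho]$ matches $c_i^\infty$, using the explicit values of $\cW^\infty_{\frac 2N v_i,\frac 2N v_i}$ computed in Section~\ref{sec:conjugacybyS} and the definition of $c_i$ through $\exp[-\int_{\lambda_0}^\lambda\cW_{\frac2N v_i,\frac2N v_i}d\lambda]$ from Section~\ref{sec:OrbitCoefficients}. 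I would also need to check that $\cN$ transforms into $\cN_\infty$ under this conjugation, which follows from the asymptotic relation between $I^{(-\ell-1)}_w$ and $I^{(-\ell-1)}_{w,\infty}$ together with the $S$-conjugation properties of the factors $e^{\pm\hat\cf_w}$ that enter the restriction condition $q^N_0-\bar q^N_0\in\epsilon(\Z+\tfrac 2N)$.

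The upshot is that $(\hS^{-1}\otimes\hS^{-1})\,\omega_{\mathrm{anc}}$ equals $\omega_\infty$ up to a single nonzero scalar factor $e^{\frac12\int_\lambda^\infty(\cdots)}$ that is holomorphic and nonvanishing near $\lambda\sim\infty$; since $\hS$ is a change of variables that does not affect regularity, and since the ancestor one-form was proved in Theorem~\ref{thm:HirotaForAncestors} to be regular (polynomial in $\lambda$, i.e.\ without poles at the finite critical values $u^1,\dots,u^N$), the descendant one-form $\omega_\infty$ inherits regularity at $\lambda\sim\infty$ after the substitution that sends the expansion point from the $u^i$ to $\infty$. This is exactly the statement of Definition~\ref{def:DescHirota}, so the theorem follows.

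The main obstacle I expect is the careful bookkeeping of the scalar prefactors and the integration constants. The relation $\hS^{-1}\Gamma^a\hS\propto\Gamma^a_\infty$ holds only asymptotically near $\lambda\sim\infty$, and one must verify that the prefactors arising from the two tensor factors (which carry opposite signs of $a$) combine into genuinely single-valued, explicitly-known fractional powers of $\lambda$ matching $c_i^\infty$, rather than into something with spurious branch behaviour. In particular the delicate $\beta=N$ and $\alpha=N$ entries of the $S$-matrix, and the logarithmic contributions encoded in $\widetilde\log f$ and the harmonic-number shifts $\ch(m)$, must conspire so that the normalization $C$ with $C(t_{sp})=1$ absorbs the remaining constant. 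Verifying that the boundary term in Proposition~\ref{prop:conjugationS}, which vanishes because $(S_1)^1_N=0$, indeed suffices to pin down the integration constant consistently for all $i=0,\dots,N-1$ is where the argument requires the most attention.
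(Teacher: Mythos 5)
Your proposal follows essentially the same route as the paper: conjugate the ancestor Hirota one-form by $\hat S^{-1}\otimes\hat S^{-1}$ using Proposition~\ref{prop:conjugationS}, match the resulting scalar prefactors against the coefficients $c_i^\infty$ (the paper does this in Equation~\eqref{eq:Coef1}, showing $c_i^\infty e^{\int_\lambda^\infty(\cW_{\frac2N v_i,\frac2N v_i}-\cW^\infty_{\frac2N v_i,\frac2N v_i})d\rho}=c_i\cdot F$ for a common factor $F$), and then transfer regularity from Theorem~\ref{thm:HirotaForAncestors}. The one step you understate is the $\cN$ versus $\cN_\infty$ bookkeeping: the relation is not a clean conjugation up to a scalar but $\cN_\infty\hat S^{-1}=Q\,T\,\cN$ with nontrivial operators $Q$ (an exponential of a quadratic form with $\lambda$-polynomial coefficients) and $T$ (an exponential of a linear vector field without $\partial/\partial q^N_0$), so your claim that the two one-forms agree ``up to a single nonzero scalar factor'' is not literally correct, and one must additionally verify that $F(Q\otimes Q)(T\otimes T)$ is invertible, preserves polynomiality in $\lambda$, and is compatible with the restriction $q^N_0-\bar q^N_0\in\epsilon(\Z+\frac2N)$ --- which is precisely what the paper's final step does.
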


\begin{proof}
The descendant potential is given by $\cD = C\hat{S}^{-1}\cA$, where the latter three factors are computed at the same point of $M$. Though $\cD$ does not depend on the choice of this point, it is convenient to use the special one, $t_{sp}$, then in particular $C=1$, and we can use the results on the ancestor Hirota quadratic equations for $\cA$ proved in Section~\ref{sec:HirotaAncestor}. So, all computations below are made specifically at the special point. 

\subsubsection*{Coefficients and conjugation of $\hat S$ and $\Gamma_\infty$}

Using Proposition~\ref{prop:conjugationS}, we have:
\begin{align}
	(\hat{S}\otimes \hat{S} )(\Gamma^{\frac 2N v_i}_\infty\otimes \Gamma^{-\frac 2N v_i}_\infty) (\hat{S}^{-1}\otimes \hat{S}^{-1}) = e^{\int_\lambda^\infty (\cW_{\frac 2N v_i,\frac 2N v_i} - \cW^\infty_{\frac 2N v_i,\frac 2N v_i})d\rho} \Gamma^{\frac 2N v_i}\otimes \Gamma^{-\frac 2N v_i}.
\end{align}
Let us prove that for $i=0,\dots,N-1$ 
\begin{align} \label{eq:Coef1}
	c_i^\infty e^{\int_\lambda^\infty (\cW_{\frac 2N v_i,\frac 2N v_i} - \cW^\infty_{\frac 2N v_i,\frac 2N v_i})d\rho} = c_i\cdot F,  
\end{align}
for some common factor $F$ defined below. 
Recall an explicit computation in Lemma~\ref{lem:coefficients}. We have:
\begin{align}\label{eq:Coef2}
	\log c_i & = - \int_{p_0(\lambda_0)}^{p_i(\lambda)} \left(d\log \left(\frac{d f(p)}{d p}\right) + \frac 2N \frac{dp}{p} \right)
	\\ \notag & = - \log \left(\frac{d f(p)}{d p}\right) \big|_{p=p_i(\lambda)} - \frac 2N \log p  \big|_{p=p_i(\lambda)}  \\ \notag & \quad +  \log \left(\frac{d f(p)}{d p}\right) \big|_{p=p_0(\lambda_0)} + \frac 2N \log p  \big|_{p=p_0(\lambda_0)}.
\end{align}
Define $-\log F$ as the sum of the last two summands. For the first two summands, using that $f(p)=p^{N-1}+p^{-1}$, we observe that for $\lambda\to \infty$ we have
\begin{align}
	 & \log \left(\frac{d f(p)}{d p}\right) \big|_{p=p_i(\lambda)} + \frac 2N \log p  \big|_{p=p_i(\lambda)} 
	 \\ \notag &
	 \sim 
	 \begin{cases}
	 \Log(N-1) +\left(1+\frac{1}{N-1}-\frac 2N \right)(\Log\lambda + 2\pi\i (i-1)), & i=1,\dots,N-1; \\
	 \Log(-1) + \left(2-\frac 2N\right) \Log\lambda, & i=0.
	 \end{cases}
\end{align}
Therefore, 
\begin{align} \label{eq:Coef3}
	& \int^\infty_\lambda (\cW_{\frac 2N v_i,\frac 2N v_i} - \cW^\infty_{\frac 2N v_i,\frac 2N v_i})d\lambda = 
	\\ \notag & 
	\begin{cases}
		- \log \left(\frac{d f(p)}{d p}\right) \big|_{p=p_i(\lambda)} - \frac 2N \log p  \big|_{p=p_i(\lambda)} & \\ \qquad +  \Log(N-1) +\left(1+\frac{1}{N-1}-\frac 2N \right)(\Log\lambda + 2\pi\i (i-1)), & i=1,\dots,N-1; \\
			- \log \left(\frac{d f(p)}{d p}\right) \big|_{p=p_0(\lambda)} - \frac 2N \log p  \big|_{p=p_0(\lambda)} & \\ \qquad + \Log(-1) + \left(2-\frac 2N\right) \Log\lambda, & i=0.
	\end{cases}
\end{align}
Now Equation~\eqref{eq:Coef1} follows directly by substitution of Equations~\eqref{eq:CoefDefInfinity} and~\eqref{eq:Coef2} on the left hand side and  and~\eqref{eq:Coef3} on the right hand side. Thus we have the following equality of asymptotic series:
\begin{align}\label{eq:DescHirProof1}
	(\hat{S}\otimes \hat{S} )(c_i^\infty \Gamma^{\frac 2N v_i}_\infty\otimes \Gamma^{-\frac 2N v_i}_\infty) (\hat{S}^{-1}\otimes \hat{S}^{-1}) =F \cdot c_i \Gamma^{\frac 2N v_i}\otimes \Gamma^{-\frac 2N v_i},
\end{align}
$i=0,\dots,N-1$.

\subsubsection*{Conjugation of $\hat S$ and $\cN_\infty$} 

We recall~\cite[Lemma 39]{carletHigherGeneraCatalan2021}. It is a universal statement whose proof doesn't use any specifics of the underlying Frobenius manifold (except for the properties  $I_{w,\infty}^\ell=0$ and $I_{w}^\ell=0$ for $\ell\geq 0$ that we do have in our case). In our case, exactly the same argument implies that we have the following equality of asymptotic series:
\begin{align}\label{eq:DescHirProof2}
	\cN_\infty \hat S^{-1} = Q T \cN,
\end{align}
where $Q$ is an exponential of a linear combination of terms $\epsilon^{-2} q^i_\ell q^j_m$ with the coefficients polynomial in $\lambda$, and $T$ is an exponential of a linear vector field in $q^i_\ell$ with constant coefficients that does not contain differentiation $\partial /\partial q^N_0$.

\subsubsection*{Final steps of the proof}

Equations~\eqref{eq:DescHirProof1} and~\eqref{eq:DescHirProof2} imply that we can rewrite~\eqref{eq:DescendantHirota} as the asymptotic series expansion at $\lambda \to \infty$ of the following expression:
\begin{align}
	F(Q\otimes Q)(T\otimes T) (\cN\otimes \cN) \left(\sum_{i=0}^{N-1} c_{i}\Gamma^{\frac 2N v_i} \otimes \Gamma^{-\frac 2Nv_i}\right) \left( \cA \otimes \cA \right) d\lambda \Big|_{q^N_0-\bar q^N_0\in \epsilon (\Z+\frac 2N)}.
\end{align}
Since the operator $F(Q\otimes Q)(T\otimes T)$ does not contain derivatives with respect to $q^N_0$ and $\bar q^N_0$, we can rewrite the above expression as 
\begin{align} \label{eq:DescHirProof3}
		 \Big(F(Q\otimes Q)&(T\otimes T) \Big)\Big|_{q^N_0-\bar q^N_0\in \epsilon (\Z+\frac 2N)} \times\\ \notag &\times\left((\cN\otimes \cN) \left(\sum_{i=0}^{N-1} c_{i}\Gamma^{\frac 2N v_i} \otimes \Gamma^{-\frac 2Nv_i}\right) \left( \cA \otimes \cA \right) d\lambda\right) \Big|_{q^N_0-\bar q^N_0\in \epsilon (\Z+\frac 2N)}.
\end{align}
Remark that the operator $F(Q\otimes Q)(T\otimes T)$ restricted to $q^N_0-\bar q^N_0\in \epsilon (\Z+\frac 2N)$ is an invertible operator that preserves the polynomiality in $\lambda$, and that by Theorem~\ref{thm:HirotaForAncestors}  the second line of~\eqref{eq:DescHirProof3} is polynomial in $\lambda$ in the sense of Definition~\ref{def:DescHirota}. This implies that~\eqref{eq:DescendantHirota} is polynomial in $\lambda$ in the sense of Definition~\ref{def:DescHirota}: this finishes the proof that the descendant potential $\cD$ satisfies the descendant Hirota quadratic equations. 
\end{proof}

\subsection{Explicit form of the descendant Hirota equations} 

In this section we use the formulas for $\cf_{ \frac 2N v_i,\infty}$, $i=0,\dots,N-1$, computed above, as well as the quantization rules, in order to rewrite the descendant Hirota quadratic equations more explicitly. 

\begin{corollary}\label{cor:explicitform} For any $k\in\Z$ and for any $n\geq 0$ the descendant potential $\cD$ satisfies the following equations:
\begin{align}
\label{Hirota}
	0 & = \res_{\lambda = \infty} \lambda^{n-1} d\lambda \Bigg[ -\lambda^{-k} \times
	\\ \notag & \qquad 
	\exp\left(-\frac 1\epsilon \sum_{\ell=0}^\infty \frac 12 \frac{\lambda^{\ell+1}}{(\ell+1)!} (q^1_\ell -\bar q^1_\ell)
	+\frac 1\epsilon \sum_{\ell=1}^\infty \frac{\lambda^{\ell}}{\ell!} \ch(\ell) (q^N_\ell-\bar q^N_\ell) \right)\times
	\\ \notag & \qquad 
	\cD\left(\left\{q^1_\ell + \epsilon \frac{\ell!}{\lambda^{\ell+1}} \right\}_{\ell\geq 0},\  \{q^\alpha_\ell\}_{\substack{\ell\geq 0 \\ \alpha=2,\dots,N-1 }}
	,\ q^N_0+\frac{\epsilon}{2}-
	\sum_{\ell\geq 1} \frac{\lambda^{\ell}}{\ell!}q^N_\ell,\ \{q^N_\ell\}_{\ell\geq 1}\right) \times 
	\\ \notag &\qquad 
	\cD\left(\left\{\bar q^1_\ell - \epsilon \frac{\ell!}{\lambda^{\ell+1}} \right\}_{\ell\geq 0},\  \{\bar q^\alpha_\ell\}_{\substack{\ell\geq 0 \\ \alpha=2,\dots,N-1 }}
	,\ \bar q^N_0-\frac{\epsilon}{2} -
	\sum_{\ell\geq 1} \frac{\lambda^{\ell}}{\ell!}\bar q^N_\ell,\ \{\bar q^N_\ell\}_{\ell\geq 1}\right) +
	\\ \notag &
	+\sum_{i=1}^{N-1} 	\frac{1}{N-1} \lambda^{\frac{k}{N-1}} \exp\left(2\pi\i(i-1)\frac k{N-1}\right)\times 
	\\ \notag 
	& \qquad 
	\exp\Bigg( \frac 1\epsilon \sum_{\ell=0}^\infty \frac{1}{N} \frac{\lambda^{\ell+1}}{(\ell+1)!} (q^1_\ell - \bar q^1_\ell)
	\\ \notag &
	\qquad \qquad +
	\frac 1\epsilon \sum_{\alpha=2}^{N-1} \sum_{\ell=0}^\infty  \frac{\lambda^{\frac{N-\alpha}{N-1}+\ell}}{(N-1)\left(\frac{N-\alpha}{N-1}\right)_{(\ell+1)}} \exp\Big({2\pi\i(i-1)\frac{1-\alpha}{N-1}}\Big)
	(q^\alpha_\ell - \bar q^\alpha_\ell )
	\\ \notag &
	\qquad \qquad -
	\frac 1\epsilon \sum_{\ell=1}^\infty \frac{ \ch(\ell)}{N-1} \frac{\lambda^{\ell}}{\ell!}  (q^N_\ell-\bar q^N_\ell) \Bigg)\times
	\\ \notag & \qquad 
	\cD\Bigg(\left\{q^1_\ell - \epsilon \frac{1}{N-1}\frac{\ell!}{\lambda^{\ell+1}} \right\}_{\ell\geq 0},
	\\ \notag & \qquad \qquad 
	\left\{q^\alpha_\ell - \epsilon  \left(\frac{N-\alpha}{N-1}\right)_\ell \lambda^{\frac{\alpha-N}{N-1}-\ell} \exp\Big({2\pi\i(i-1)\frac{\alpha-N}{N-1}}\Big) \right\}_{\substack{\ell\geq 0 \\ \alpha=2,\dots,N-1 }}
	,
	\\ \notag &  \qquad \qquad 
	q^N_0-\epsilon\frac{1}{N}-
	\sum_{\ell\geq 1} \frac{\lambda^{\ell}}{\ell!}q^N_\ell,\ \{q^N_\ell\}_{\ell\geq 1}\Bigg) \times 
	\\ \notag & \qquad 
	\cD\Bigg(\left\{\bar q^1_\ell + \epsilon \frac{1}{N-1}\frac{\ell!}{\lambda^{\ell+1}} \right\}_{\ell\geq 0},
	\\ \notag & \qquad \qquad 
	\left\{\bar q^\alpha_\ell + \epsilon  \left(\frac{N-\alpha}{N-1}\right)_\ell \lambda^{\frac{\alpha-N}{N-1}-\ell} \exp\Big({2\pi\i(i-1)\frac{\alpha-N}{N-1}}\Big) \right\}_{\substack{\ell\geq 0 \\ \alpha=2,\dots,N-1 }}
	,
	\\ \notag &  \qquad \qquad 
	\bar q^N_0+\epsilon\frac{1}{N}-
	\sum_{\ell\geq 1} \frac{\lambda^{\ell}}{\ell!}\bar q^N_\ell,\ \{\bar q^N_\ell\}_{\ell\geq 1}\Bigg)
	\Bigg],
\end{align}
where $q^N_0-\bar q^N_0 = \epsilon (k-1+\frac 2N)$.
\end{corollary}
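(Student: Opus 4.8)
The plan is to read Corollary~\ref{cor:explicitform} as nothing more than the completely explicit rewriting of Theorem~\ref{thm:DescHirota}. By Definition~\ref{def:DescHirota} the potential $\cD$ satisfies the descendant Hirota quadratic equations exactly when the one-form $\omega_\infty$ of \eqref{eq:ExprHirota-1} is regular at $\lambda\sim\infty$; since by Lemma~\ref{lem:descsinglevalues} this form is single-valued once we restrict to $q^N_0-\bar q^N_0\in\epsilon(\Z+\tfrac2N)$, its regularity is equivalent to the vanishing of all residues $\res_{\lambda=\infty}\lambda^{n-1}\omega_\infty$, $n\geq0$. I would therefore substitute into the operator expression \eqref{eq:DescendantHirota} the explicit data already assembled, namely the coefficients $c^\infty_i$ of \eqref{eq:CoefDefInfinity}, the components \eqref{eq:finfty-0}--\eqref{eq:finfty-i} of the symbols $\cf_{\frac2N v_i,\infty}$, and the operator $\cN_\infty$ of \eqref{eq:def-N-infinity}, and then read off the coefficient of each $\lambda^{-n}$. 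Here the two free parameters have transparent roles: $n$ indexes the extracted power of $\lambda$, while $k$ records the monodromy sector through the constraint $q^N_0-\bar q^N_0=\epsilon(k-1+\tfrac2N)$.

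The core of the computation is the translation of the vertex operators $\Gamma^{\pm\frac2N v_i}_\infty=e^{\widehat{(\cf_{\pm\frac2N v_i,\infty})_-}}e^{\widehat{(\cf_{\pm\frac2N v_i,\infty})_+}}$ into concrete operations on $\cD\otimes\cD$ via the Givental quantisation rule. I would use that the annihilation factor $e^{\widehat{(\cf)_+}}$, built from the non-negative powers of $z$, shifts the arguments of $\cD$ by the corresponding expansion coefficients of $\cf_{\frac2N v_i,\infty}$---the genuinely $\lambda$-dependent shifts being the negative powers of $\lambda$ carried by the $1$-component, and the $z^0$-terms producing the constant shifts $q^N_0\mapsto q^N_0+\tfrac\epsilon2$ (for $i=0$) and $q^N_0\mapsto q^N_0-\tfrac\epsilon N$ (for $i\geq1$) visible in the arguments of the first factor, with opposite signs in the second. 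The creation factor $e^{\widehat{(\cf)_-}}$ becomes multiplication by an exponential that is linear in the variables with coefficients polynomial in $\lambda$; here the index-lowering by $\eta$---which by \eqref{eq:Eta-Definition} exchanges $1\leftrightarrow N$---is what makes the $N$-component of the symbol feed the $q^1$-monomials and the $1$-component feed the $q^N$-monomials, thereby producing the asymmetric shape of \eqref{Hirota}. Finally $\cN_\infty$ contributes the extra shift $q^N_0\mapsto q^N_0-\sum_{\ell\geq1}\tfrac{\lambda^\ell}{\ell!}q^N_\ell$ in each factor, which is exactly the tail appearing in the last arguments of the two copies of $\cD$.

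The step I expect to be the main obstacle, and the one requiring careful bookkeeping, is the control of the logarithms generated by the formal integration convention $(\partial_\lambda^{-m})\lambda^{-1}=\tfrac1{m!}\lambda^m(\Log\lambda-\ch(m))$. These $\Log\lambda$ factors enter the creation exponentials through the $1$-component of $\cf_{\frac2N v_i,\infty}$ and, via the branch prescription $\log\lambda=\Log\lambda+2\pi\i(i-1)$ of \eqref{eq:ChoiceOfBranchOfLog}, they also sit inside the fractional powers $\lambda^{(\alpha-N)/(N-1)}$. I would organise the argument by collecting, over the two tensor factors and after commuting $\cN_\infty$ through, the total coefficient of $\Log\lambda$ in the exponent; the design of $\cN_\infty$ together with the single-valuedness of Lemma~\ref{lem:descsinglevalues} is expected to make this coefficient depend only on $q^N_0-\bar q^N_0$, so that upon imposing $q^N_0-\bar q^N_0=\epsilon(k-1+\tfrac2N)$ the factor $e^{(\cdots)\Log\lambda}$ should collapse to a pure power of $\lambda$. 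Multiplying this power by the prefactor $c^\infty_i$ is what yields $-\lambda^{-k}$ in the $i=0$ summand and $\tfrac1{N-1}\lambda^{k/(N-1)}e^{2\pi\i(i-1)k/(N-1)}$ in the summands $i=1,\dots,N-1$---the phases being the residual effect of the branch choice---while the finite harmonic-number remainders $\ch(\ell)$ survive as the $q^N_\ell$-coefficients in the exponential prefactors. Once this $\Log\lambda$-cancellation and the accompanying index reorganisation are checked, so that the bracket in \eqref{Hirota} is an honest $\lambda$-series with no logarithms, extracting $\res_{\lambda=\infty}\lambda^{n-1}d\lambda$ of the assembled expression reproduces \eqref{Hirota} line by line, and the regularity guaranteed by Theorem~\ref{thm:DescHirota} is precisely the vanishing of these residues.
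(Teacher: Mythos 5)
Your proposal is correct and follows essentially the same route as the paper's proof: a direct substitution of the explicit formulas for $c_i^\infty$, $\cf_{\frac2N v_i,\infty}$ and $\cN_\infty$ into $\omega_\infty$, using the commutation of $\cN_\infty$ past $(\hat\cf_{\frac2N v_i,\infty})_-$ to reduce the $\Log\lambda$-dependence to the single term proportional to $q_0^N-\bar q_0^N$, which the constraint $q^N_0-\bar q^N_0=\epsilon(k-1+\frac2N)$ then collapses, together with $c_i^\infty$, into the pure powers $-\lambda^{-k}$ and $\frac1{N-1}\lambda^{k/(N-1)}e^{2\pi\i(i-1)k/(N-1)}$. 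The identification of the $\eta$-induced $1\leftrightarrow N$ exchange and of the residual harmonic-number terms matches the paper's computation.
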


\begin{proof}
Let us compute the explicit expression of the Hirota one-form $\omega_\infty$. 
From Equations~\eqref{eq:finfty-0} and~\eqref{eq:finfty-i} and the quantization rules we see that
	\begin{align}
		\hat\cf_{ \frac 2N v_0,\infty} & = 
		-\frac 1\epsilon \sum_{\ell=0}^\infty \frac 12 \frac{\lambda^{\ell+1}}{(\ell+1)!} q^1_\ell 
		-\frac 1\epsilon \sum_{\ell=0}^\infty \frac{\lambda^{\ell}}{\ell!} (\Log \lambda - \ch(\ell)) q^N_\ell \\ \notag & \quad 
		+\epsilon \sum_{\ell=0}^\infty \frac{\ell!}{\lambda^{\ell+1}} \frac{\partial}{\partial q^1_\ell} + \frac\epsilon 2 \frac{\partial}{\partial q^N_0},
	\end{align}
and for $i=1,\dots,N-1$ we have
\begin{align}
	\hat \cf_{ \frac 2N v_i,\infty} & = 
	\frac 1\epsilon \sum_{\ell=0}^\infty \frac{1}{N} \frac{\lambda^{\ell+1}}{(\ell+1)!} q^1_\ell 
	\\ \notag &
	\quad +
	\frac 1\epsilon \sum_{\alpha=2}^{N-1} \sum_{\ell=0}^\infty  \frac{\lambda^{\frac{N-\alpha}{N-1}+\ell}}{(N-1)\left(\frac{N-\alpha}{N-1}\right)_{(\ell+1)}} \exp\Big({2\pi\i(i-1)\frac{1-\alpha}{N-1}}\Big)
	q^\alpha_\ell 
	\\ \notag &
	\quad +
	\frac 1\epsilon \sum_{\ell=0}^\infty \frac{1}{N-1} \frac{\lambda^{\ell}}{\ell!} (\Log\lambda -\ch(\ell) + 2\pi\i(i-1)) q^N_\ell
	\\ \notag & \quad
	- \epsilon \sum_{\ell=0}^\infty \frac{1}{N-1} \frac{\ell!}{\lambda^{\ell+1}} \frac{\partial}{\partial q^1_\ell}	
	\\ \notag & \quad
	- \epsilon \sum_{\alpha=2}^{N-1} \sum_{\ell=0}^\infty 
	\left(\frac{N-\alpha}{N-1}\right)_{\ell} \lambda^{\frac{\alpha-N}{N-1}-\ell} \exp\Big({2\pi\i(i-1)\frac{\alpha-N}{N-1}}\Big) 
	 \frac{\partial}{\partial q^\alpha_\ell}	
	\\ \notag & \quad
	- \epsilon \frac{1}{N} \frac{\partial}{\partial q^N_0}.
\end{align}

Notice that $\cN_\infty$ and $ \left( \hat\cf_{\frac2N v_i, \infty} \right)_+$ commute, while
\begin{equation}
\cN_\infty \left( \hat\cf_{\frac2N v_i, \infty} \right)_- = \left( \tilde\cf_{\frac2N v_i, \infty} \right)_- \cN_\infty
\end{equation}
where $ \left( \tilde\cf_{\frac2N v_i, \infty} \right)_-$ equals $\left( \hat\cf_{\frac2N v_i, \infty} \right)_-$ with $q_0^N$ shifted by $-\sum_{\ell \geq 1} \frac{\lambda^\ell}{\ell !} q_\ell^N$. This in particular kills all the terms containing $\Log \lambda$, but those containinig $\Log \lambda$ and $q_0^N$. 
Explicitly 
\begin{equation}
\left( \tilde\cf_{\frac2N v_0, \infty} \right)_- = 
- \frac1\epsilon \Log \lambda \ q_0^N
+\frac 1\epsilon \sum_{\ell=1}^\infty \frac{\lambda^{\ell}}{\ell!} \ch(\ell) q^N_\ell
-\frac 1\epsilon \sum_{\ell=0}^\infty \frac 12 \frac{\lambda^{\ell+1}}{(\ell+1)!} q^1_\ell  
\end{equation}
and for $i=1,\dots,N-1$ 
\begin{align}
\left( \tilde\cf_{\frac2N v_i, \infty} \right)_- &=
\frac1{\epsilon (N-1)} \left( \Log \lambda + 2\pi \i (i-1)  \right) q_0^N  \\ 
&\quad +\frac 1\epsilon \sum_{\alpha=2}^{N-1} \sum_{\ell=0}^\infty  \frac{\lambda^{\frac{N-\alpha}{N-1}+\ell}}{(N-1)\left(\frac{N-\alpha}{N-1}\right)_{(\ell+1)}} \exp\Big({2\pi\i(i-1)\frac{1-\alpha}{N-1}}\Big)
	q^\alpha_\ell \\
&\quad + \frac 1\epsilon \sum_{\ell=0}^\infty \frac{1}{N} \frac{\lambda^{\ell+1}}{(\ell+1)!} q^1_\ell 
- \frac 1\epsilon \sum_{\ell=1}^\infty \frac{1}{N-1} \frac{\lambda^{\ell}}{\ell!} \ch(\ell) 
q^N_\ell .
\end{align}

Therefore, setting $q^N_0-\bar q^N_0 = \epsilon (k-1+\frac 2N)$, we have for $i=0$
\begin{align} \label{eq:propHQEexplicit-2ndterm}
	& c_0^\infty (\cN_\infty \otimes \cN_\infty)(\Gamma_\infty^{\frac 2N v_0}\otimes \Gamma_\infty^{-\frac 2N v_0}) (\cD\otimes \cD) 
	= 
	\\ \notag 
	=&  -\lambda^{-k-1} \exp\left(-\frac 1\epsilon \sum_{\ell=0}^\infty \frac 12 \frac{\lambda^{\ell+1}}{(\ell+1)!} (q^1_\ell -\bar q^1_\ell)
	+\frac 1\epsilon \sum_{\ell=1}^\infty \frac{\lambda^{\ell}}{\ell!} \ch(\ell) (q^N_\ell-\bar q^N_\ell) \right)\times
	\\ \notag 
	&  
	\cD\left(\left\{q^1_\ell + \epsilon \frac{\ell!}{\lambda^{\ell+1}} \right\}_{\ell\geq 0},\  \{q^\alpha_\ell\}_{\substack{\ell\geq 0 \\ \alpha=2,\dots,N-1 }}
	,\ q^N_0+\frac{\epsilon}{2}-
	\sum_{\ell\geq 1} \frac{\lambda^{\ell}}{\ell!}q^N_\ell,\ \{q^N_\ell\}_{\ell\geq 1}\right) \times 
	\\ \notag 
	&  
	\cD\left(\left\{\bar q^1_\ell - \epsilon \frac{\ell!}{\lambda^{\ell+1}} \right\}_{\ell\geq 0},\  \{\bar q^\alpha_\ell\}_{\substack{\ell\geq 0 \\ \alpha=2,\dots,N-1 }}
	,\ \bar q^N_0-\frac{\epsilon}{2} -
	\sum_{\ell\geq 1} \frac{\lambda^{\ell}}{\ell!}\bar q^N_\ell,\ \{\bar q^N_\ell\}_{\ell\geq 1}\right) 
\end{align}
and for $i=1,\dots,N-1$
\begin{align} \label{eq:propHQEexplicit-1stterm}
	& c_i^\infty (\cN_\infty \otimes \cN_\infty)(\Gamma_\infty^{\frac 2N v_i}\otimes \Gamma_\infty^{-\frac 2N v_i}) (\cD\otimes \cD) 
	= 
		\\ \notag 
	=& 
	\frac{1}{N-1} \lambda^{-1+\frac{k}{N-1}} \exp\left(2\pi\i(i-1)\big(-1+\frac k{N-1}\big)\right)\times 
		\\ \notag 
	& 
	\exp\Bigg( \frac 1\epsilon \sum_{\ell=0}^\infty \frac{1}{N} \frac{\lambda^{\ell+1}}{(\ell+1)!} (q^1_\ell - \bar q^1_\ell)
	\\ \notag &
	\qquad +
	\frac 1\epsilon \sum_{\alpha=2}^{N-1} \sum_{\ell=0}^\infty  \frac{\lambda^{\frac{N-\alpha}{N-1}+\ell}}{(N-1)\left(\frac{N-\alpha}{N-1}\right)_{(\ell+1)}} \exp\Big({2\pi\i(i-1)\frac{1-\alpha}{N-1}}\Big)
	(q^\alpha_\ell - \bar q^\alpha_\ell )
	\\ \notag &
	\qquad -
	\frac 1\epsilon \sum_{\ell=1}^\infty \frac{ \ch(\ell)}{N-1} \frac{\lambda^{\ell}}{\ell!}  (q^N_\ell-\bar q^N_\ell) \Bigg)\times
	\\ \notag &
	\cD\Bigg(\left\{q^1_\ell - \epsilon \frac{1}{N-1}\frac{\ell!}{\lambda^{\ell+1}} \right\}_{\ell\geq 0},
	\\ \notag & \qquad
	\left\{q^\alpha_\ell - \epsilon  \left(\frac{N-\alpha}{N-1}\right)_\ell \lambda^{\frac{\alpha-N}{N-1}-\ell} \exp\Big({2\pi\i(i-1)\frac{\alpha-N}{N-1}}\Big) \right\}_{\substack{\ell\geq 0 \\ \alpha=2,\dots,N-1 }}
	,
	\\ \notag &  \qquad 
	q^N_0-\epsilon\frac{1}{N}-
	\sum_{\ell\geq 1} \frac{\lambda^{\ell}}{\ell!}q^N_\ell,\ \{q^N_\ell\}_{\ell\geq 1}\Bigg) \times 
	\\ \notag &
	\cD\Bigg(\left\{\bar q^1_\ell + \epsilon \frac{1}{N-1}\frac{\ell!}{\lambda^{\ell+1}} \right\}_{\ell\geq 0},
	\\ \notag & \qquad
	\left\{\bar q^\alpha_\ell + \epsilon  \left(\frac{N-\alpha}{N-1}\right)_\ell \lambda^{\frac{\alpha-N}{N-1}-\ell} \exp\Big({2\pi\i(i-1)\frac{\alpha-N}{N-1}}\Big) \right\}_{\substack{\ell\geq 0 \\ \alpha=2,\dots,N-1 }}
	,
	\\ \notag &  \qquad 
	\bar q^N_0+\epsilon\frac{1}{N}-
	\sum_{\ell\geq 1} \frac{\lambda^{\ell}}{\ell!}\bar q^N_\ell,\ \{\bar q^N_\ell\}_{\ell\geq 1}\Bigg).
\end{align}

The Hirota one-form $\omega_\infty$ is regular at $\lambda=\infty$ iff 
\begin{equation}
\res_{\lambda=\infty} \omega_\infty  \lambda^n d\lambda = 0 \text{ for } n \geq 0.
\end{equation} 
Substituting Equations~\eqref{eq:propHQEexplicit-1stterm} and~\eqref{eq:propHQEexplicit-2ndterm} into the previous equation implies the statement of the proposition.

Notice that we have chosen $q^N_0-\bar q^N_0 = \epsilon (k-1+\frac 2N)$ to simplify the comparison of the formulas with the ones obtained in the case $N=2$ in~\cite[Proposition~40]{carletHigherGeneraCatalan2021}). 
\end{proof}

\section{The Lax formulation}  
\label{sec:LaxFormulation}

\subsection{Lax representation with difference operators} 
\label{subsec:LaxDifference}

Since multiplication by an invertible formal power series in $\lambda$ does not affect regularity, we have that the Hirota one-form $\omega_\infty$ is regular at $\lambda=\infty$  iff the $\partial_x$-valued Hirota one-form 
\begin{equation}
\tilde{\omega}_\infty := e^{\sum_{l>0} \frac{\lambda^{l}}{l!} q^N_l \partial_x} e^{ x \frac{\partial }{\partial q_0^N}} \otimes
e^{ x \frac{\partial }{\partial q^N_0}}  e^{-\sum_{l>0} \frac{\lambda^{l}}{l!} q^N_l \partial_x} 
\ \omega_\infty
\end{equation}
is regular.
Equivalently  $\tilde\omega_\infty$ is regular iff 
\begin{equation}
\res_{\lambda=\infty} \tilde\omega_\infty  \lambda^n d\lambda = 0 \text{ for } n \geq 0.
\end{equation} 
The last formula is explicitly written as equation~\eqref{Hirota} with the insertion of the operator 
\begin{equation}
e^{\sum_{l>0} \frac{\lambda^{l}}{l!} q^N_l \partial_x} e^{ x \frac{\partial }{\partial q_0^N}} 
e^{ x \frac{\partial }{\partial \bar{q}_0^N}}  e^{-\sum_{l>0} \frac{\lambda^{l}}{l!} \bar{q}^N_l \partial_x}
\end{equation}
just after the residue. 

Notice that in the second summand in~\eqref{Hirota} we are averaging over the $N-1$-th roots of $\lambda$, therefore only integer powers of $\lambda$ are present. After substituting $\lambda$ with $\lambda^{N-1}$ we get the equivalent equation:
\begin{align}
\label{Hirota'}
	& \res_{\lambda = \infty} \lambda^{n-k-1} 
	e^{\sum_{l>0} \frac{\lambda^{l}}{l!} q^N_l \partial_x} e^{ x \frac{\partial }{\partial q_0^N}} 
e^{ x \frac{\partial }{\partial \bar{q}_0^N}}  e^{-\sum_{l>0} \frac{\lambda^{l}}{l!} \bar{q}^N_l \partial_x} \times  \\
\notag & \qquad \times  \Bigg[  
	\exp\left(-\frac 1\epsilon \sum_{\ell=0}^\infty \frac 12 \frac{\lambda^{\ell+1}}{(\ell+1)!} (q^1_\ell -\bar q^1_\ell)
	+\frac 1\epsilon \sum_{\ell=1}^\infty \frac{\lambda^{\ell}}{\ell!} \ch(\ell) (q^N_\ell-\bar q^N_\ell) \right)\times
	\\ \notag &\qquad \qquad 
	\cD\left(\left\{q^1_\ell + \epsilon \frac{\ell!}{\lambda^{\ell+1}} \right\}_{\ell\geq 0},\  \{q^\alpha_\ell\}_{\substack{\ell\geq 0 \\ \alpha=2,\dots,N-1 }}
	,\ q^N_0+\frac{\epsilon}{2}-
	\sum_{\ell\geq 1} \frac{\lambda^{\ell}}{\ell!}q^N_\ell,\ \{q^N_\ell\}_{\ell\geq 1}\right) \times 
	\\ \notag &\qquad \qquad 
	\cD\left(\left\{\bar q^1_\ell - \epsilon \frac{\ell!}{\lambda^{\ell+1}} \right\}_{\ell\geq 0},\  \{\bar q^\alpha_\ell\}_{\substack{\ell\geq 0 \\ \alpha=2,\dots,N-1 }}
	,\ \bar q^N_0-\frac{\epsilon}{2} -
	\sum_{\ell\geq 1} \frac{\lambda^{\ell}}{\ell!}\bar q^N_\ell,\ \{\bar q^N_\ell\}_{\ell\geq 1}\right) \Bigg]  d\lambda =
	\\ \notag 
&	   =
\res_{\lambda = \infty} \lambda^{(N-1)n+k-1}
e^{\sum_{l>0} \frac{\lambda^{l(N-1)}}{l!} q^N_l \partial_x} e^{ x \frac{\partial }{\partial q_0^N}} 
e^{ x \frac{\partial }{\partial \bar{q}_0^N}}  e^{-\sum_{l>0} \frac{\lambda^{l(N-1)}}{l!} \bar{q}^N_l \partial_x} \times  \\
&\qquad  \times  \Bigg[
	\exp\Bigg( \frac 1\epsilon \sum_{\ell=0}^\infty \frac{1}{N} \frac{\lambda^{(N-1)(\ell+1)}}{(\ell+1)!} (q^1_\ell - \bar q^1_\ell) \notag
	\\ \notag  &
\qquad  \qquad \qquad 	+ 
	\frac 1\epsilon \sum_{\alpha=2}^{N-1} \sum_{\ell=0}^\infty  \frac{\lambda^{{N-\alpha}+\ell (N-1)}}{(N-1)\left(\frac{N-\alpha}{N-1}\right)_{(\ell+1)}}  
	(q^\alpha_\ell - \bar q^\alpha_\ell )
	-
	\frac 1\epsilon \sum_{\ell=1}^\infty \frac{ \ch(\ell)}{N-1} \frac{\lambda^{\ell (N-1)}}{\ell!}  (q^N_\ell-\bar q^N_\ell) \Bigg)\times
	\\ \notag &
\qquad \qquad 	\cD\Bigg(\left\{q^1_\ell -  \frac{\epsilon}{N-1}\frac{\ell!}{\lambda^{(N-1)(\ell+1)}} \right\}_{\ell\geq 0},
	\left\{q^\alpha_\ell - \epsilon  \left(\frac{N-\alpha}{N-1}\right)_\ell \lambda^{\alpha-N-\ell (N-1)}   \right\}_{\substack{\ell\geq 0 \\ \alpha=2,\dots,N-1 }}
	,
	\\ \notag &  \qquad \qquad \qquad 
	q^N_0-\frac{\epsilon}{N}-
	\sum_{\ell\geq 1} \frac{\lambda^{(N-1)\ell}}{\ell!}q^N_\ell,\ \{q^N_\ell\}_{\ell\geq 1}\Bigg) \times 
	\\ \notag &
\qquad \qquad 	\cD\Bigg(\left\{\bar q^1_\ell +  \frac{\epsilon}{N-1}\frac{\ell!}{\lambda^{(\ell+1)(N-1)}} \right\}_{\ell\geq 0},
	\left\{\bar q^\alpha_\ell + \epsilon  \left(\frac{N-\alpha}{N-1}\right)_\ell \lambda^{\alpha-N-(N-1)\ell}   \right\}_{\substack{\ell\geq 0 \\ \alpha=2,\dots,N-1 }}
	,
	\\ \notag &  \qquad \qquad \qquad 
	\bar q^N_0+\frac{\epsilon}{N}-
	\sum_{\ell\geq 1} \frac{\lambda^{(N-1)\ell}}{\ell!}\bar q^N_\ell,\ \{\bar q^N_\ell\}_{\ell\geq 1}\Bigg)
	\Bigg]  d\lambda ,
\end{align}
for $k \in \Z$ and $n \geq 0$, 
where $q^N_0-\bar q^N_0 = \epsilon (k-1+\frac 2N)$.

Following a general procedure, see e.g.~\cite{carletHigherGeneraCatalan2021} and~\cite{CvdL2013}, 
we define 
\begin{equation}
\begin{aligned}
&\cW^{+1}= \cP^{+1}\cE^{+1},
\quad
&\cW^{-1}=\cE^{-1} \cP^{-1}\\
&\cW^{+0}= \cP^{+0}\cE^{+0},
\quad
&\cW^{-0}=\cE^{-0} \cP^{-0},
\end{aligned}
\end{equation}
where
\begin{equation}
\begin{aligned}
&\cE^{\pm 1}= \\
&\quad\exp\Bigg( 
	\pm \frac 1\epsilon \sum_{\alpha=1}^{N-1} \sum_{\ell=0}^\infty  
	\frac{\lambda^{N-\alpha+\ell(N-1)}}{(N-1+\delta_{\alpha 1})\left(\frac{N-\alpha}{N-1}\right)_{(\ell+1)}}  
q^\alpha_\ell  
	\pm   \sum_{\ell=1}^\infty \left( \partial_x -\frac{\ch(\ell)}{(N-1)\epsilon}\right) \frac{\lambda^{\ell(N-1)}}{\ell!}  q^N_\ell \Bigg), 
\end{aligned}
\end{equation}
\begin{equation}
\cE^{\pm 0}=  \exp\left(\mp \frac 1{2\epsilon }\sum_{\ell=0}^\infty   \frac{\lambda^{\ell+1}}{(\ell+1)!} q^1_\ell  
	\pm   \sum_{\ell=1}^\infty \frac{\lambda^{\ell}}{\ell!}
\left(\partial_x  +\frac{ \ch(\ell)}{\epsilon} \right)q^N_\ell  \right) 
\end{equation} 
and
\begin{equation}  
\label{8.9}
\cP^{\pm 1}= \frac{e^{\pm ( \hat{\cf}_{ \frac 2N v_1,\infty} )_+|_{\lambda\to \lambda^{N-1}}}\cD'}{\cD'(x\mp\frac{\epsilon}N)}, \qquad 
\cP^{ \pm 0} = \frac{e^{\pm (\hat{\cf}_{ \frac 2N v_0,\infty} )_+}\cD'}{ \cD'(x\mp  \frac{\epsilon}N)}.
\end{equation}
Here $\cD'(q,x) = \cD(q)_{| q^N_0 \to q^N_0 + x}$.

Equation~\eqref{Hirota'} is equivalent to
\begin{align}
\label{Hirota2}
  \res_{\lambda = \infty} \lambda^{(N-1)n+k}  \cW^{+ 1}(q)\cW ^{- 1}(\bar q) \frac{d \lambda}{\lambda} =
 \res_{\lambda = \infty} \lambda^{n-k}
 \cW^{+ 0}(q)\cW^{- 0}(\bar q)  \frac{d \lambda}{\lambda},
\end{align}
for $k\in\Z$ and $n\geq0$, where  $q^N_0-\bar q^N_0 = \epsilon (k-1+\frac 2N)$.

Let us convert this expression in a bilinear equation for difference operators.
Given a difference operator $A = \sum_s a_s \Lambda^s = \sum_{s}  \Lambda^s \tilde{a}_s$ the left and right symbols are respectively defined as $\sigma_l(A) = \sum_s a_s \lambda^s$ and $\sigma_r(A) = \sum_s \tilde{a}_s \lambda^s$. Recall that 
\begin{equation} \notag
\res_\lambda \sigma_l(A) \sigma_r(B) \frac{d \lambda}\lambda = \res_\Lambda AB,
\end{equation}
where $\res_\Lambda A := a_0$, for a proof see \S 3.2 in~\cite{CvdL2013}.

Let us define operators $W^{\pm 1}$ and $W^{\pm 0}$ by 
\begin{equation}  
\sigma_l(W^{+ 1})= \cW^{+ 1} , \qquad 
\sigma_r(W^{-1} ) = \cW^{-1},
\end{equation}
\begin{equation}  
\sigma_l(W^{+0})= \cW^{+0}|_{\lambda \to \lambda^{-1}  } , \qquad 
\sigma_r(W^{ -0} ) = \cW^{-0}|_{\lambda \to \lambda^{-1}  } 
\end{equation}
which implies
\begin{equation}  
\begin{aligned}
W^{+1} &= P^{+1} \times\\
&\exp\Bigg( 
	 \frac 1\epsilon \sum_{\alpha=1}^{N-1} \sum_{\ell=0}^\infty  \frac{\Lambda^{{N-\alpha}+\ell(N-1)}}
{(N-1+\delta_{\alpha 1})\left(\frac{N-\alpha}{N-1}\right)_{(\ell+1)}}  
q^\alpha_\ell  
	+   \sum_{\ell=1}^\infty \left( \partial_x -\frac{\ch(\ell)}{(N-1)\epsilon}\right) \frac{\Lambda^{\ell(N-1)}}{\ell!}  q^N_\ell \Bigg), \\
 W^{-1} &=\\  
&\exp\Bigg( 
	- \frac 1\epsilon \sum_{\alpha=1}^{N-1} \sum_{\ell=0}^\infty  \frac{\Lambda^{{N-\alpha}+\ell(N-1)}}
{(N-1+\delta_{\alpha 1})\left(\frac{N-\alpha}{N-1}\right)_{(\ell+1)}} 
- \sum_{\ell=1}^\infty \left( \partial_x -\frac{\ch(\ell)}{(N-1)\epsilon}\right) \frac{\Lambda^{\ell(N-1)}}{\ell!}  q^N_\ell \Bigg)P^{-1},\\
W^{+0} &=  P^{+0} \exp \left({ -\frac1{2\epsilon} \sum_{l \geq 0} \frac{\Lambda^{-l-1} }{(l+1)!} q^1_l  
+ \sum_{l >0} \frac{\Lambda^{-l} }{l!} (\partial_x + \frac{\ch(\ell)}{\epsilon} ) q^N_l } \right),\\
W^{-0} &=  \exp \left({ +\frac1{2\epsilon} \sum_{l \geq 0} \frac{\Lambda^{-l-1} }{(l+1)!} q^1_l  
- \sum_{l >0} \frac{\Lambda^{-l} }{l!} (\partial_x + \frac{\ch(\ell)}{\epsilon} ) q^N_l } \right) P^{-0}.
\end{aligned}
\end{equation}
Here the operators $P^{\pm 1}$ and $P^{ \pm 0}$ have been defined by 
\begin{equation}  
\sigma_l(P^{+1})= \cP^{+1} \qquad 
\sigma_r(P^{-1}) = \cP^{-1},  
\end{equation}
\begin{equation}  
\sigma_l(P^{+0})= \cP^{+0}|_{\lambda \to \lambda^{-1}  } , \qquad 
\sigma_r(P^{-0} ) = \cP^{-0}|_{\lambda \to \lambda^{-1}  } .
\end{equation}
Note that $P^{\pm 1}$   are power series in negative powers of $\Lambda$ with leading term equal to $1$, while $P^{\pm 0}$ are power series in positive powers of $\Lambda$. 
We have that
\begin{align}
\label{Hirota3}
  \res_{\Lambda  }  \left[W^{+ 1}(q) \Lambda^{(N-1)n}W ^{- 1}(\bar q)\Lambda^k\right]=
\res_{\Lambda  } \left[W^{+ 0}(q) \Lambda^{-n}
W^{- 0}(\bar q)\Lambda^k\right],
\end{align}
where  $\bar q^N_0={q}^N_0-\frac{2\epsilon}{N} + \epsilon$.
Since this holds for $k\in\Z$ and there is no $k$ dependence in the square bracket we finally find the Hirota bilinear equation in difference operator form
\begin{equation}   \label{WWeq}
W^{+1}(q) \Lambda^{(N-1)n} W^{-1} (\bar{q})= W^{+0}(q) \Lambda^{-n}   W^{-0} (\bar{q}) .
\end{equation}

Now we proceed to derive some consequences from this bilinear equation. 
For $\bar q^\alpha_j={q}^\alpha_j-\delta_{\alpha N}\delta_{j0}(\frac{2\epsilon}{N}-\epsilon)$ we get
\begin{equation}  
P^{+1} \Lambda^{(N-1)n} P^{-1} = P^{+0} \Lambda^{-n}   P^{-0} , 
\end{equation}
which implies for $n=0$ and $b=0,1$ that 
\begin{equation}
\label{Pinverse}
P^{-b} (q^\alpha_j-\delta_{\alpha N}\delta_{j0}(\frac{2\epsilon}{N}-\epsilon))
= P^{+ b}(q^\alpha_j)^{-1}, 
\end{equation}
consequently for $n=1$ we obtain the constraint 
\begin{equation}  \label{dress-eth}
P^{+1} \Lambda^{N-1} (P^{+1})^{-1} = P^{+0} \Lambda^{-1}  (P^{+0})^{-1} =: L 
\end{equation}
where $L$ is a difference Lax operator of the form 
\begin{equation}
L= \Lambda^{N-1} +v_{N-2} \Lambda^{N-2}+v_{N-3} \Lambda^{N-3}+\cdots  v_0 + e^u \Lambda^{-1}.
\end{equation}
We define the following operators as in \cite{Carlet} or \cite{CvdL2013}:
\[
L^{\frac1{N-1}}= P^{+1} \Lambda (P^{+1})^{-1}
\]
and its logarithm 
\[
\log L= \frac12 \log_+ L+\frac12 \log_- L, \ \mbox{where } \log_+L=P^{+1}\epsilon\partial_x  (P^{+1})^{-1}, \ \log_-L=-P^{+0}\epsilon\partial_x  (P^{+0})^{-1}.
\]

\begin{remark}
For later use, we note that the coefficients $w_k$ of $\log L$  
\begin{equation}
\label{logLw}
\log L=\sum_{k\in\mathbb  Z} w_k \Lambda^k .
\end{equation}
are given by
\begin{equation}
\label{wk}\begin{aligned}
2w_{-k}&=  \epsilon\res_{\lambda=\infty }\lambda^{k-1}\cP^{+1}
\frac{\partial \cP^{-1}(x-\frac{2\epsilon }{N}+\epsilon(1- k))}{\partial x} d\lambda,\\
2w_{k}&=  -\epsilon\res_{\lambda =\infty}\lambda^{k-1}\cP^{+0}
\frac{\partial \cP^{-0}(x-\frac{2\epsilon }{N}+\epsilon(1+k))}{\partial x}  d\lambda.
\end{aligned}
\end{equation}
\end{remark}

If we differentiate \eqref{WWeq} by $q_\ell^\alpha$ and project on the negative, respectively non-negative,  degrees of $\Lambda$, we obtain the following Sato-Wilson equations
\begin{equation}
\label{eq:Sato-Wilson}
\frac{\partial P^{+1}}{\partial q_\ell^\alpha}=-(B_\ell^\alpha)_- P^{+1}, \quad 
\frac{\partial P^{+0}}{\partial q_\ell^\alpha}=(B_\ell^\alpha)_+P^{+0},
\end{equation}
where $B_\ell^\alpha$ is defined by 
\begin{equation}
\begin{aligned}
B_\ell^1  &= 
 \frac{N+2}{\epsilon 2N  (\ell+1)!}
  L^{\ell+1},\\
B_\ell^\alpha&= \frac 1 {\epsilon (N-1)\left(\frac{N-\alpha}{N-1}\right)_{(\ell+1)}}L^{\frac{N-\alpha}{N-1}+\ell},
\quad \alpha=2,\ldots N-1, 
\\
B_\ell^N  &= \frac1{\epsilon \ell !}\left( 2\log L-\frac N{N-1}\ch(\ell)\right)L^\ell,
\end{aligned}
\end{equation}
for $\ell \geq 0$. 
It is then straightforward to derive the Lax equations from the Sato-Wilson equations \eqref{eq:Sato-Wilson}:
\begin{equation} \label{lax-1}
\frac{\partial L}{\partial q_\ell^\alpha}= \left[\left(B_\ell^\alpha\right)_+, L\right]=
-\left[\left(B_\ell^\alpha\right)_-, L\right]
.
\end{equation}

\subsection{Lax representation with pseudo-differential operators}

In this section we derive the Lax equations in terms of pseudo-differential operators directly from the Hirota quadratic equations. For most of the times $q^\alpha_\ell$, those with $\alpha \not=N$, the derivation follows the usual procedure as in the case of the rational reductions of the KP hierarchy, but it is more complicated in the case of the ``logarithmic'' times $q^N_\ell$. 

As before, observe that the $\partial_x$-valued Hirota  one-form $\tilde \omega_\infty$ is regular at $\lambda=\infty$  iff the $\partial_x$-valued Hirota one-form 
\begin{equation} \label{omegahat}
\hat \omega_\infty :=
 e^{-
\sum_{\ell=1}^\infty \frac{\lambda^\ell}{\ell!} \frac{\ch(\ell)}{\epsilon}   q^N_\ell
+\frac1{2\epsilon}\sum_{\ell=0}^\infty
\frac{\lambda^{\ell+1}}{(\ell+1)!} q^1_\ell 
}\otimes
e^{
\sum_{\ell=1}^\infty \frac{\lambda^\ell}{\ell!} \frac{\ch(\ell)}{\epsilon}  q^N_\ell 
-\frac1{2\epsilon}\sum_{\ell=0}^\infty
\frac{\lambda^{\ell+1}}{(\ell+1)!}   q^1_\ell
}
\ \tilde \omega_\infty
\end{equation}
is regular. 
The Hirota equations are therefore satisfied iff 
\begin{equation}
\res_{\lambda=\infty} \hat\omega_\infty  \lambda^n d\lambda = 0 \text{ for } n \geq 0.
\end {equation} 

With a reasoning similar to the one used at the beginning of Subsection \ref{subsec:LaxDifference} we can rewrite these Hirota equations as  

\begin{align}
\label{Hir3} \notag
\res_{\lambda = \infty} \lambda^{(N-1)n+k}  
e^{-
\sum_{\ell=1}^\infty \frac{\lambda^{(N-1)\ell}}{\ell!} \frac{\ch(\ell)}{\epsilon}   (q^N_\ell - \bar{q}^N_\ell)
+\frac1{2\epsilon}\sum_{\ell=0}^\infty
\frac{\lambda^{(\ell+1)(N-1)}}{(\ell+1)!} (q^1_\ell - \bar{q}_\ell^1 )
}
\cW^{+ 1}(q)\cW ^{- 1}(\bar q) \frac{d \lambda}{\lambda} = \\
=
 \res_{\lambda = \infty} \lambda^{n-k}
 e^{-
\sum_{\ell=1}^\infty \frac{\lambda^\ell}{\ell!} \frac{\ch(\ell)}{\epsilon}   (q^N_\ell - \bar{q}^N_\ell)
+\frac1{2\epsilon}\sum_{\ell=0}^\infty
\frac{\lambda^{\ell+1}}{(\ell+1)!} (q^1_\ell - \bar{q}_\ell^1 )
}
 \cW^{+ 0}(q)\cW^{- 0}(\bar q)  \frac{d \lambda}{\lambda},
\end{align}
where  $q^N_0-\bar q^N_0 = \epsilon (k-1+\frac 2N)$, for $k \in \Z$ and $n\geq0$.
Defining
\begin{equation}
\begin{aligned}
\tilde\cE (q,\lambda):=\exp\Bigg( 
& \frac 1\epsilon \sum_{\alpha=2}^{N-1} 
\sum_{\ell=0}^\infty  \frac{\lambda^{{N-\alpha} 
 +\ell(N-1)}}{(N-1)\left(\frac{N-\alpha}{N-1}\right)_{(\ell+1)}}  
q^\alpha_\ell  +
\\
&+\frac{N+2}{2N \epsilon} \sum_{\ell=0}^\infty \frac{\lambda^{(N-1)(\ell+1)}}{(\ell+1)!} q^1_\ell    
 -  \sum_{\ell=1}^\infty \frac{N\ch(\ell)}{(N-1)\epsilon} \frac{\lambda^{\ell(N-1)}}{\ell!}  q^N_\ell \Bigg)
\end{aligned}
\end{equation}
the equation~\eqref{Hir3} becomes 
\begin{align} \notag
\res_{\lambda = \infty} \lambda^{(N-1)n+k}  
\cP^{+1}(q) \tilde{\cE}(q-\bar{q},\lambda) 
e^{\sum_{\ell=1}^\infty \frac{\lambda^{(N-1)\ell}}{\ell !} ( q^N_\ell- \bar q^N_\ell)\partial_x}
\cP^{-1}(\bar{q})
\frac{d \lambda}{\lambda} = \\
=
\res_{\lambda = \infty} \lambda^{n-k}
 \cP^{+0} (q)
e^{\sum_{\ell=1}^\infty \frac{\lambda^\ell}{\ell !} ( q^N_\ell- \bar q^N_\ell)\partial_x}
 \cP^{-0}(\bar{q})
\frac{d \lambda}{\lambda} .
\label{8.35}
\end{align}

Finally we introduce a new ``spatial'' variable by replacing $q_0^{N-1}$ by $q_0^{N-1}+X$ and denote
\begin{equation}  
\label{new8.9}
\tilde\cP^{\pm 1}
=\cP^{\pm 1}|_{q_0^{N-1}\to q_0^{N-1}+X},
\qquad 
\tilde\cP^{\pm 0}
=\cP^{\pm 0}|_{q_0^{N-1}\to q_0^{N-1}+X}
 .
\end{equation}

\begin{lemma}
The Hirota equation~\eqref{Hirota} is equivalent to the following equality of residues for $k\in \mathbb Z$ and $n\geq 0$:
\begin{equation}
\begin{aligned}
\label{Hirota4}
  &\res_{\lambda = \infty} \lambda^{n(N-1)+k} \tilde\cE (q-\bar q,\lambda ) \left(\tilde\cP^{+ 1}(q,\lambda )
e^{ 
\sum_{\ell=1}^\infty \frac{\lambda^{\ell(N-1)}}{\ell !} ( q^N_\ell- \bar q^N_\ell)\partial _x
}e^{-\epsilon k\partial_x}
\tilde\cP ^{- 1}(\bar q,\lambda )\right) e^{\frac{\lambda }\epsilon(X-\bar X)} {d\lambda} \\ 
&\qquad =\res_{\lambda = \infty} \lambda^{n-k-2}\left(\tilde\cP^{+ 0}(q,\lambda)
e^{
\sum_{\ell=1}^\infty \frac{\lambda^\ell}{\ell !} ( q^N_\ell- \bar q^N_\ell)\partial _x
}e^{-\epsilon k\partial_x}
\tilde\cP^{- 0}(\bar q,\lambda)\right){d\lambda} ,
\end{aligned}
\end{equation}
for $\bar q^N_0= q^N_0  -\frac{2\epsilon}N $.
\end{lemma}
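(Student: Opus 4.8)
The plan is to start from the residue identity~\eqref{8.35}, which the preceding reductions (the conjugation producing $\hat\omega_\infty$, equation~\eqref{Hir3}, and the definition of $\tilde\cE$) already show to be equivalent to the Hirota equation~\eqref{Hirota}. It therefore suffices to transform~\eqref{8.35} into~\eqref{Hirota4} by two reversible operations: introducing the spatial variable $X$, and normalizing the constraint on $\bar q^N_0$ so that it no longer depends on $k$. Because~\eqref{8.35} is an identity in the formal variables, every step is invertible and yields a genuine equivalence.

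First I would substitute $q^{N-1}_0 \mapsto q^{N-1}_0 + X$ and $\bar q^{N-1}_0 \mapsto \bar q^{N-1}_0 + \bar X$ throughout~\eqref{8.35}. By the definition~\eqref{new8.9} this replaces $\cP^{\pm 1},\cP^{\pm 0}$ by $\tilde\cP^{\pm 1},\tilde\cP^{\pm 0}$. The only other occurrence of $q^{N-1}_0$ is the linear term in the exponent of $\tilde\cE$: its $\alpha=N-1$, $\ell=0$ summand has coefficient $\tfrac{\lambda}{\epsilon}$ (using $\left(\tfrac{1}{N-1}\right)_{(1)}=\tfrac{1}{N-1}$), so $\tilde\cE(q-\bar q,\lambda)$ acquires precisely the factor $e^{\frac{\lambda}{\epsilon}(X-\bar X)}$, which I pull out. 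On the $b=0$ line $\tilde\cE$ is absent and the vector $\cf_{\frac2N v_0,\infty}$ has vanishing components for $\alpha=2,\dots,N-1$ by~\eqref{eq:finfty-0}, so no analogous exponential arises there, matching the asymmetry between the two lines of~\eqref{Hirota4}.

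Next I would normalize the constraint. The decisive structural fact is that $\cP^{\pm b}$ depend on $q^N_0$ (resp.\ $\bar q^N_0$) only through $\cD'(q,x)=\cD|_{q^N_0\to q^N_0+x}$, that is, through the combination $q^N_0+x$; one checks that neither $\tilde\cE$ nor the exponential shift operators carry $q^N_0$, since their $q^N$-sums run over $\ell\geq 1$. Hence passing from the $k$-dependent constraint $\bar q^N_0 = q^N_0-\epsilon(k-1+\tfrac2N)$ of~\eqref{8.35} to the fixed constraint $\bar q^N_0 = q^N_0-\tfrac{2\epsilon}{N}$ of~\eqref{Hirota4} amounts to translating $\bar q^N_0$ by $-\epsilon(k-1)$, which is realized by inserting the shift operator $e^{-\epsilon(k-1)\partial_x}$ immediately to the left of $\tilde\cP^{-b}$ (it commutes with the constant-coefficient $\ell\geq 1$ shift operator). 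The key observation is then that, after this normalization, equation~\eqref{8.35} at index $(n,k)$ coincides term by term with equation~\eqref{Hirota4} at index $(n,k-1)$: on the $b=1$ line the $\lambda$-exponents match, $(N-1)n+k-1=(N-1)n+(k-1)$, the inserted $e^{-\epsilon(k-1)\partial_x}$ equals the $e^{-\epsilon k'\partial_x}$ of~\eqref{Hirota4} at $k'=k-1$, and the factor $e^{\frac{\lambda}{\epsilon}(X-\bar X)}$ from the previous step is in place; on the $b=0$ line one likewise has $n-k-1 = n-(k-1)-2$ with the same shift operator. Since both~\eqref{8.35} and~\eqref{Hirota4} are imposed for all $k\in\Z$, the reindexing $k\mapsto k-1$ is a bijection of the two families, so they are equivalent. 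Moving the $x$-independent scalar $\tilde\cE$ to the front completes the identification.

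The main obstacle is the bookkeeping of this last step: one must verify that a single constant translation of $\bar q^N_0$ accounts at once for the unit shift of the $\lambda$-exponent and for the appearance of $e^{-\epsilon k\partial_x}$, and that the two effects are synchronized by exactly the reindexing $k\mapsto k-1$ on both the $b=1$ and $b=0$ lines, notwithstanding their different $\lambda$-gradings. Closely related, one has to make sure the translation of $\bar q^N_0$ touches only $\tilde\cP^{-b}$ and neither $\tilde\cP^{+b}(q)$ (where $q^N_0$ remains a free variable) nor the shift operators and $\tilde\cE$; this is what licenses the clean commutation and the placement of $e^{-\epsilon(k-1)\partial_x}$. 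Once these index identifications are pinned down, the remaining content, namely the $X$-substitution and the commutation of $\tilde\cE$, is routine, and the off-by-$\epsilon$ discrepancy between the constraints in~\eqref{WWeq} and~\eqref{Hirota4} serves as a useful consistency check on the conventions.
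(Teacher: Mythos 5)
Your proposal is correct and follows essentially the same route as the paper, which (in the sentence following the lemma) performs exactly these operations: the shift $q_0^{N-1}\mapsto q_0^{N-1}+X$ producing the $\tilde\cP$'s and the factor $e^{\frac{\lambda}{\epsilon}(X-\bar X)}$, the translation $\bar q_0^N\mapsto \bar q_0^N-\epsilon(k-1)$ absorbed into a shift operator via the $\cD'(q,x)$ structure, the reindexing of $k$, and right-multiplication by $e^{-\epsilon k\partial_x}$. Your write-up just makes explicit the bookkeeping the paper leaves implicit.
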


Notice that in the previous equation the variable $\overline q_0^{N-1}$ was clearly shifted by $\overline{X}$. Moreover, to simplify the expression, we set $\bar{q}_0^N \mapsto \bar{q}_0^N -\epsilon(k-1)$, replaced $k$ by $k+1$,
and finally multiplied~\eqref{8.35} on the right by $e^{- \epsilon k \partial_x}$. 

This bilinear identity is equivalent to a bilinear identity involving pseudo-differential operators in $X$. This can be seen by using the fundamental lemma, see Lemma~50 in Section 8.3.2 in~\cite{carletHigherGeneraCatalan2021}. 
\begin{proposition}
The Hirota equation~\eqref{Hirota} is equivalent to
\begin{equation}
\begin{aligned}
\label{Hirota5} 
&\Big[  \tilde\cP^{+ 1}(q, \epsilon\partial_X )\tilde\cE (q-\bar q, \epsilon\partial_X ) 
e^{ 
\sum_{\ell=1}^\infty \frac{(\epsilon\partial_X)^{\ell(N-1)}}{\ell !} ( q^N_\ell- \bar q^N_\ell)\partial _x
}e^{-\epsilon k\partial_x}
(\epsilon\partial_X)^{n(N-1)+k}
\tilde\cP ^{- 1}(\bar q,-\epsilon\partial_X)^*\Big]_-  = \\ 
&\qquad =\res_{\lambda} \lambda^{n-k-2}\left(\tilde\cP^{+ 0}(q,\lambda)
e^{
\sum_{\ell=1}^\infty \frac{\lambda^\ell}{\ell !} ( q^N_\ell- \bar q^N_\ell)\partial _x
}e^{-\epsilon k\partial_x}(\epsilon\partial_X)^{-1}
\tilde\cP^{- 0}(\bar q,\lambda)\right)d\lambda
\end{aligned}
\end{equation}
for $k\in\mathbb Z$, $n\geq0$ and $\bar q^N_0= q^N_0  -\frac{2\epsilon}N$.
\end{proposition}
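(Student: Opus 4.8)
By the lemma preceding the statement, the original Hirota equation~\eqref{Hirota} is already equivalent to the residue identity~\eqref{Hirota4}, so it suffices to establish the equivalence of~\eqref{Hirota4} with~\eqref{Hirota5}. The plan is to pass from the $\lambda$-residue identity~\eqref{Hirota4}, which holds for all values of the spatial variables $X,\bar X$ and of the remaining times (under the constraint $\bar q^N_0 = q^N_0 - 2\epsilon/N$), to an identity of pseudo-differential operators in $X$ by means of the fundamental lemma (Lemma~50, Section~8.3.2 of~\cite{carletHigherGeneraCatalan2021}). The observation that puts~\eqref{Hirota4} into the shape required by that lemma is that the exponential wave factor $e^{\frac\lambda\epsilon(X-\bar X)}$ on its left-hand side is not an extra insertion but arises precisely from the term linear in $q^{N-1}_0$ inside $\tilde\cE(q-\bar q,\lambda)$: the coefficient of $q^{N-1}_0$ there equals $\lambda/\epsilon$ (the $\alpha=N-1$, $\ell=0$ contribution), so after the spatial shift $q^{N-1}_0\mapsto q^{N-1}_0+X$ and $\bar q^{N-1}_0\mapsto \bar q^{N-1}_0+\bar X$ this produces exactly $e^{\frac\lambda\epsilon(X-\bar X)}$, while the surviving factors $\tilde\cP^{+1}$, $\tilde\cE$ and $\tilde\cP^{-1}$ depend on $X,\bar X$ only through the coefficients of the $\tilde\cP$'s.

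First I would treat the left-hand side. It is a $\lambda$-residue of the form $\res_\lambda A(X,\lambda)\,e^{\frac\lambda\epsilon(X-\bar X)}\,B(\bar X,\lambda)\,d\lambda$, with $A$ collecting $\lambda^{n(N-1)+k}\tilde\cE(q-\bar q,\lambda)\tilde\cP^{+1}(q,\lambda)$ together with the $\partial_x$-exponentials, and $B=\tilde\cP^{-1}(\bar q,\lambda)$. The fundamental lemma identifies this, as a function of $(X,\bar X)$, with the integral kernel of the negative-order part of a composition of pseudo-differential operators in $X$: the left symbol is quantised by $\lambda\mapsto\epsilon\partial_X$, the right factor is replaced by its formal adjoint with $\lambda\mapsto-\epsilon\partial_X$, yielding $\tilde\cP^{-1}(\bar q,-\epsilon\partial_X)^{*}$, and the residue against the wave factor becomes the projection $[\,\cdot\,]_-$. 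This reproduces verbatim the left-hand side of~\eqref{Hirota5}.

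Then I would treat the right-hand side, where no exponential wave factor in $(X,\bar X)$ is present: the only $X,\bar X$ dependence sits in the coefficients of $\tilde\cP^{\pm0}$. Applying the degenerate case of the same lemma, the $\lambda$-residue is retained as a genuine residue, but the absent exponential is compensated by a single factor $(\epsilon\partial_X)^{-1}$ inserted between the two $\tilde\cP^{0}$ symbols, supplying the $\partial_X^{-1}$ kernel that the wave-factor mechanism would otherwise generate; this is exactly the right-hand side of~\eqref{Hirota5}. Running both computations in reverse gives the converse implication. Throughout one carries the constraint $\bar q^N_0=q^N_0-2\epsilon/N$, together with the normalisation adjustments $k\mapsto k+1$ and $\bar q^N_0\mapsto \bar q^N_0-\epsilon(k-1)$ recorded just after~\eqref{Hirota4}, and leaves the $\partial_x$-exponentials untouched as commuting coefficients.

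The step I expect to be the main obstacle is the bookkeeping that makes the two sides of~\eqref{Hirota4} transform \emph{differently} under one and the same lemma: the wave factor on the left forces the substitution $\lambda\mapsto\epsilon\partial_X$, the formal adjoint on the barred factor, and the projection $[\,\cdot\,]_-$, whereas its absence on the right must be matched by the compensating $(\epsilon\partial_X)^{-1}$ and a retained $\lambda$-residue. Making this precise requires fixing the quantisation (normal) ordering when $\lambda\mapsto\epsilon\partial_X$ is applied to symbols such as $\tilde\cP^{+1}(q,\lambda)$ and $\tilde\cE(q-\bar q,\lambda)$ whose coefficients depend on $X$, checking that the formal-adjoint operation is consistent with the earlier definition of $\tilde\cP^{-1}$, and verifying that the fractional and negative powers of $\epsilon\partial_X$ arising through $\tilde\cE$ and through $(\epsilon\partial_X)^{n(N-1)+k}$ are interpreted so that the resulting operator identity is equivalent, term by term in the formal expansion at $\lambda\sim\infty$, to the regularity statement for the Hirota one-form that~\eqref{Hirota4} encodes.
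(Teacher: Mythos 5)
Your proposal is correct and follows essentially the same route as the paper: the paper's entire argument for this proposition is the reduction to the residue identity~\eqref{Hirota4} via the preceding lemma, followed by an appeal to the fundamental lemma (Lemma~50 of~\cite{carletHigherGeneraCatalan2021}) to convert the wave-factor side into the $[\,\cdot\,]_-$ projection with the formal adjoint and the wave-factor-free side into the retained residue with the inserted $(\epsilon\partial_X)^{-1}$. Your identification of the origin of $e^{\frac{\lambda}{\epsilon}(X-\bar X)}$ as the $\alpha=N-1$, $\ell=0$ term of $\tilde\cE$ under the spatial shift, and of the bookkeeping issues, matches the paper's (largely implicit) reasoning.
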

Notice that in the previous equation we have $\bar{X}=X$.  As usual the symbol $P(\lambda) = \sum_k p_k \lambda^k$ is quantised to an operator $P(\epsilon \partial_X)$ and its formal adjoint $P^*(\epsilon \partial_X)$ as
\begin{equation}
P(\epsilon \partial_X) = \sum_k p_k (\epsilon \partial_X)^k, \qquad 
P^*(\epsilon \partial_X) = \sum_{k} (-\epsilon \partial_X)^k p_k. 
\end{equation}
Moreover $[ \cdot ]_-$ denotes the projection to negative powers of $\epsilon \partial_X$. 

Let us spell out the main consequences of the Hirota equation~\eqref{Hirota5}. 
Let us a first set $q=\bar q$, except for the case $\bar q^N_0=q^N_0  -\frac{2\epsilon}N$.  We obtain
\begin{equation}
\begin{aligned}
\label{Hirota6} 
  &\Big[   \tilde\cP^{+ 1}(x,X,q, \epsilon\partial_X )
 e^{-\epsilon k\partial_x}
(\epsilon\partial_X)^{n(N-1)+k}
\tilde\cP ^{- 1}(x-\frac{2\epsilon}N ,X,
q,-\epsilon\partial_X)^*\Big]_-  =\\
 &\quad=\res_{\lambda } \lambda^{n-k-2}\left(\tilde\cP^{+ 0}(x,X,q,\lambda)
 e^{-\epsilon k\partial_x}(\epsilon\partial_X)^{-1}
\tilde\cP^{- 0}(x-\frac{2\epsilon}N ,X, q,\lambda)\right)d\lambda.
\end{aligned}
\end{equation}

Let us consider the case $n\leq k$. In such case the righthand side in~\eqref{Hirota6} vanishes since $\tilde{\cP}^{\pm 0}$ only contains non-positive powers of $\lambda$. For $n=0$ and $k=0$ we obtain that 
\begin{equation} \label{Pinvvv}
\tilde\cP ^{- 1}(x-\frac{2\epsilon}N,X,q,-\epsilon\partial_X)^*= \tilde\cP^{+ 1}(x,X,q, \epsilon\partial_X )^{-1},
\end{equation}
since  $\tilde\cP^{+1}(q, \epsilon\partial_X)$ is a power series in $(\epsilon \partial_X)^{-1}$ with leading term equal to $1$. Substituting this back, we obtain that~\eqref{Hirota6} for $n\leq k$ has the form
\begin{equation}
\label{Hirota6b}
\Big[   \tilde\cP^{+ 1}(x,X,q, \epsilon\partial_X )
 e^{-\epsilon k\partial_x}
(\epsilon\partial_X)^{n(N-1)+k}
\tilde\cP^{+ 1}(x,X,q, \epsilon\partial_X )^{-1}
\Big]_-  =0.
\end{equation}



The constraints encoded by this equations boil down to the two cases $n=0$ and $n=1$ for $k=1$. Indeed, defining
\begin{equation}
S(q,e^{\epsilon \partial_x},\epsilon \partial_X) =
\tilde\cP^{+ 1}(q, \epsilon\partial_X )
 e^{-\epsilon \partial_x}
(\epsilon\partial_X)
\tilde\cP ^{+ 1}(q,\epsilon\partial_X)^{-1}
\end{equation}
and
\begin{equation}
T(q,e^{\epsilon \partial_x},\epsilon \partial_X) :=
 \tilde\cP^{+ 1}(q, \epsilon\partial_X )
 e^{-\epsilon \partial_x}
(\epsilon\partial_X)^{N}
\tilde\cP ^{+1}(q,\epsilon\partial_X)^{-1}
\end{equation}
the equation~\eqref{Hirota6b} implies, for $n=0$, $n=1$ and $k=1$, that both operators are differential in $X$, i.e., they do not contain negative powers of $\epsilon \partial_X$. In particular we have
\begin{equation}
\label{S}
S(q,e^{\epsilon \partial_x},\epsilon \partial_X) =
\left(
\epsilon\partial_X e^{-\epsilon \partial_x}+\tilde\cP^{+ 1}(q)_{-1} e^{-\epsilon \partial_x}-e^{-\epsilon \partial_x}\tilde\cP^{+1}(q)_{-1}\right)
=\left(\epsilon\partial_X-\phi(q)\right)e^{-\epsilon \partial_x},
\end{equation}
where $\phi := (1- e^{-\epsilon \partial_x})( \tilde\cP^{+1}(q)_{-1})$ and $T(q,e^{\epsilon \partial_x},\epsilon \partial_X) = T(q,e^{\epsilon \partial_x},\epsilon \partial_X)_+$. Moreover, once these two constraints are satisfied, the remaining contraints in~\eqref{Hirota6b} are also implied, since they only state that $T^n S^{k-n}$ is differential. 

%
%

Let us now define the Lax operator as
\begin{equation}
 \cL (q,\epsilon\partial_X) := \tilde\cP^{+ 1}(q, \epsilon\partial_X )
(\epsilon\partial_X)^{N-1}
\tilde\cP^{+ 1}(q, \epsilon\partial_X )^{-1}.
\end{equation}
Its differential part is of the form
\begin{equation}
\cL (q,\epsilon\partial_X)_+= (\epsilon\partial_X)^{N-1} + a_{2} (\epsilon\partial_X)^{N-3} + \cdots + a_{N-1} .
\end{equation}
We also use the notation
$L(q, \epsilon\partial_X) := \tilde\cP^{+ 1}(q, \epsilon\partial_X )
\epsilon\partial_X
\tilde\cP^{+ 1}(q, \epsilon\partial_X )^{-1}$ so that $ \cL (q,\epsilon\partial_X) =L(q, \epsilon\partial_X )^{N-1}$. 
Notice that we have
\begin{equation}
\cL (q,\epsilon\partial_X) = T(q,e^{\epsilon \partial_x},\epsilon \partial_X)    S(q,e^{\epsilon \partial_x},\epsilon \partial_X)^{-1}
=    S(q,e^{\epsilon \partial_x},\epsilon \partial_X)^{-1} T(q,e^{\epsilon \partial_x},\epsilon \partial_X),
\end{equation}
since $T(q,e^{\epsilon \partial_x},\epsilon \partial_X)$ and $S(q,e^{\epsilon \partial_x},\epsilon \partial_X)$ commute. 
If we define the differential operator $\cT(q,\epsilon \partial_X)$  of order $N$ and $\cS(q,\epsilon \partial_X)$ of order $1$ by 
\begin{equation}
T(q,e^{\epsilon \partial_x},\epsilon \partial_X) = \cT(q,\epsilon \partial_X) e^{-\epsilon \partial_x} , \qquad 
S(q,e^{\epsilon \partial_x},\epsilon \partial_X) = \cS(q,\epsilon \partial_X) e^{-\epsilon \partial_x},
\end{equation}
i.e. $\cS(q,\epsilon \partial_X) =\epsilon\partial_X-\phi(q)$, we can also write
\begin{equation}
\cL (q,\epsilon\partial_X) = \cT(q,\epsilon \partial_X)    \cS(q,\epsilon \partial_X)^{-1}
=  e^{\epsilon \partial_x}  \cS(q,\epsilon \partial_X)^{-1} \cT(q,\epsilon \partial_X) e^{-\epsilon \partial_x}.
\end{equation}
Notice that $\cS(q,\epsilon \partial_X)$ and $ \cT(q,\epsilon \partial_X)$ do not commute. 

Taking into account the steps performed so far, equation~\eqref{Hirota6} becomes
\begin{equation}
\label{LSmin}
(\cL^n S^{k})_-=\res_{\lambda } \lambda^{n-k-2}\tilde\cP^{+ 0}(x,X,q,\lambda)e^{-\epsilon k\partial_x}(\epsilon\partial_X)^{-1} 
\tilde\cP^{- 0}(x-\frac{2\epsilon}N ,X, q,\lambda)
d\lambda 
\end{equation}
for $n\geq0$ and $k\in\Z$. As observed above, for $n \leq k$ both sides are trivial. 
For arbitrary fixed $k$, equation~\eqref{LSmin} implies that
\begin{equation}
( \sum_{n=0}^\infty \lambda^{-n}  \cL^n S^k )_-
=\left( \lambda^{-k-1} \tilde\cP^{+ 0}(x,X,q,\lambda)
e^{-\epsilon k \partial_x} 
(\epsilon\partial_X)^{-1}
\tilde\cP^{- 0}( x-\frac{2\epsilon}N,X,q,\lambda) \right)_{\lambda^{\leq 0}},
\end{equation}
which in particular for $k=0$ gives
\begin{equation}
\label{Lminusseries}
\sum_{n=0}^\infty \lambda^{-n}(\cL (q,\epsilon\partial_X)^{n+1})_- =
 \tilde\cP^{+ 0}(x,X,q,\lambda)
  (\epsilon\partial_X)^{-1}
\tilde\cP^{- 0}( x-\frac{2\epsilon}N,X,q,\lambda).
\end{equation}

For $n=1$ and $k=0$ equation~\eqref{LSmin} gives
\begin{equation}
\cL (q,\epsilon\partial_X)_- = \tilde\cP^{+ 0}(x ,X ,q)_0
(\epsilon\partial_X)^{-1}
\tilde\cP^{- 0}( x-\frac{2\epsilon}N,X,q)_0
\end{equation}
while for $n=0$ and $k=-1$ we get
\begin{equation}
S^{-1} = \tilde\cP^{+ 0}(x,X,q,\lambda)_0
e^{\epsilon  \partial_x} 
(\epsilon\partial_X)^{-1}
\tilde\cP^{- 0}( x-\frac{2\epsilon}N,X,q,\lambda)_0.
\end{equation}
From the last two equations and~\eqref{S} we see that
\begin{equation}
\cL (q,\epsilon\partial_X)_- = a_N (\epsilon \partial_X - a_1)^{-1}
\end{equation}
where $a_N = \tilde\cP^{+ 0}(x ,X ,q)_0 \tilde\cP^{+ 0}(x-\epsilon ,X ,q)_0^{-1}$ and $a_1 = \phi$. Similarly we can also write $\cL (q,\epsilon\partial_X)_- =  (\epsilon \partial_X - \tilde{a}_1)^{-1} \tilde{a}_N$ for $\tilde{a}_1 = \phi(x+\epsilon)$ and $\tilde{a}_N= \tilde\cP^{- 0}( x-\frac{2\epsilon}N+\epsilon,X,q)_0^{-1} \tilde\cP^{- 0}( x-\frac{2\epsilon}N,X,q)_0$.

%
%
%
%

The standard procedure followed so far yields the Lax representation for most of the times, excluding the times $q^N_\ell$ for $\ell\geq0$, and allows to identify the hierarchy with the rational $(N-1)$-constrained KP hierarchy.
\begin{proposition} \label{prop:RRKP}
The Lax operator $\cL$ is of the form 
\begin{equation}
\cL (q,\epsilon\partial_X)= (\epsilon\partial_X)^{N-1} + a_{2} (\epsilon\partial_X)^{N-3} + \cdots + a_{N-1} + a_N (\epsilon \partial_X - a_1)^{-1}
\end{equation}
and  satisfies the Lax equations
\begin{equation}
\label{L8.40}
\begin{aligned}
\epsilon \frac{\partial    \cL (q,\epsilon\partial_X)}{\partial q_\ell^\alpha}  &=\left[
\left(B_\ell^\alpha(q,\epsilon\partial_X) \right)_+, \cL(q,\epsilon\partial_X)\right]
\\
&=-\left[
\left(B_\ell^\alpha(q,\epsilon\partial_X) \right)_-, \cL(q,\epsilon\partial_X)\right]
,\quad 1\le \alpha<N,
 \end{aligned}
\end{equation}
where
\begin{equation}
\label{Boper1}
B_\ell^1(q,\epsilon \partial_X) := 
\frac{N+2}{2N }
  \frac{\cL(q,\epsilon\partial_X)^{ 
 \ell+1}}{ (\ell+1)!},\quad 
B_\ell^\alpha (q,\epsilon \partial_X):=
  \frac{L(q,\epsilon\partial_X)^{ {N-\alpha} 
 +\ell(N-1)}}{(N-1) \left(\frac{N-\alpha}{N-1}\right)_{(\ell+1)}}
,
\end{equation}
for $\quad 1<\alpha<N$.
\end{proposition}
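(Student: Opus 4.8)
The plan is to treat the two assertions of the proposition separately, since the rational-constrained \emph{form} of $\cL$ is already essentially in hand from the computations preceding the statement, whereas the \emph{Lax equations} require the standard Sato--Wilson dressing argument applied to the bilinear identity~\eqref{Hirota5}. For the form, I would simply assemble the pieces derived just above: the differential part $\cL_+ = (\epsilon\partial_X)^{N-1} + a_2(\epsilon\partial_X)^{N-3} + \cdots + a_{N-1}$ was read off from $\cL = \cT\cS^{-1}$ with $\cT$ differential of order $N$ and $\cS = \epsilon\partial_X - \phi$ of order $1$ (the vanishing of the subleading $(\epsilon\partial_X)^{N-2}$ coefficient being part of that shape), while the negative part was computed to be $\cL_- = a_N(\epsilon\partial_X - a_1)^{-1}$ from the $n=1,k=0$ and $n=0,k=-1$ specializations of~\eqref{Hirota6}. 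Adding the two gives the claimed form, so this half needs no new calculation.

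For the Lax equations I would introduce the wave function $w^{+1}(q,\lambda) = \tilde\cP^{+1}(q,\epsilon\partial_X)e^{\xi}$, where $\xi$ collects the linear-in-times exponent coming from $\tilde\cE$ together with the spatial factor $\lambda X/\epsilon$, and its adjoint $w^{*,+1}$. Differentiating $w^{+1}$ in $q^\alpha_\ell$ produces two contributions: the term $(\epsilon\partial_{q^\alpha_\ell}\tilde\cP^{+1})(\tilde\cP^{+1})^{-1}w^{+1}$, which is of strictly negative order because $\tilde\cP^{+1}$ has leading term $1$, and the term obtained by bringing down the coefficient of $q^\alpha_\ell$ in $\xi$. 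Since $Lw^{+1} = \lambda w^{+1}$, the dressed monomial $L^{N-\alpha+\ell(N-1)} = \cL^{\frac{N-\alpha}{N-1}+\ell}$ multiplies $w^{+1}$ by $\lambda^{N-\alpha+\ell(N-1)}$, so this second contribution equals $B_\ell^\alpha w^{+1}$ with the prefactors $\tfrac1{(N-1)(\frac{N-\alpha}{N-1})_{(\ell+1)}}$ for $2\le\alpha\le N-1$ and $\tfrac{N+2}{2N(\ell+1)!}$ (with power $\cL^{\ell+1}$) for $\alpha=1$ read directly off $\tilde\cE$. The bilinear identity, in the form $\res_\lambda (\epsilon\partial_{q^\alpha_\ell}w^{+1})\,w^{*,+1}\,d\lambda = 0$ valid for $\alpha\neq N$, then forces $\epsilon\partial_{q^\alpha_\ell}w^{+1} = (B_\ell^\alpha)_+ w^{+1}$, equivalently the Sato--Wilson equation $\epsilon\partial_{q^\alpha_\ell}\tilde\cP^{+1} = -(B_\ell^\alpha)_-\tilde\cP^{+1}$. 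Conjugating $(\epsilon\partial_X)^{N-1}$ by $\tilde\cP^{+1}$ and using this equation together with $[B_\ell^\alpha,\cL]=0$ (both operators being functions of the same dressed operator) then yields $\epsilon\partial_{q^\alpha_\ell}\cL = -[(B_\ell^\alpha)_-,\cL] = [(B_\ell^\alpha)_+,\cL]$, which is the Lax representation.

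The main obstacle is the clean extraction of the Sato--Wilson equations from~\eqref{Hirota5}: because the bilinear identity here carries the extra rational ``$+0$'' side, one must verify that for the non-logarithmic times $\alpha\neq N$ this side contributes nothing to the negative-order part being matched, so that the residue-pairing argument of~\cite{CvdL2013,carletHigherGeneraCatalan2021} applies verbatim. Concretely, the difference $\epsilon\partial_{q^\alpha_\ell}w^{+1} - (B_\ell^\alpha)_+w^{+1}$ is a priori of negative order, and one needs the adjoint wave function and the vanishing of the relevant residues to conclude it is zero; this is precisely where the structure of the constraint---the fact that $\tilde\cP^{\pm 0}$ carry only the opposite powers of $\lambda$---must be used. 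The genuinely delicate logarithmic flows $q^N_\ell$, whose generators $B_\ell^N$ involve $\log\cL$, lie outside this proposition and are handled separately.
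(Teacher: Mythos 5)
Your proposal is correct and follows essentially the same route as the paper: the form of $\cL$ is assembled from the computations preceding the statement, and the Lax equations are obtained by differentiating the bilinear identity~\eqref{Hirota5} in $q^\alpha_\ell$ at $n=k=0$, observing that the ``$+0$'' side vanishes for $\alpha\neq N$ by power counting in $\lambda$, extracting the Sato--Wilson equation $\epsilon\partial_{q^\alpha_\ell}\tilde\cP^{+1}=-(B^\alpha_\ell)_-\tilde\cP^{+1}$ via~\eqref{Pinvvv}, and conjugating using $[B^\alpha_\ell,\cL]=0$. Your wave-function/residue-pairing phrasing is just the pre-quantized form of~\eqref{Hirota5} and is not a substantive deviation.
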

\begin{proof}
We obtain the following Sato-Wilson equations if we differentiate \eqref{Hirota5} with respect to $q^\alpha_\ell$, for $\alpha\ne N$ and set $n=k=0$, noticing that the righthand side vanishes and using~\eqref{Pinvvv}
\begin{equation}
\label{8.40}
\epsilon \frac{\partial    \tilde\cP ^{+1}(q,\epsilon\partial_X)}{\partial q_\ell^\alpha} \tilde\cP ^{+1}(q,\epsilon\partial_X)^{-1}=-B_\ell^\alpha(q,\epsilon\partial_X)_-
   ,\quad 1\leq\alpha<N.
\end{equation}
From this equations \eqref{L8.40} immediately follows.
\end{proof}

In the literature the constrained KP hierarchy is also represented as in the following proposition. 

\begin{proposition}
The Lax operator $\cL$ is of the form
\begin{equation}
\label{Hirota8}
\begin{aligned}
\cL(q, \epsilon\partial_X   )=
\cL(q, \epsilon\partial_X  )_++
\tilde\cP^{+ 0}(x,X,q)_0
 (\epsilon\partial_X)^{-1}
\tilde\cP^{- 0}( x-\frac{2\epsilon}N,X,q)_0,
\end{aligned}
\end{equation}
where the $\tilde\cP^{\pm 0}(x-\frac{\epsilon}N\pm\frac{\epsilon}N ,X,q)_0$ satisfy
\begin{equation}
\label{eigen}
\begin{aligned}
\epsilon \frac{\partial    \tilde\cP^{+ 0}(x ,X,q)_0}{\partial q_\ell^\alpha}  &= 
\left(B_\ell^\alpha(q,\epsilon\partial_X) \right)_+ \left(
\tilde\cP^{+ 0}(x ,X,q)_0
\right),\\
 \epsilon \frac{\partial    \tilde\cP^{- 0}( x-\frac{2\epsilon}N,X,q)_0}{\partial q_\ell^\alpha}  &= -
\left(B_\ell^\alpha(q,\epsilon\partial_X)^* \right)_+ \left(
\tilde\cP^{- 0}( x-\frac{2\epsilon}N,X,q)_0
\right)
,\quad 1\le \alpha<N.
\end{aligned}
\end{equation}
\end{proposition}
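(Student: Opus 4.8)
The first assertion of the proposition, formula~\eqref{Hirota8}, requires essentially no new work: the decomposition $\cL = \cL_+ + \cL_-$ together with the expression $\cL_- = \tilde\cP^{+0}(x,X,q)_0 (\epsilon\partial_X)^{-1}\tilde\cP^{-0}(x-\frac{2\epsilon}N,X,q)_0$, already obtained as the $n=1$, $k=0$ case of~\eqref{LSmin} (equivalently, the $\lambda^0$ coefficient of~\eqref{Lminusseries}), yields it at once. The entire content therefore lies in the evolution equations~\eqref{eigen}.

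The plan is to differentiate the generating identity~\eqref{Lminusseries}, which I abbreviate as $\sum_{n\ge0}\lambda^{-n}(\cL^{n+1})_- = \Phi(\lambda)(\epsilon\partial_X)^{-1}\Psi(\lambda)$ with $\Phi(\lambda) := \tilde\cP^{+0}(x,X,q,\lambda)$ and $\Psi(\lambda) := \tilde\cP^{-0}(x-\frac{2\epsilon}N,X,q,\lambda)$, with respect to a time $q^\alpha_\ell$ with $\alpha\ne N$. On the left I would use the already-established Lax equations~\eqref{L8.40}, $\epsilon\partial_{q^\alpha_\ell}\cL = [(B^\alpha_\ell)_+,\cL]$, together with the elementary remark that $[(B^\alpha_\ell)_+,\cL^{n+1}]_- = [(B^\alpha_\ell)_+,(\cL^{n+1})_-]_-$, since the commutator of two differential operators is differential and hence has vanishing negative part. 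As $B^\alpha_\ell$ does not depend on $\lambda$, summing against $\lambda^{-n}$ turns the differentiated left-hand side into $[(B^\alpha_\ell)_+,\Phi(\lambda)(\epsilon\partial_X)^{-1}\Psi(\lambda)]_-$.

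The key computational step is then the pair of projection identities for a differential operator $A$ acting on a rank-one pseudo-differential operator, $(A\,\Phi(\epsilon\partial_X)^{-1}\Psi)_- = (A\Phi)(\epsilon\partial_X)^{-1}\Psi$ and $(\Phi(\epsilon\partial_X)^{-1}\Psi\,A)_- = \Phi(\epsilon\partial_X)^{-1}(A^*\Psi)$, which give $[(B^\alpha_\ell)_+,\Phi(\epsilon\partial_X)^{-1}\Psi]_- = ((B^\alpha_\ell)_+\Phi)(\epsilon\partial_X)^{-1}\Psi - \Phi(\epsilon\partial_X)^{-1}((B^\alpha_\ell)_+^{*}\Psi)$. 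Comparing this with the differentiated right-hand side $(\epsilon\partial_{q^\alpha_\ell}\Phi)(\epsilon\partial_X)^{-1}\Psi + \Phi(\epsilon\partial_X)^{-1}(\epsilon\partial_{q^\alpha_\ell}\Psi)$ produces the single identity $(\epsilon\partial_{q^\alpha_\ell}\Phi - (B^\alpha_\ell)_+\Phi)(\epsilon\partial_X)^{-1}\Psi = -\Phi(\epsilon\partial_X)^{-1}(\epsilon\partial_{q^\alpha_\ell}\Psi + (B^\alpha_\ell)_+^{*}\Psi)$. Reading this as an equality of two rank-one operators and separating variables (that is, matching their integral kernels) forces $\epsilon\partial_{q^\alpha_\ell}\Phi - (B^\alpha_\ell)_+\Phi = c\,\Phi$ and $\epsilon\partial_{q^\alpha_\ell}\Psi + (B^\alpha_\ell)_+^{*}\Psi = -c\,\Psi$ for some scalar $c$ independent of $X$; extracting the $\lambda^0$ coefficients then gives~\eqref{eigen} up to this scalar.

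The main obstacle is showing that the gauge scalar $c$ vanishes for the specific normalization of $\tilde\cP^{\pm0}$ fixed by the definitions~\eqref{8.9} and~\eqref{new8.9}. This $c$ is exactly the usual freedom $(\Phi,\Psi)\mapsto(g\Phi,g^{-1}\Psi)$ in the factorization of $\cL_-$, under which the Lax flows of $\cL$ are invariant, so it cannot be detected from $\cL$ alone and must be pinned down from the wave functions themselves. I would resolve it by differentiating the bilinear identity~\eqref{Hirota5} directly and isolating, for suitable $n$ and $k$, the contribution of $\tilde\cP^{\pm0}$, in complete parallel with the way the Sato--Wilson equations~\eqref{8.40} and~\eqref{eq:Sato-Wilson} for $\tilde\cP^{\pm1}$ and $P^{\pm0}$ were derived; the normalization built into the Hirota identity then forces $c=0$. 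Alternatively, one can verify $c=0$ at the level of the leading $\lambda^0$ coefficient using the explicit ratio-of-$\cD$ form of $\tilde\cP^{\pm0}$ together with the inversion relation~\eqref{Pinvvv}.
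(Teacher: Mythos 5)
Your derivation of~\eqref{Hirota8} matches the paper's (it is the $n=1$, $k=0$ case of~\eqref{LSmin}), but for~\eqref{eigen} you take a genuinely different route, and you have correctly diagnosed where it falls short. You differentiate the \emph{diagonal} identity~\eqref{Lminusseries} (where $q=\bar q$ has already been imposed), so both factors $\tilde\cP^{+0}$ and $\tilde\cP^{-0}$ depend on the time being varied; the rank-one separation of variables then only determines the flows up to the gauge freedom $(\Phi,\Psi)\mapsto(g\Phi,g^{-1}\Psi)$, i.e.\ up to your scalar $c$ (which, strictly speaking, could a priori also depend on $x$ and the times, not just be a constant — only $X$-independence follows from matching kernels). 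The paper avoids this ambiguity entirely by differentiating the \emph{off-diagonal} bilinear identity~\eqref{Hirota5} at $n=1$, $k=0$ with respect to the unbarred time $q^\alpha_\ell$ before setting $q=\bar q$: on the right-hand side only $\tilde\cP^{+0}(q)$ gets differentiated, while on the left-hand side the Sato--Wilson equation~\eqref{8.40} together with $(B_+\cL)_-=(B_+\cL_-)_-$ produces $B_+\bigl(\tilde\cP^{+0}_0\bigr)(\epsilon\partial_X)^{-1}\tilde\cP^{-0}_0$ directly; cancelling the common right factor gives the first equation of~\eqref{eigen} with no undetermined scalar, and differentiating in $\bar q^\alpha_\ell$ with the adjoint Sato--Wilson equation gives the second. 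Your proposed remedy for fixing $c=0$ --- ``differentiate the bilinear identity~\eqref{Hirota5} directly, in parallel with the derivation of~\eqref{8.40}'' --- is precisely the paper's proof, so your argument is sound once that fallback is carried out; but note that once you do so, your preliminary computation via~\eqref{Lminusseries} and the Lax equations~\eqref{L8.40} becomes redundant rather than a shortcut. What your route buys is a conceptual explanation of \emph{why} a normalization issue must arise (the Lax operator alone cannot see the rank-one gauge); what the paper's route buys is that the two-variable Hirota identity carries exactly the extra information needed to kill it.
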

\begin{proof}
If we differentiate \eqref{Hirota5}, for $n=1$, $k=0$,  by $q^\alpha_j$ and use \eqref{8.40}, we obtain that 
\[
\begin{aligned}
\epsilon \frac{\partial \tilde\cP^{+ 0}(x,X,q)_0}{\partial q_\ell^\alpha}&
 (\epsilon\partial_X)^{-1}
\tilde\cP^{- 0}( x-\frac{2\epsilon}N,X,q)_0=\\
=&\left(\epsilon
\frac{\partial    \tilde\cP ^{+1}(q,\epsilon\partial_X)}{\partial q_\ell^\alpha} 
 (\epsilon\partial_X)^{N-1}
\tilde\cP ^{+1}(q,\epsilon\partial_X)^{-1}
+B_\ell^\alpha (q,\epsilon\partial_X)\cL(q,\epsilon\partial_X)
\right)_-\\
=&
\left(-
 B_\ell^\alpha (q,\epsilon\partial_X)_-\cL(q,\epsilon\partial_X)
+B_\ell^\alpha (q,\epsilon\partial_X)\cL(q,\epsilon\partial_X)
\right)_-\\
=&
\left( 
B_\ell^\alpha (q,\epsilon\partial_X)_+\cL(q,\epsilon\partial_X)_-
\right)_-\\
=&\left(B_\ell^\alpha (q,\epsilon\partial_X)_+
\tilde\cP^{+ 0}(x,X,q)_0 
 (\epsilon\partial_X)^{-1}
\tilde\cP^{- 0}( x-\frac{2\epsilon}N,X,q)_0\right)_-\\
=&B_\ell^\alpha (q,\epsilon\partial_X)_+
\left(\tilde\cP^{+ 0}(x,X,q)_0 \right)
 (\epsilon\partial_X)^{-1}
\tilde\cP^{- 0}( x-\frac{2\epsilon}N,X,q)_0.
\end{aligned}
\]
wich gives the first equation of \eqref{eigen}. The second equation can be obtained by differentiating  \eqref{Hirota5},   for $n=1$, $k=0$,   by $\bar q^\alpha_j$ and using the adjoint version of the Sato-Wilson equations \eqref{8.40}.
\end{proof}

Using the Sato-Wilson equations~\eqref{8.40},  we obtain the following version of Krichever's rational reduction of KP, cf. \cite[Theorem 2]{Kr1995} or \cite{HvdL}:
\begin{proposition} \label{prop:KricheverRR}
The  differential operators $\cT$ and $\cS$ satisfy
\begin{equation}
\label{Krichever}
\begin{aligned}
\epsilon m_{\ell}^\alpha \frac{ \partial \cT }{\partial q_\ell^\alpha} &=
\left(\left(\cT \cS^{-1}\right)^{\frac{ {N-\alpha} 
 +\ell(N-1)}{N-1}}\right)_+\cT -
\cT\left(\left(\cS^{-1}\cT\right)^{\frac{ {N-\alpha} 
 +\ell(N-1)}{N-1}}\right)_+,\\
\epsilon m_{\ell}^\alpha \frac{ \partial \cS }{\partial q_\ell^\alpha} &=
\left(\left(\cT \cS^{-1}\right)^{\frac{ {N-\alpha} 
 +\ell(N-1)}{N-1}}\right)_+\cS -
\cS\left(\left(\cS^{-1}\cT\right)^{\frac{ {N-\alpha} 
 +\ell(N-1)}{N-1}}\right)_+,
\end{aligned}
\end{equation}
where 
\[
m_\ell^1=\frac{N+2}{(\ell+1)!2N },\quad 
m_\ell^\alpha
=  \frac1{(N-1) \left(\frac{N-\alpha}{N-1}\right)_{(\ell+1)}}
,\quad 1<\alpha<N.
\]
\end{proposition}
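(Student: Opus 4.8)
The plan is to derive \eqref{Krichever} from the Sato--Wilson equations \eqref{8.40} by expressing $\cT$ and $\cS$ through a single dressing operator together with a discrete shift. Set $P\coloneqq\tilde\cP^{+1}(q,\epsilon\partial_X)$, a pseudo-differential operator in $\epsilon\partial_X$ with leading term $1$, and write $P^\sigma\coloneqq e^{-\epsilon\partial_x}P\,e^{\epsilon\partial_x}$ for the same operator with the variable $x$ shifted by $-\epsilon$. Because $e^{\pm\epsilon\partial_x}$ commutes with $\epsilon\partial_X$, the definitions of $T,S$ and of $\cT=Te^{\epsilon\partial_x}$, $\cS=Se^{\epsilon\partial_x}$ collapse to the factorized dressing form
\begin{equation*}
\cT=P(\epsilon\partial_X)^N(P^\sigma)^{-1},\qquad \cS=P(\epsilon\partial_X)(P^\sigma)^{-1},\qquad \cL=P(\epsilon\partial_X)^{N-1}P^{-1}.
\end{equation*}
The first step is to record these identities carefully and to note that $\cS^{-1}\cT=\cL^\sigma$, so that $\cT$ and $\cS$ intertwine the two dressing frames $P$ and $P^\sigma$.

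Next I would differentiate this factorization. Equation \eqref{8.40} reads $\epsilon\,\partial_{q^\alpha_\ell}P=-(B^\alpha_\ell)_-\,P$ for $1\le\alpha<N$; since the $x$-shift commutes with $\partial_{q^\alpha_\ell}$ and with the projection onto negative powers of $\epsilon\partial_X$, it follows that $\epsilon\,\partial_{q^\alpha_\ell}P^\sigma=-\big((B^\alpha_\ell)^\sigma\big)_-\,P^\sigma$, where $(\,\cdot\,)^\sigma$ denotes the $x$-shift. Applying the product rule to $\cT=P(\epsilon\partial_X)^N(P^\sigma)^{-1}$, and using $\epsilon\,\partial_{q^\alpha_\ell}(P^\sigma)^{-1}=(P^\sigma)^{-1}\big((B^\alpha_\ell)^\sigma\big)_-$, all inner occurrences of $P$ and $P^\sigma$ cancel and I obtain
\begin{equation*}
\epsilon\,\partial_{q^\alpha_\ell}\cT=-(B^\alpha_\ell)_-\,\cT+\cT\,\big((B^\alpha_\ell)^\sigma\big)_-,
\end{equation*}
together with the identical relation having $\cS$ in place of $\cT$. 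Rewriting $-(B^\alpha_\ell)_-=(B^\alpha_\ell)_+-B^\alpha_\ell$ then produces the differential part $(B^\alpha_\ell)_+$ that yields the first term on the right-hand side of \eqref{Krichever}.

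The core of the argument consists of two identities that follow directly from the explicit dressing forms, valid for $B^\alpha_\ell=m^\alpha_\ell\,\cL^{\rho}=m^\alpha_\ell\,P(\epsilon\partial_X)^{(N-1)\rho}P^{-1}$ with $\rho=\tfrac{N-\alpha+\ell(N-1)}{N-1}$: the conjugation $\cS^{-1}B^\alpha_\ell\cS=(B^\alpha_\ell)^\sigma$ and the intertwining $B^\alpha_\ell\,\cT=\cT\,(B^\alpha_\ell)^\sigma$, $B^\alpha_\ell\,\cS=\cS\,(B^\alpha_\ell)^\sigma$. Both are immediate once the inner factors $P,P^\sigma$ are cancelled against powers of $\epsilon\partial_X$; here one uses that $\cL^{\rho}=L^{\,N-\alpha+\ell(N-1)}$ is a genuine integer power of $L=\cL^{1/(N-1)}$, so that $B^\alpha_\ell$ is a well-defined operator. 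Substituting $B^\alpha_\ell\,\cT=\cT\,(B^\alpha_\ell)^\sigma=\cT\big(((B^\alpha_\ell)^\sigma)_++((B^\alpha_\ell)^\sigma)_-\big)$ into the displayed derivative of $\cT$ cancels the two $((B^\alpha_\ell)^\sigma)_-$ contributions and gives
\begin{equation*}
\epsilon\,\partial_{q^\alpha_\ell}\cT=(B^\alpha_\ell)_+\,\cT-\cT\,(\cS^{-1}B^\alpha_\ell\cS)_+,
\end{equation*}
with the analogous equation for $\cS$. Finally I would substitute $B^\alpha_\ell=m^\alpha_\ell\,\cL^\rho$, recognise $\cL=\cT\cS^{-1}$ and $\cS^{-1}\cT=\cL^\sigma$ so that $(B^\alpha_\ell)_+=m^\alpha_\ell\big((\cT\cS^{-1})^\rho\big)_+$ and $(\cS^{-1}B^\alpha_\ell\cS)_+=m^\alpha_\ell\big((\cS^{-1}\cT)^\rho\big)_+$, and read off the constant $m^\alpha_\ell$ relating $B^\alpha_\ell$ to the pure power $\cL^\rho$ to arrive at \eqref{Krichever}; the case $\alpha=1$ fits the same scheme with $\rho=\ell+1$ and the extra factor $\tfrac{N+2}{2N}$ carried by $m^1_\ell$.

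The main difficulty I anticipate is bookkeeping rather than conceptual: one must keep the two shifts rigorously apart — the continuous dressing in the spatial variable $X$ and the discrete $x$-shift $\sigma$ built into the very definitions of $\cT$ and $\cS$ — and in particular verify that the projection onto $\pm$ powers of $\epsilon\partial_X$ commutes with $\sigma$ and that $\cS^{-1}B^\alpha_\ell\cS$ really equals the $\sigma$-shifted operator $(B^\alpha_\ell)^\sigma$ and not merely a conjugate of the same order. One should also take care that \eqref{8.40} is only available for $\alpha<N$, so the logarithmic times $q^N_\ell$ are genuinely excluded from \eqref{Krichever}, and that the exponents $N-\alpha+\ell(N-1)$ are literal integer powers of $L$, making each $B^\alpha_\ell$ a legitimate operator before any $\pm$-projection is taken.
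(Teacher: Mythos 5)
Your proposal is correct and follows the route the paper intends: the paper's own ``proof'' of Proposition~\ref{prop:KricheverRR} is just the one-line remark that it follows from the Sato--Wilson equations~\eqref{8.40}, and your two-frame dressing $\cT=P(\epsilon\partial_X)^N(P^\sigma)^{-1}$, $\cS=P(\epsilon\partial_X)(P^\sigma)^{-1}$ with the intertwining relations $B^\alpha_\ell\,\cT=\cT\,(B^\alpha_\ell)^\sigma$ and $\cS^{-1}B^\alpha_\ell\,\cS=(B^\alpha_\ell)^\sigma$ is exactly the computation being left implicit there. One caveat on bookkeeping: with $B^\alpha_\ell=m^\alpha_\ell\,\cL^{\rho}$, $\rho=\tfrac{N-\alpha+\ell(N-1)}{N-1}$, your argument yields $\epsilon\,\partial\cT/\partial q^\alpha_\ell=m^\alpha_\ell\bigl(((\cT\cS^{-1})^{\rho})_+\cT-\cT((\cS^{-1}\cT)^{\rho})_+\bigr)$, i.e.\ the prefactor $\epsilon/m^\alpha_\ell$ on the left-hand side rather than the $\epsilon\, m^\alpha_\ell$ printed in~\eqref{Krichever}; this is consistent with the Lax equations~\eqref{L8.40} and indicates that the constant in the published statement is inverted, so your normalization is the internally consistent one.
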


\bigskip

We now want to find the Lax representation for the ``logarithmic'' flows corresponding to the times $q_\ell^N$ for $\ell >0$. 
We differentiate~\eqref{Hirota5} with respect to $q^N_\ell$ and put $k=n=0$ and $q=\bar q$,  except for $\bar q^N_0=q^N_0  -\frac{2\epsilon}N$, obtaining:
\begin{equation}
\begin{aligned}
\epsilon \frac{\partial \tilde\cP ^{+1}(q,\epsilon\partial_X)}{\partial q_\ell^N}
&
\tilde\cP ^{+1}(q,\epsilon\partial_X)^{-1}=-
\Big( \frac{(\cL(q,\epsilon\partial_X)^\ell}{\ell !}\Big(\tilde\cP ^{+1}(q,\epsilon\partial_X)\epsilon \partial_x 
\tilde\cP ^{+1}(q,\epsilon\partial_X)^{-1} -\frac{N\ch(\ell)}{N-1 }  \Big)\Big)_- \\
&+\res_{\lambda } \frac{\lambda^{\ell-2}}{\ell !}\left(\tilde\cP^{+ 0}(x,X,q,\lambda)
 (\epsilon\partial_X)^{-1} \epsilon\partial_x
\tilde\cP^{- 0}(x-\frac{2\epsilon}N,X, q,\lambda) \right)d\lambda\\
=&
\Big( \frac{(\cL(q,\epsilon\partial_X)^\ell}{\ell !}\Big(\epsilon\frac{\partial \tilde\cP ^{+1}(q,\epsilon\partial_X) }{\partial x} \tilde\cP ^{+1}(q,\epsilon\partial_X)^{-1}  
+\frac{N\ch(\ell)}{N-1 }  
\Big)\Big)_-
+\\
&\qquad+\epsilon\res_{\lambda} \frac{\lambda^{\ell-2}}{\ell !}\left(\tilde\cP^{+ 0}(x,X,q,\lambda)
 (\epsilon\partial_X)^{-1}  
\frac{\partial \tilde\cP^{- 0}(x-\frac{2\epsilon}N,X, q,\lambda)}{\partial x} \right)d\lambda .
\end{aligned}
\end{equation}
Notice that here we have used~\eqref{Lminusseries} to obtain the second expression which does not contain the operator $\epsilon \partial_x$. 
Using again~\eqref{Lminusseries},
we find that the last term on the right-hand side is equal to 
\begin{equation}
\begin{aligned}
\label{8.42}
 \epsilon\res_{\lambda} \sum_{k=0}^{\ell-1} \frac{\lambda^{k-1}}{\ell !}
(\cL^{\ell-k})_- 
\tilde\cP^{- 0}( x-\frac{2\epsilon}N,X,q,\lambda)^{-1}\frac{\partial \tilde\cP^{- 0}(x-\frac{2\epsilon}N,X, q,\lambda)}{\partial x} d\lambda
\end{aligned}
\end{equation}
where the sum is finite since the terms with $k<0$ involve only powers of $\lambda$ strictly smaller than $-1$.

 If we define the operator $\cQ(q,e^{\epsilon \partial_x}, \epsilon\partial_X)$ as the dressing of $e^{\epsilon \partial_x}$, i.e. 
 \begin{equation}
 	\cQ = \tilde\cP ^{+1}(q,\epsilon\partial_X)e^{ \epsilon \partial_x }\tilde\cP ^{+1}(q,\epsilon\partial_X)^{-1},
 \end{equation} for which in particular $L = \cQ S$  and $L^N = \cQ T$, then it is quite natural to denote with $\log\cQ(q,\epsilon \partial_x, \epsilon\partial_X)$  the dressing of the operator $\epsilon \partial_x$:
\begin{equation}
\log\cQ=\tilde\cP ^{+1}\epsilon\partial_x (\tilde\cP ^{+1})^{-1}
= \epsilon\partial_x-
\epsilon\frac{\partial \tilde\cP ^{+1} }{\partial x}( \tilde\cP ^{+1}  )^{-1}
\end{equation}
which allows us to state the Sato-Wilson equations for the ``logarithmic'' flows as in the following
\begin{proposition}
\label{logprop}
The wave operator $\tilde\cP ^{+1}(q,\epsilon\partial_X)$ satisfies the Sato-Wilson equations  
\begin{equation}
\epsilon \frac{\partial \tilde\cP ^{+1}(q,\epsilon\partial_X)}{\partial q_\ell^N} \tilde\cP ^{+1}(q,\epsilon\partial_X)^{-1}= 
-(B_\ell^N)_-
\end{equation}
where
\begin{equation}
\label{Boper2}
B_\ell^N = 
\frac{\cL(q,\epsilon\partial_X)^\ell}{\ell !} \left( \log\cQ - \epsilon\partial_x   
-\frac{N\ch(\ell)}{N-1 }  -\tilde{B}_\ell \right)
\end{equation}
and the operator $\tilde{B}_\ell$ is defined by
\begin{equation} \label{tildeB}
\tilde{B}_{\ell} = \epsilon\sum_{k=0}^{\ell-1} \cL^{-k} \res_\lambda \lambda^k \tilde\cP^{- 0}( x-\frac{2\epsilon}N,X,q,\lambda)^{-1}\frac{\partial \tilde\cP^{- 0}(x-\frac{2\epsilon}N,X, q,\lambda)}{\partial x} \frac{d\lambda}{\lambda}.
\end{equation}
The Lax equations for the logarithmic flows are then simply given by
\begin{equation}
\label{8.47L}
\epsilon \frac{\partial  \cL(q,\epsilon\partial_X)}{\partial q_\ell^N}
 =\left[ -(B_\ell^N)_-, 
\cL(q,\epsilon\partial_X)\right].
\end{equation}
\end{proposition}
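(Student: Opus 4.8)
The plan is to read off the Sato--Wilson equation from the computation performed just above the statement and then to derive the Lax equation~\eqref{8.47L} by the standard conjugation argument. I would start from the already computed expression for $\epsilon\,(\partial\tilde\cP^{+1}/\partial q^N_\ell)\,(\tilde\cP^{+1})^{-1}$, obtained by differentiating~\eqref{Hirota5} at $n=k=0$ with $q=\bar q$ and $\bar q^N_0=q^N_0-\tfrac{2\epsilon}N$. The crucial structural observation is that the left-hand side is automatically of negative order in $\epsilon\partial_X$, since $\tilde\cP^{+1}=1+O((\epsilon\partial_X)^{-1})$; hence it suffices to match the right-hand side, term by term, with $-(B_\ell^N)_-$.

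First I would handle the residue term. Using~\eqref{Lminusseries} to substitute $\sum_{n\ge0}\lambda^{-n}(\cL^{n+1})_-$ for $\tilde\cP^{+0}(\epsilon\partial_X)^{-1}\tilde\cP^{-0}$, after factoring out the multiplication operator $\tilde\cP^{-0}(\dots)^{-1}\,(\partial\tilde\cP^{-0}(\dots)/\partial x)$, the residue collapses to the finite sum~\eqref{8.42}. Setting $M_k:=\res_\lambda \lambda^k\,\tilde\cP^{-0}(x-\tfrac{2\epsilon}N)^{-1}\,(\partial\tilde\cP^{-0}(x-\tfrac{2\epsilon}N)/\partial x)\,d\lambda/\lambda$ (a function, i.e.\ an operator of order $0$ in $\epsilon\partial_X$), the definition~\eqref{tildeB} reads $\tilde B_\ell=\epsilon\sum_{k=0}^{\ell-1}\cL^{-k}M_k$, so that the residue term is $\tfrac{\epsilon}{\ell!}\sum_{k=0}^{\ell-1}(\cL^{\ell-k})_-M_k$. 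Since each $M_k$ has order $0$, right multiplication by $M_k$ commutes with the projection $(\cdot)_-$, giving $(\cL^{\ell-k}M_k)_-=(\cL^{\ell-k})_-M_k$; therefore the residue term equals $\big(\tfrac{\cL^\ell}{\ell!}\tilde B_\ell\big)_-$.

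Next I would identify the remaining term. By definition $\log\cQ=\tilde\cP^{+1}\,\epsilon\partial_x\,(\tilde\cP^{+1})^{-1}=\epsilon\partial_x-\epsilon\,(\partial\tilde\cP^{+1}/\partial x)\,(\tilde\cP^{+1})^{-1}$, so $\log\cQ-\epsilon\partial_x$ is of negative order in $\epsilon\partial_X$; this is exactly why subtracting $\epsilon\partial_x$ inside~\eqref{Boper2} removes the ``bare'' shift direction and makes $B_\ell^N$ an operator in $\epsilon\partial_X$ alone. In the $\epsilon\partial_x$-free form of the computation the first term is $-\big(\tfrac{\cL^\ell}{\ell!}(\log\cQ-\epsilon\partial_x-\tfrac{N\ch(\ell)}{N-1})\big)_-$; adding the residue contribution $\big(\tfrac{\cL^\ell}{\ell!}\tilde B_\ell\big)_-$ assembles everything into $-\big(\tfrac{\cL^\ell}{\ell!}(\log\cQ-\epsilon\partial_x-\tfrac{N\ch(\ell)}{N-1}-\tilde B_\ell)\big)_-=-(B_\ell^N)_-$, which is the asserted Sato--Wilson equation.

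Finally, the Lax equation~\eqref{8.47L} follows by differentiating $\cL=\tilde\cP^{+1}(\epsilon\partial_X)^{N-1}(\tilde\cP^{+1})^{-1}$ with respect to $q^N_\ell$: since $(\epsilon\partial_X)^{N-1}$ is independent of $q^N_\ell$, the product rule gives $\epsilon\,\partial\cL/\partial q^N_\ell=[\epsilon\,(\partial\tilde\cP^{+1}/\partial q^N_\ell)\,(\tilde\cP^{+1})^{-1},\cL]=[-(B_\ell^N)_-,\cL]$. I expect the main difficulty to lie in the bookkeeping of the two independent directions $\epsilon\partial_x$ and $\epsilon\partial_X$: one must verify that the bare $\epsilon\partial_x$ carried by $\log\cQ$ cancels precisely against the one produced when~\eqref{Lminusseries} is applied to the residue (the passage to the $\epsilon\partial_x$-free form), and that the sum in~\eqref{tildeB} truncates at $k=\ell-1$ so that $\tilde B_\ell$, and hence $B_\ell^N$, is a genuine pseudo-differential operator in $\epsilon\partial_X$.
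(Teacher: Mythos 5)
Your proposal is correct and follows essentially the same route as the paper: the paper's own justification is precisely the displayed computation preceding the proposition (differentiating~\eqref{Hirota5} at $n=k=0$, using~\eqref{Lminusseries} twice to eliminate the bare $\epsilon\partial_x$ and to collapse the residue to the finite sum~\eqref{8.42}), and your term-by-term identification of the result with $-(B_\ell^N)_-$ — including the observation that right multiplication by the order-zero factors $M_k$ commutes with $(\cdot)_-$ — together with the standard dressing argument for~\eqref{8.47L}, matches it.
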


Notice that the operator $\tilde{B}_\ell$ can be written as 
\begin{equation}
\tilde{B}_\ell =   \epsilon\sum_{k=0}^{\ell-1} \cL^{-k} \frac{\partial \tilde{v}_k}{\partial x} 
\end{equation}
where the $\tilde{v}_k$ are the coefficients in the logarithm of $\tilde\cP^{- 0}( x-\frac{2\epsilon}N,X,q,\lambda)$, i.e.
\begin{equation}
\log \tilde\cP^{- 0}( x-\frac{2\epsilon}N,X,q,\lambda) = \sum_{k\geq 0} \tilde{v}_k  \lambda^{-k}.
\end{equation}

\begin{remark}
While $\log \cQ$ clearly commutes with $\cL$, since it is defined by dressing $\epsilon \partial_x$, we cannot claim that the operator $\tilde{B}_\ell$ commutes with $\cL$. 
\end{remark}

\begin{remark}
Notice that in the $N=2$ case discussed in~\cite{carletHigherGeneraCatalan2021} we were able to express $B_\ell^N$ in terms of a single logarithm operator $\log \cL$ without the more complicated part $\tilde{B}_\ell$ given in terms of residues in~\eqref{tildeB}. This somehow ad-hoc definition of $\log \cL$ relied on a certain symmetry of the $N=2$ Lax operator that allowed to express the coefficients $w_k$ with positive $k$ in terms of those with negative $k$. This symmetry is however not present in the general $N>2$ case. 
\end{remark}
\begin{remark}
\label{extraconj}
The importance of the Lax formulation of a hierarchy lies in the fact that it gives an explicit representation of its flows, namely it provides a way to write the time derivatives of the dependent variables appearing in the Lax operator $\cL$, the variables $a_1, \dots , a_N$ in this case, as differential polynomials in the same variables and their $X$-derivatives. In general this follows from the fact the the operators $B_\ell^\alpha$ defined in~\eqref{Boper1} and~\eqref{Boper2} are functions of $\cL$, namely that they commute with $\cL$ and that their coefficients are uniquely determined as $X$ differential polynomials in the variables $a_i$. While this is trivially true for the operators $B_\ell^\alpha$ for $1\leq \alpha <N$ and $\ell \geq0$, this cannot be easily proved for the operators $B_\ell^N$. Here we conjecture in particular that the operators $\log \cQ$ and $\tilde{B}_\ell$ satisfy such properties. 
\end{remark}

\begin{remark}
We can relate    $\frac{\partial \tilde{v}_k}{\partial x}$  and $\log\cQ$   
   to  the coefficients of the logarithm defined in~\S\ref{subsec:LaxDifference}:

 \begin{lemma}
Let $\tilde w_k=w_k|_{q_0^{N-1}\to q_0^{N-1}+X}$,  where $w_k$ is given by \eqref{wk},   then 
\begin{equation}
\label{vw}
 \frac{\partial \tilde{v}_k}{\partial x}  = -\frac 2\epsilon   \tilde w_k(x-(k+1)\epsilon)
\end{equation}
and
\begin{equation}
\label{lllog+}
\tilde \log\cQ = 2\sum_{k<0}\tilde w_k \Lambda^kS^k.
\end{equation}
%
\end{lemma}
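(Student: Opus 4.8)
The plan is to read off both identities from the residue formulas~\eqref{wk} for the coefficients $w_k$ of $\log L=\sum_k w_k\Lambda^k$ (cf.~\eqref{logLw}), by translating between the two incarnations of this logarithm: the difference--operator one, in which $\log L=\tfrac12\log_+L+\tfrac12\log_-L$ with $\log_+L=P^{+1}\epsilon\partial_x(P^{+1})^{-1}$ and $\log_-L=-P^{+0}\epsilon\partial_x(P^{+0})^{-1}$, and the pseudodifferential one, in which the same two dressings produce $\log\cQ=\tilde\cP^{+1}\epsilon\partial_x(\tilde\cP^{+1})^{-1}$ and $\log\tilde\cP^{-0}$. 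Accordingly~\eqref{vw} belongs to the ``$0$'' sector (the positive-index coefficients $w_k$, built from $\cP^{\pm0}$, i.e.\ the second line of~\eqref{wk}) and~\eqref{lllog+} to the ``$1$'' sector (the negative-index coefficients $w_{-k}$, built from $\cP^{\pm1}$, i.e.\ the first line of~\eqref{wk}). Throughout I apply the substitution $q_0^{N-1}\to q_0^{N-1}+X$ that turns $w_k,\cP^{\pm b}$ into $\tilde w_k,\tilde\cP^{\pm b}$, so every manipulation below is made after this substitution.

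For~\eqref{vw} I start from the second line of~\eqref{wk}. After the substitution and after the shift $x\mapsto x-(k+1)\epsilon$ the argument $x-\tfrac{2\epsilon}{N}+\epsilon(1+k)$ of $\tilde\cP^{-0}$ becomes exactly $x-\tfrac{2\epsilon}N$, so that
\[
2\tilde w_k\bigl(x-(k+1)\epsilon\bigr)=-\epsilon\res_{\lambda=\infty}\lambda^{k-1}\,\tilde\cP^{+0}\bigl(x-(k+1)\epsilon\bigr)\,\frac{\partial\tilde\cP^{-0}(x-\frac{2\epsilon}N)}{\partial x}\,d\lambda .
\]
Now the inverse relation~\eqref{Pinverse} with $b=0$ states that, as difference operators, $P^{+0}$ and the $x$--shifted $P^{-0}$ are mutually inverse; differentiating $P^{+0}P^{-0}=1$ in $x$ gives $P^{+0}(\partial_xP^{-0})=-(\partial_xP^{+0})P^{-0}$, and the residue identity $\res_\lambda\sigma_l(A)\sigma_r(B)\tfrac{d\lambda}{\lambda}=\res_\Lambda AB$ recorded in Section~\ref{subsec:LaxDifference} identifies the $x$--shift $x-(k+1)\epsilon$ precisely as the passage from the right symbol of $P^{-0}$ to the left symbol of $P^{+0}$ in the product whose $\Lambda^{-k}$--coefficient is $2w_k$. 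Hence the right-hand side above collapses to the coefficient of $\lambda^{-k}$ in the logarithmic derivative $\partial_x\log\tilde\cP^{-0}(x-\tfrac{2\epsilon}N)=\sum_{k\ge0}(\partial_x\tilde v_k)\lambda^{-k}$. Tracking the overall constant $-\epsilon$ and the sign of the residue at infinity then yields $2\tilde w_k(x-(k+1)\epsilon)=-\epsilon\,\partial_x\tilde v_k$, which is~\eqref{vw}.

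For~\eqref{lllog+} I use that $\log\cQ=\tilde\cP^{+1}\epsilon\partial_x(\tilde\cP^{+1})^{-1}$ is the pseudodifferential counterpart of $\log_+L$, built from the same $+1$ dressing: since $\cQ=\tilde\cP^{+1}\Lambda(\tilde\cP^{+1})^{-1}$ one has $\log\cQ=\tilde\cP^{+1}(\epsilon\partial_x)(\tilde\cP^{+1})^{-1}$, whose part of order $\le-1$ in $\epsilon\partial_X$ is what~\eqref{lllog+} describes. By the decomposition $\log L=\tfrac12\log_+L+\tfrac12\log_-L$ the negative-power coefficients of $\log_+L$ are $[\log_+L]_k=2w_k$ for $k<0$, given by the first line of~\eqref{wk}. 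It remains to see that the correct homogeneous basis in the mixed picture is $\Lambda^kS^k$ rather than $\Lambda^k$: from~\eqref{S} and $L=\cQ S=S\cQ$ (so that $\cQ$ and $S$ commute) the operators $\Lambda^kS^k$ with $k<0$ are the pure pseudodifferential operators of order $k$ in $\epsilon\partial_X$ adapted to the dressing $\tilde\cP^{+1}$, and the difference-operator monomial $\Lambda^k$ of $\log_+L$ is replaced by $\Lambda^kS^k$ under the correspondence between the two representations. Matching order by order in $\epsilon\partial_X$, after isolating the leading term $\epsilon\partial_x$, and inserting $[\log_+L]_k=2\tilde w_k$ gives $\tilde\log\cQ=2\sum_{k<0}\tilde w_k\Lambda^kS^k$.

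The main obstacle is the bookkeeping gluing the two representations: the difference operators live in the shift variable $x$ (powers of $\Lambda=e^{\epsilon\partial_x}$) while $\cL$ and $\log\cQ$ live in the pseudodifferential variable $X$ (powers of $\epsilon\partial_X$), and the two are tied together only through the mixed operators $S,T,\cQ$ of~\eqref{S}. Arranging the finitely many $x$--shifts (the arguments $x-(k+1)\epsilon$ and $x-\tfrac{2\epsilon}N$), the residue signs, and the factors of $2$ coming from $\log L=\tfrac12(\log_+L+\log_-L)$ all to line up is the delicate part; the underlying algebra, by contrast, is just the Sato--Wilson relation between a wave operator, its logarithmic $x$--derivative, and the coefficients of $\log L$, already exploited via~\eqref{Pinverse} and~\eqref{Lminusseries}.
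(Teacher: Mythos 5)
Your proposal is correct and follows essentially the same route as the paper: for \eqref{vw} you invoke \eqref{Pinverse} to identify $\cP^{+0}\,\partial_x\cP^{-0}$ (with the shift $x\mapsto x-(k+1)\epsilon$ absorbing the argument $x-\frac{2\epsilon}{N}+\epsilon(1+k)$) with the logarithmic derivative $\partial_x\log\tilde\cP^{-0}(x-\frac{2\epsilon}{N})=\sum_k(\partial_x\tilde v_k)\lambda^{-k}$, which is exactly what the paper's double-residue computation makes explicit; and for \eqref{lllog+} you use $\log_+L=\epsilon\partial_x+2\sum_{k<0}w_k\Lambda^k$ together with the $\Lambda^k\mapsto\Lambda^kS^k$ dressing correspondence, which is the paper's derivation of \eqref{llog+}. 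The only difference is presentational: the paper writes out the inverse of $\tilde\cP^{-0}$ as an explicit residue in an auxiliary variable, whereas you appeal to the symbol calculus more compactly.
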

\begin{proof}
We first   prove \eqref{vw}. 
For simplicity we will identify $\Lambda=e^{\epsilon \partial _x}$.
Using \eqref{Pinverse},  we have that $P^{+0}(\Lambda)P^{-0}(x-\frac{2\epsilon}N+\epsilon, \Lambda)=1$.  Now writing $P^{+0}(\Lambda)=\sum_{i=0}^\infty P^{+0}_j\Lambda^j$, and
$P^{-0}(x-\frac{2\epsilon}N+\epsilon, \Lambda)=\sum_{j=1}^\infty \Lambda^jP^{-0}_j(x)$, this means that for $k=0,1,\ldots$,  we have
$
\sum_{j=0}^k P^{+0}_j \Lambda^k P^{-0}_{k-j}(x)=\delta_{k0}
$,  or stated differently,
\[
\sum_{j=0}^k \Lambda^{-k-1}P^{+0}_j \Lambda^{k+1}P^{-0}_{k-j}(x-\epsilon)\lambda^{-k}=
\delta_{k0}.
\]
Thus 
\[
\begin{aligned}
\res_{\mu } \sum_{k=0}^{\infty }  
\mu^{k-1}\Lambda^{-k-1} 
\sum_{j=0}^k  P^{+0}_j \Lambda^{k+1}P^{-0}_{k-j}(x-\epsilon)\mu^{-k}
 d\mu \lambda^{-k}
&=\\
=\res_{\mu } \sum_{k=0}^{\infty }  
\mu^{k-1}\Lambda^{-k-1} 
\cP^{+0}(\mu)      \Lambda^{k+1}    \cP^{-0}(x-\frac{2\epsilon}N , \mu)d\mu
\lambda^{-k}
&=1
\end{aligned}
\]
Now replacing $q_0^{N-1}$ by $q_0^{N-1}+X$,  we get the same formula but now for 
$\tilde \cP^{\pm0}$ instead of $\cP^{\pm0}$.  Thus we now know how to calculate the action of the  inverse of $\tilde\cP^{- 0}( x-\frac{2\epsilon}N, \lambda)$ on 
$\frac{\partial \tilde\cP^{- 0}(x-\frac{2\epsilon}N, \lambda)}{\partial x}$, viz. as above:
\[
\tilde\cP^{- 0}( x-\frac{2\epsilon}N, \lambda)^{-1}\frac{\partial \tilde\cP^{- 0}(x-\frac{2\epsilon}N, \lambda)}{\partial x}=
\res_{\mu  } \sum_{k=0}^{\infty }  
\mu^{k-1}\Lambda^{-k-1} 
\cP^{+0}(\mu)      \Lambda^{k+1}   \frac{\partial  \cP^{-0}(x-\frac{2\epsilon}N , \mu)}{\partial x}d\mu
\lambda^{-k}
\]
Finally  using the second formula of 
\eqref{wk}, where we also replace $q_0^{N-1}$ by $q_0^{N-1}+X$, we obtain the desired result.

We will proof formula \eqref{lllog+} below,   see  \eqref{llog+}.
\end{proof}

%

Recall that the  variable $X$ was introduced by the shift $q_0^{N-1} \mapsto q_0^{N-1}+ X$.  The Lax equation for the time $q_0^{N-1}$ reads:

\begin{equation}
\epsilon\frac{\partial L}{\partial X}=
\epsilon\frac{\partial L}{\partial q^{N-1}_0}=[(L^{\frac1{N-1}})_+,L] =[\Lambda +f,L],
\end{equation}
with $f=(1-\Lambda) (1-\Lambda^{N-1})^{-1}v_{N-2}$, which is explicitly given by 
\begin{equation}
\epsilon(v_i)_X =(\Lambda-1)  (v_{i-1} )
+v_i  (1-\Lambda^i) (1-\Lambda^{N-1})^{-1}(1-\Lambda)(v_{N-2})
\end{equation}
for $-1 \leq i \leq N-2$, where we noted $v_{-1}=e^u$ and $v_{N-1}=1$. 

In principle we can invert these formulas as in the extended NLS case, see~\cite{Carlet2004Extended, carletHigherGeneraCatalan2021}, expressing the $x$ derivatives of the dependent variables $v_i$ as differential polynomials in $X$. Substituting in the equations~\eqref{lax-1} we obtain in a different way the   extended constrained KP hierarchy.

We want to replace the shift operator $\Lambda$ by $\epsilon\partial_X$, i.e.
 we want to associate to a dressing operator  $P^{+1}(\Lambda)= \sum_{i=0}^\infty p_i\Lambda^{-i}$ the following pseudo-differential operator
\[
\tilde\cP(\epsilon\partial_X)=\sum_{i=0}^\infty p_i (\epsilon\partial_X)^{-i}.
\]
Recall the definition of $S$ from \eqref{S}:
\[
S= \tilde\cP^{+1} (\epsilon\partial_X)\Lambda^{-1}(\tilde\cP^{+1})^{-1},
\]

The equation $LP^{+1}= P^{+1}\Lambda^{N-1}$  can be rewritten as
\[
\begin{aligned}
&\tilde\cP^{+1}(\epsilon\partial_X)^{N-1}
  =\\
&\tilde\cP^{+1}(x+(N-1)\epsilon)(\epsilon\partial_X)^{N-1} +v_{N-2}\tilde\cP^{+1}(x+(N-2)\epsilon) (\epsilon\partial_X)^{N-2} \cdots    + e^u  \tilde \cP^{+1}(x-\epsilon) (\epsilon\partial_X)^{-1}
\end{aligned}
\]
Now multiply with $(\tilde\cP^{+1})^{-1} $   from the right,  this gives 
\[
\begin{aligned}
&\tilde\cL= \Lambda^{N-1}S^{N-1} +v_{N-2}\Lambda^{N-2}S ^{N-2} \cdots    + e^u   \Lambda^{-1}S^{-1}.
\end{aligned}
\]
Similarly, the Sato-Wilson equation  ($j=1,\ldots, N$,  $n\ge  \delta_{j,N}$)
\[
\frac{\partial P^{+1}}{\partial q_n^j}=-(B_n^j)_- P^{+1}= -\sum_{k<0}a_{\ell,k}^j \Lambda^{k} P^{+1},
\]
turns into
\[
\frac{\partial \tilde\cP^{+1}}{\partial q_n^j}
(\tilde\cP^{+1})^{-1}= -\sum_{k<0}a_{\ell,k}^j   \tilde\cP^{+1}(x+k\epsilon)(\epsilon\partial_X)^{k}(\tilde\cP^{+1})^{-1}
=-\sum_{k<0}a_{\ell,k}^j   \Lambda^k\cS^{k}.
\]
Next we consider the equation 
\[
 P^{+1}\epsilon\partial_x=  \log_+ L  P^{+1} =(\epsilon\partial_x+2\sum_{k<0}w_k\Lambda^k )P^{+1},
\]
which gives that  $\log \cQ=\tilde\cP^{+1}\epsilon\partial_x(\tilde\cP^{+1})^{-1}$ is equal to 
\begin{equation}
\label{llog+}
\log\cQ= \epsilon\partial_x +2\sum_{k<0}w_k \Lambda^kS^k
\end{equation}
\end{remark}

\bibliographystyle{amsalpha} 
\bibliography{hypermaps}

\newcommand{\etalchar}[1]{$^{#1}$}
\providecommand{\bysame}{\leavevmode\hbox to3em{\hrulefill}\thinspace}
\providecommand{\MR}{\relax\ifhmode\unskip\space\fi MR }
\providecommand{\MRhref}[2]{%
  \href{http://www.ams.org/mathscinet-getitem?mr=#1}{#2}
}
\providecommand{\href}[2]{#2}
\begin{thebibliography}{DBNO{\etalchar{+}}19}

\bibitem[ACEH20]{ACEH20}
A.~Alexandrov, G.~Chapuy, B.~Eynard, and J.~Harnad, \emph{Weighted {H}urwitz
  numbers and topological recursion}, Comm. Math. Phys. \textbf{375} (2020),
  no.~1, 237--305. \MR{4082183}

\bibitem[ALMM21]{Arsie2021DubrovinFrobenius}
Alessandro Arsie, Paolo Lorenzoni, Igor Mencattini, and Guglielmo Moroni,
  \emph{A {{Dubrovin-Frobenius}} manifold structure of {{NLS}} type on the
  orbit space of \${{B}}\_n\$}, arXiv:2111.03964 [math-ph] (2021).

\bibitem[ALS16]{ALS}
A.~Alexandrov, D.~Lewanski, and S.~Shadrin, \emph{Ramifications of {H}urwitz
  theory, {KP} integrability and quantum curves}, J. High Energy Phys. (2016),
  no.~5, 124, front matter+30. \MR{3521843}

\bibitem[BDKS20]{BDKS20}
Boris {Bychkov}, Petr {Dunin-Barkowski}, Maxim {Kazarian}, and Sergey
  {Shadrin}, \emph{{Topological recursion for Kadomtsev-Petviashvili tau
  functions of hypergeometric type}}, arXiv:2012.14723 [math-ph] (2020).

\bibitem[BPS12]{BDS2012}
Alexandr Buryak, Hessel Posthuma, and Sergey Shadrin, \emph{A polynomial
  bracket for the {D}ubrovin-{Z}hang hierarchies}, J. Differential Geom.
  \textbf{92} (2012), no.~1, 153--185. \MR{2998900}

\bibitem[Car06]{Carlet}
Guido Carlet, \emph{The extended bigraded {T}oda hierarchy}, J. Phys. A
  \textbf{39} (2006), no.~30, 9411--9435. \MR{2246697}

\bibitem[CDZ04]{Carlet2004Extended}
Guido Carlet, Boris Dubrovin, and Youjin Zhang, \emph{The extended {T}oda
  hierarchy}, Mosc. Math. J. \textbf{4} (2004), no.~2, 313--332, 534.
  \MR{2108440}

\bibitem[CvdL13]{CvdL2013}
Guido Carlet and Johan van~de Leur, \emph{Hirota equations for the extended
  bigraded {T}oda hierarchy and the total descendent potential of {$\mathbb
  CP^1$} orbifolds}, J. Phys. A \textbf{46} (2013), no.~40, 405205, 16.
  \MR{3108998}

\bibitem[CvdLPS14]{CLPS-Lax}
G.~Carlet, J.~van~de Leur, H.~Posthuma, and S.~Shadrin, \emph{Towards {L}ax
  formulation of integrable hierarchies of topological type}, Comm. Math. Phys.
  \textbf{326} (2014), no.~3, 815--849. \MR{3173407}

\bibitem[CvdLPS21]{carletHigherGeneraCatalan2021}
\bysame, \emph{Higher genera {C}atalan numbers and {H}irota equations for
  extended nonlinear {S}chr\"{o}dinger hierarchy}, Lett. Math. Phys.
  \textbf{111} (2021), no.~3, Paper No. 63, 67. \MR{4254071}

\bibitem[DBNO{\etalchar{+}}18]{DNOPS-Hurwitz}
P.~Dunin-Barkowski, P.~Norbury, N.~Orantin, A.~Popolitov, and S.~Shadrin,
  \emph{Primary invariants of {H}urwitz {F}robenius manifolds}, Topological
  recursion and its influence in analysis, geometry, and topology, Proc.
  Sympos. Pure Math., vol. 100, Amer. Math. Soc., Providence, RI, 2018,
  pp.~297--331. \MR{3840137}

\bibitem[DBNO{\etalchar{+}}19]{db19}
\bysame, \emph{Dubrovin's superpotential as a global spectral curve}, J. Inst.
  Math. Jussieu \textbf{18} (2019), no.~3, 449--497. \MR{3936638}

\bibitem[DBOPS18]{DOPS18}
Petr Dunin-Barkowski, Nicolas Orantin, Aleksandr Popolitov, and Sergey Shadrin,
  \emph{Combinatorics of loop equations for branched covers of sphere}, Int.
  Math. Res. Not. IMRN (2018), no.~18, 5638--5662. \MR{3862116}

\bibitem[DBOSS14]{DOSS}
P.~Dunin-Barkowski, N.~Orantin, S.~Shadrin, and L.~Spitz, \emph{Identification
  of the {G}ivental formula with the spectral curve topological recursion
  procedure}, Comm. Math. Phys. \textbf{328} (2014), no.~2, 669--700.
  \MR{3199996}

\bibitem[Dub96]{dub96}
Boris Dubrovin, \emph{Geometry of {$2$}{D} topological field theories},
  Integrable systems and quantum groups ({M}ontecatini {T}erme, 1993), Lecture
  Notes in Math., vol. 1620, Springer, Berlin, 1996, pp.~120--348. \MR{1397274}

\bibitem[Dub99]{dub-Painleve}
\bysame, \emph{Painlev\'{e} transcendents in two-dimensional topological field
  theory}, The {P}ainlev\'{e} property, CRM Ser. Math. Phys., Springer, New
  York, 1999, pp.~287--412. \MR{1713580}

\bibitem[DZ01]{DZ01}
Boris {Dubrovin} and Youjin {Zhang}, \emph{{Normal forms of hierarchies of
  integrable PDEs, Frobenius manifolds and Gromov - Witten invariants}},
  arXiv:math/0108160 [math.DG] (2001), The 2005 version of this preprint.

\bibitem[Eyn16]{Eynard-Book}
Bertrand Eynard, \emph{Counting surfaces}, Progress in Mathematical Physics,
  vol.~70, Birkh\"{a}user/Springer, [Cham], 2016, CRM Aisenstadt chair
  lectures. \MR{3468847}

\bibitem[Giv01a]{Giv2001}
Alexander~B. Givental, \emph{Gromov-{W}itten invariants and quantization of
  quadratic {H}amiltonians}, Mosc. Math. J. \textbf{1} (2001), no.~4, 551--568,
  645, Dedicated to the memory of I. G. Petrovskii on the occasion of his 100th
  anniversary. \MR{1901075}

\bibitem[Giv01b]{Giv-Semisimple}
\bysame, \emph{Semisimple {F}robenius structures at higher genus}, Internat.
  Math. Res. Notices (2001), no.~23, 1265--1286. \MR{1866444}

\bibitem[Giv03]{Giv-An}
Alexander Givental, \emph{{$A_{n-1}$} singularities and {$n$}{K}d{V}
  hierarchies}, Mosc. Math. J. \textbf{3} (2003), no.~2, 475--505, 743,
  Dedicated to Vladimir I. Arnold on the occasion of his 65th birthday.
  \MR{2025270}

\bibitem[GJ08]{GouldenJackson}
I.~P. Goulden and D.~M. Jackson, \emph{The {KP} hierarchy, branched covers, and
  triangulations}, Adv. Math. \textbf{219} (2008), no.~3, 932--951.
  \MR{2442057}

\bibitem[GM05]{GivMil}
Alexander~B. Givental and Todor~E. Milanov, \emph{Simple singularities and
  integrable hierarchies}, The breadth of symplectic and {P}oisson geometry,
  Progr. Math., vol. 232, Birkh\"{a}user Boston, Boston, MA, 2005,
  pp.~173--201. \MR{2103007}

\bibitem[HO15]{HarnadOrlov}
J.~Harnad and A.~Yu. Orlov, \emph{Hypergeometric {$\tau$}-functions, {H}urwitz
  numbers and enumeration of paths}, Comm. Math. Phys. \textbf{338} (2015),
  no.~1, 267--284. \MR{3345377}

\bibitem[HvdL98]{HvdL}
G.~F. Helminck and J.~W. van~de Leur, \emph{An analytic description of the
  vector constrained {KP} hierarchy}, Comm. Math. Phys. \textbf{193} (1998),
  no.~3, 627--641. \MR{1624847}

\bibitem[Kri95]{Kr1995}
Igor~Moiseevich Krichever, \emph{General rational reductions of the
  {K}adomtsev-{P}etviashvili hierarchy and their symmetries}, Funktsional.
  Anal. i Prilozhen. \textbf{29} (1995), no.~2, 1--8, 95. \MR{1340299}

\bibitem[LZZ15]{liuCentralInvariantsConstrained2015}
Si-Qi Liu, Youjin Zhang, and Xu~Zhou, \emph{Central invariants of the
  constrained {KP} hierarchies}, J. Geom. Phys. \textbf{97} (2015), 177--189.
  \MR{3385125}

\bibitem[Mil07]{Mil2007}
Todor~E. Milanov, \emph{Hirota quadratic equations for the extended {T}oda
  hierarchy}, Duke Math. J. \textbf{138} (2007), no.~1, 161--178. \MR{2309158}

\bibitem[MT08]{MilTse}
Todor~E. Milanov and Hsian-Hua Tseng, \emph{The spaces of {L}aurent
  polynomials, {G}romov-{W}itten theory of {$\mathbb P^1$}-orbifolds, and
  integrable hierarchies}, J. Reine Angew. Math. \textbf{622} (2008), 189--235.
  \MR{2433616}

\end{thebibliography}

\end{document}